\DeclareRobustCommand\widecheck[1]{{\mathpalette\@widecheck{#1}}}
\def\@widecheck#1#2{%
    \setbox\z@\hbox{\m@th$#1#2$}%
    \setbox\tw@\hbox{\m@th$#1%
       \widehat{%
          \vrule\@width\z@\@height\ht\z@
          \vrule\@height\z@\@width\wd\z@}$}%
    \dp\tw@-\ht\z@
    \@tempdima\ht\z@ \advance\@tempdima2\ht\tw@ \divide\@tempdima\thr@@
    \setbox\tw@\hbox{%
       \raise\@tempdima\hbox{\scalebox{1}[-1]{\lower\@tempdima\box
\tw@}}}%
    {\ooalign{\box\tw@ \cr \box\z@}}}
\DeclareMathOperator{\re}{Re}
\renewcommand{\Re}{\re}
\DeclareMathOperator{\im}{Im}
\renewcommand{\Im}{\im}
\newcommand{\bP}{\overline{P}} 
\newcommand{\inner}[3][]{#1\langle #2,#3 #1\rangle} % Inner products MORTEN
\newcommand{\tR}{{\widetilde{R}}}
\newcommand{\tC}{\widetilde{C}}
\newcommand{\hV}{\widehat{V}}
\newcommand{\wcV}{\widecheck{V}}
\newcommand{\mdsr}{\mathds{R}}
\newcommand{\mdsc}{\mathds{C}}
\newcommand{\inv}{^{-1}}
\newcommand{\ci}{\mathrm{i}}
\newcommand{\dom}{\mathcal{D}}
\newcommand{\Imp}[1]{\mathrm{Im}(#1)}
\theoremstyle{theorem}
\newtheorem{thm}{Theorem}[section]
\newtheorem{lem}[thm]{Lemma}
\newtheorem{prp}[thm]{Proposition}
\theoremstyle{definition}
\newtheorem{rem}[thm]{Remark}
\newtheorem{rems}[thm]{Remarks}
\newtheorem{defi}[thm]{Definition}
\newtheorem{nota}[thm]{Notation}
\newtheorem{ex}[thm]{Example}
\newtheorem{cond}[thm]{Condition}
\DeclareMathOperator{\Rank}{Rank}
\DeclareMathOperator{\Ran}{Range}
\DeclareMathOperator{\ad}{ad}
\providecommand{\inv}{^{-1}}
\newcommand{\R}{\mathds{R}}
\newcommand{\C}{\mathds{C}}
\newcommand{\N}{\mathds{N}}
\newcommand{\cH}{\mathcal{H}}
\newcommand{\cR}{\mathcal{R}}
\newcommand{\one}{\mathbf{1}}
\begin{document}

\bibliographystyle{amsplain}

\title{Local Spectral Deformation}%? Generalized Dilation Analyticity}

\author{Matthias Engelmann\footnote{Partially supported by the Lundbeck Foundation and the DFG via Graduiertenkolleg 1838}}

\affil{IADM\\ University of Stuttgart\\ Germany} 

\author{Jacob Schach M\o ller}

\affil{Department of Mathematics\\ Aarhus University\\ Denmark} 

% matthias@math.au.dk

\author{Morten Grud Rasmussen}

\affil{Department of Mathematical Sciences\\ Aalborg University\\ Denmark } % email: jacob@math.au.dk

\maketitle

--------------------------------------------------------------------------------------------------------------- 

\textbf{Abstract} We develop an analytic perturbation theory for eigenvalues with finite multiplicities, 
embedded into the essential spectrum of a self-adjoint operator $H$. We assume the existence of another self-adjoint operator
$A$ for which the family $H_\theta = e^{\ci\theta A} H e^{-\ci\theta A}$ extends analytically from the real line to a strip in the complex plane.
Assuming a Mourre estimate holds for $\ci[H,A]$ in the vicinity of the eigenvalue, we prove that the essential spectrum is locally deformed away from the eigenvalue, leaving it isolated and thus permitting an application of Kato's analytic perturbation theory.

---------------------------------------------------------------------------------------------------------------

\section{Introduction}

The investigation of the essential spectrum of a self-adjoint operator via spectral deformation techniques goes back to two papers by Aguilar--Combes and Balslev--Combes, see \cite{AgCo} and \cite{BaCo}. The starting point of the whole theory is the behavior of the Laplace operator under dilations. We define the unitary group of dilations on $\mathrm{L}^2(\R^d)$ by
\begin{equation*}
U(\theta)\psi(x) = \mathrm{e}^{\frac{d}{2}\theta} \psi(\mathrm{e}^{\theta}x) \quad \textup{for } \theta \in\R.
\end{equation*}
Under conjugation with $U(\theta)$ the Laplace operator transforms into
\begin{equation*}
U(\theta)\Delta U(\theta)\inv = \mathrm{e}^{-2\theta}\Delta.
\end{equation*}
Thus, the spectrum of $\mathrm{e}^{-2\theta}\Delta$ is a half-line starting at $0$ which has an angle of $-2\Imp{\theta}$ to the real line. 
%The situation can be interpreted in the following way: by conjugating $\Delta$ with $U(\theta)$ the spectrum of the transformed operator swings out into the complex plane with an angle dependent on size and sign of the imaginary part of $\theta$. 
The observation by Aguilar and Combes was that for certain one-body potentials $V$, the essential spectrum of the Schr\"odinger operator $H=-\Delta + V $ exhibits the same behavior, when conjugated with $U(\theta)$. 

This idea is generalized by Balslev and Combes to the situation of many-body Schr\"o\-ding\-er operators. After dilation, the essential spectrum consists of multiple half-lines, one starting at each threshold (eigenvalue of a subsystem Hamiltonian) protruding into the complex plane at a common angle $2\im \theta$.
Any non-threshold embedded eigenvalue will remain on the real axis, as an isolated eigenvalue of finite rank for which Kato's analytic perturbation theory applies \cite{K}.

 The class of (pair-)potentials for which this strategy works are called \emph{dilation analytic}. The theory of dilation analytic potentials and its application to quantum mechanics is summed up in \cite{RS4}. The method has been refined
to include potentials that may be locally singular using so-called exterior complex scaling, which is needed to treat e.g. Born-Oppenheimer molecules \cite{S_ext_scal}.

In the paper \cite{HS}, Hunziker and Sigal considered an abstract setup, where the unitary group $U(\theta)$
is, in principle, arbitrary and allowing for an analytic extension of $H_\theta = U(\theta) H U(\theta)^*$ into a strip around the real axis. Supposing that the continuous spectrum is locally deformed down into the lower half-plane, when $\im \theta>0$, leaving behind 
only isolated eigenvalues with finite rank Riesz projections, Hunziker and Sigal show that there is a one-one correspondence between embedded eigenvalues of $H$  and real eigenvalues of the deformed Hamiltonian $H_\theta$, in the region where the essential spectrum has been cleared away. This in turn permits an application of Kato's analytic perturbation theory for isolated eigenvalues of finite multiplicity, thus enabling an analytic perturbation theory of embedded eigenvalues as well as an analysis of resonances (poles of the resolvent are complex eigenvalues of the deformed Hamiltonian). 

In the present paper, we provide a natural set of  abstract conditions on a pair of self-adjoint operators $H$ and $A$ that ensures a local -- in energy -- deformation of the essential spectrum of $H_\theta$, leaving embedded eigenvalues isolated behind. Here
$A$ drives the unitary group $U(\theta) = \mathrm{e}^{\ci \theta A}$. Together with the results of \cite{HS}, this allows for an analytic perturbation theory of `non-threshold' embedded eigenvalues. In fact, exterior complex scaling may be viewed as an example of our general result. We note that there are refinements of exterior scaling that does not fit into our framework,  where $U(\theta)$ is not a group \cite{Hu_dist}.

To elucidate the role of the Mourre estimate in the theory of analytic deformations,
it is useful to expand $H_\theta$ as a formal power series:
\begin{equation}\label{eq:FormalSeries}
H_\theta = \mathrm{e}^{\ci\theta A}H\mathrm{e}^{-\ci\theta A}=
H - \theta \ci [H,A] + \frac{\theta^2}{2!} \ci^2[[H,A],A] - \cdots
\end{equation}
In fact, we shall in Subsect.~\ref{Subsec-GenDil} make sense out of this series strongly on $D(H)$.

Suppose $\lambda_0\in\R$ is an (embedded) eigenvalue of $H$.
Based on the expansion \eqref{eq:FormalSeries}, it is reasonable to expect that a Mourre estimate
\begin{equation}\label{IntroME}
\ci[H, A] \geq e -CE(|H-\lambda_0|\geq \kappa)\langle H\rangle - K
\end{equation}
will force the essential spectrum of $H_\theta$ with $\im \theta>0$ 
down into the lower half-plane, at least near $\lambda_0$. Here, as usual, $e,\kappa,C>0$ and $K$ is a compact operator.
The use of commutator estimates of this form goes back to Mourre \cite{Mo}.
 The compact error in the Mourre estimate leaves room for finitely many eigenvalues to stay behind. In fact, $\lambda_0$ will stay behind, but resonances -- eigenvalues with negative imaginary part --  may appear as well. Exploiting the Mourre estimate in conjunction with the series \eqref{eq:FormalSeries} is not new, cf. e.g. \cite{BGS,JMP}.

The main difficulty in establishing the spectral picture discussed in the preceding paragraph, comes from the fact that $H_\theta$ is not (in general) normal when $\im\theta \neq 0$. In Subsect.~\ref{Subsect-ME}, we assume a Mourre estimate and perform a Feshbach analysis to study the structure of the essential spectrum of $H_\theta$.  This puts us in a position to invoke \cite{HS}. In Subsect~\ref{Subsec-AnPert}, we employ Kato's analytic perturbation theory, to conclude a theorem on analytic dependence on parameters of embedded eigenvalues of $H$. 

The main result of this paper, Theorem~\ref{thm_ess_spec},  may be summed up succinctly as follows: Let $H,A$ be a pair 
of self-adjoint operators. Put $H_\theta = \mathrm{e}^{\ci\theta A} H  \mathrm{e}^{-\ci\theta A}$ for $\theta\in \R$, and assume
\begin{itemize}
\item $\forall \psi\in D(H)$, the map $\R\ni \theta\to H_\theta\psi$ (is well-defined and) extends to an analytic function in a strip around the real axis.
\item A Mourre estimate is satisfied for the pair $H,A$ in the vicinity, energetically, of an eigenvalue $\lambda_0$ of $H$.
\end{itemize}
Then, for $\theta$ with $\im \theta>0$ not too large, we have
\[
\sigma_\mathrm{ess}(H_\theta) \cap \bigl\{ z\in \C  \,  \big|\, \im z > - e'\im\theta/2, \ |\re z-\lambda_0| < \kappa' \bigr\} = \emptyset.
\]
Here $0<e'<e$ and $0<\kappa'<\kappa$ plays a role similar to $e$ and $\kappa$ in \eqref{IntroME}. Apart from the two main conditions itemized above, we have to impose some technical conditions on the pair $H$, $A$ in order for the analysis to go through. 

Combining our main result, Theorem~\ref{thm_ess_spec}, with Hunziker-Sigal \cite{HS} and Kato \cite{K},
yields an analytic perturbation theory for embedded `non-threshold' eigenvalues, summed up in Theorems~\ref{thm_kato}
and~\ref{thm_semianalytic}.

In Section~\ref{sec_ex}, we apply our analysis to two-body dispersive systems with real 
analytic one-body dispersion relations and a `dilation analytic' pair interaction. Such a system is translation invariant, and we study the analytic dependence of possible embedded non-threshold 
eigenvalues on total momentum.

The underlying motivation for this work in fact stems from three-body scattering for dispersive systems. 
While there are several unresolved issues surrounding scattering theory for three-body dispersive systems, one of them arises when dealing with scattering channels consisting of one incoming/outgoing free particle and one incoming/outgoing bound two-particle cluster. The free dynamics of the two-particle cluster is governed by an effective dispersion relation, which is in fact an eigenvalue of the two-body subsystem as a function of the total momentum of the two-particle cluster. If one cannot rule out the existence of embedded eigenvalues, then knowing that such effective dispersion relations are real analytic would allow one to argue that the associated threshold energies are nowhere dense. We can only say something about non-threshold energies, but that should in principle suffice, since the threshold set of two-body systems is well understood.

To conclude this introduction we discuss two examples. A trivial case of an operator which admits a band of embedded eigenvalues depending real-analytically on a parameter is provided in the following example. 
%It should be noted that we are able to treat this particular case with our methods, since it fits into a class of abstract examples treated in Section \ref{sec_ex}.

\begin{ex}[\!\!\!\cite{DerSasSand}] Let $H_0 =\Delta^2$ as an operator on $H^4(\R^d)$.
	Let $f\in C_\mathrm{0}^\infty(\R^d)$ be nonnegative, $f\geq 0$.
	
	Then, for any $\xi>0$  and since $(-\Delta + \xi)^{-1}$ is positivity improving,
	we have $u(\xi) = (-\Delta + \xi)^{-1}f$ is Schwartz class and strictly positive everywhere.
	
	Put 
	\begin{equation*}
	V(\xi) = -\frac1{u(\xi)} (-\Delta - \xi)f \in C_\mathrm{0}^\infty(\R^d).
	\end{equation*}
	Then
	\begin{equation*}
	V(\xi) u(\xi)  =  -(-\Delta - \xi)f = -(-\Delta - \xi)(-\Delta + \xi)u(\xi) = -\Delta^2 u(\xi) + \xi^2 u(\xi).
	\end{equation*}
	Hence $(H_0 + V(\xi))u(\xi) = \xi^2 u(\xi)$ and consequently, $H(\xi) = H_0 +V(\xi)$ has an embedded eigenvalue at the energy $E=\xi^2$.

	One can choose to read $V(\xi)$ as a function of $\xi>0$. The associated family of operators $H(\xi) = \Delta^2 + V(\xi)$ will now have a persistent (real analytic) 
	band of embedded eigenvalues $E(\xi) = \xi^2$. 
	%	Furthermore, $\xi \to H(\xi)$ is an analytic family of Type (A), in the sense of Kato \cite{Ka}.
\end{ex}

Since the strategy of the paper is to transform the Hamiltonian into a non-self-adjoint operator with receding essential spectrum in the area of interest, the question whether or not our assumptions are too strong arises. In particular, one could be tempted to hope that the minimal requirements of Kato's theory are sufficient. This however is not the case, since the following example illustrates that one cannot expect the usual conclusions of Kato to hold true, when one considers the behavior of embedded eigenvalues of self-adjoint operators under analytic perturbations. We recall from Kato \cite{K} that for one-parameter holomorphic families of self-adjoint operators, isolated eigenvalues of finite multiplicity may split up while locally preserving total multiplicity and forming real analytic branches that (suitably ordered) are real analytic through crossings. For non-normal holomorphic families it is only the algebraic multiplicity that is locally conserved (in $\mathbb C$), and eigenvalue branches may have at most algebraic singularities at crossings. 

\begin{ex}\label{ex-Simon} Let $\cH = L^2(\R^2)\oplus \C$ and define
\begin{equation*}
H(\xi) = \begin{pmatrix}
-\Delta - \xi^2 \one[|x|\leq 1] & 0 \\ 0 & 0
\end{pmatrix}
\end{equation*}
with domain $H^2(\R^2)\oplus \C$.
There exist $\rho>0$, such that For $\xi\in\R$ with $0 < |\xi| <\rho$, the operator  $-\Delta - \xi^2 \one[|x|\leq 1]$ has a unique eigenvalue $\lambda(\xi)$,
which is simple and depends real analytically on $0 < |\xi| < \rho$. See \cite{S_boun_dstates_2dim}. We may extend $\lambda$ to a continuous function on $(-\rho,\rho)$ by setting $\lambda(0)=0$. Hence, for $\xi\in (-\rho,\rho)$,  $\sigma_{\mathrm{pp}}(H(\xi)) = \{\lambda(\xi),0\}$ and $\xi\to H(\xi)$ is clearly analytic of Type (A).
We observe two things: \textbf{(I)} At $\xi=0$, there is a single simple eigenvalue $\lambda=0$. But we have two branches of eigenvalues coming out for $\xi\neq 0$. That is, the total multiplicity of the eigenvalue cluster is not upper semi-continuous. \textbf{(II)} The lower of the two eigenvalue branches $\xi\to \lambda(\xi)$ does not continue analytically through $\xi=0$, nor does it have an algebraic singularity at $\xi = 0$, more precisely; $|\lambda(\xi)|\leq e^{-(a\xi^2)^{-1}}$ for some $a>0$, cf.~\cite{S_boun_dstates_2dim}. For a closely related example, see \cite[p.~585]{GrieHas_analyt_pert}.
\end{ex}

 \noindent\textbf{Acknowledgement:} The authors thank an anonymous referee on an earlier version of this manuscript, who observed that one of our assumptions -- now removed -- from Section\ref{Subsect-ME} was superfluous.
 
\section{General Theory}\label{sec:GenTheory}

\subsection{Generalized Dilations}\label{Subsec-GenDil}

 In this subsection $H$ and $A$ denote two self-adjoint operators on a complex separable Hilbert
 space $\cH$. The inner product $\langle\cdot,\cdot\rangle$ is assumed linear in the second variable and
 conjugate linear in the first variable. We associate to an (unbounded) operator $T$ with domain $D(T)$, its graph norm $\|\psi\|_T = \|T\psi\| + \|\psi\|$, as a norm on the subspace $D(T)$. We shall frequently, for  self-adjoint $T$, exploit the easy estimate $\frac12 \|\psi\|_T\leq \|(T+\ci)\psi\|\leq \|\psi\|_T$.

  We work throughout Section~\ref{sec:GenTheory} under the following condition:

\begin{cond}\label{cond_g}
	\leavevmode
\begin{enumerate}
\item\label{cond_g_mourre}
 Abbreviating $U(t) = e^{\ci t A}$ for $t\in\R$, we assume
$U(t)D(H) \subset D(H)$ for all $t\in\R$ and
\begin{equation*}
\forall \psi\in D(H):\quad \sup_{|t|\leq 1}\|H U(t)\psi\|<\infty.
\end{equation*}
\item\label{cond_g_C1A}
 The quadratic form on $D(H)\cap D(A) \times D(H)\cap D(A)$ given by
\begin{equation*}
(\psi,\varphi)\mapsto \langle H\psi, A\varphi\rangle - \langle A\psi,H\varphi\rangle 
\end{equation*}
is continuous w.r.t. the  norm $\|(\psi,\varphi)\|_H := \|\psi\|_H + \|\varphi\|_H$.
\item\label{cond_g_ext}
 There exists $R>0$ such that for any $\psi\in D(H)$, the map
\begin{equation*}
\R \ni  t \mapsto H_{t}\psi := U(t)H U(-t)\psi
\end{equation*}
extends to a strongly analytic $\cH$-valued function $\{H_\theta\psi\}_{\theta\in S_R}$, where
\begin{equation}\label{StripDef}
S_R:=\bigl\{z\in\mdsc \,\big| \, |\mathrm{Im}(z)|<R\bigr\}.
\end{equation}
This defines a collection of linear operators $\{H_\theta\}_{\theta\in S_R}$ with domain $D(H)$.
\item\label{cond_g_uniform_bd} For $H_\theta$ defined above, note that $H_\theta(H+\ci)^{-1} \in \mathcal{B}(\mathcal{H})$ by the closed graph theorem.\footnote{$H_\theta$ is closable, since $H_{\bar{\theta}}\subset H_\theta^*$.} We suppose that %the map $U \rightarrow \mathcal{B}(\mathcal{H})$ given by $\xi \mapsto H_\theta(\xi)(H(\xi)+\ci)^{-1}$ extends to an analytic map on $U_{\C^d}$ and
\begin{equation*}
M:=\sup_{\theta \in B_R^{\C}(0)}\|H_\theta(H+\ci)\inv\|<\infty.
\end{equation*}
%\fxnote{Replace the symbol for the conjugation by the caligraphic one everywhere!}
\end{enumerate}
\end{cond}

\begin{rems} \label{rem:Cond_g}
\begin{enumerate}
\item\label{rem:C1Mourre} The Conditions \ref{cond_g}.\ref{cond_g_mourre} and  \ref{cond_g}.\ref{cond_g_C1A}
go back to Mourre \cite{Mo} and are equivalent to saying that $H$ is of class $C^1(A)$ with commutator $[H,A]^\circ$ bounded as an operator from $D(H)$ into $\cH$. See \cite[Prop.~B.11]{Mo_fully_coupled_PF}.
\item Another consequence of Conditions \ref{cond_g}.\ref{cond_g_mourre} and  \ref{cond_g}.\ref{cond_g_C1A} is the density of $D(H)\cap D(A)$
in both $D(H)$ and $D(A)$, equipped with their respective graph norms. See \cite[Lemma.~B.10]{Mo_fully_coupled_PF}.
\item\label{rem:ExtToDisc}
It suffices that the map $\theta\mapsto H_\theta\psi$ extends from $(-R,R)$ 
to $B^{\C}_R(0)$ in order to obtain an extension into $S_R$. Indeed, since we assume that $U(t) D(H) \subset D(H)$, the composition $H_\theta U(t)$ makes sense on $D(H)$ for all $\theta\in B_R^\C(0)$. Let $t\in \R$ and $\theta \in (t-R,t+R)$, then 
\begin{equation*}
H_{\theta}\psi=U(t)H_{\theta -t}U(-t)\psi
\end{equation*}
extends from $(t-R,t+R)$ to an analytic function on $B_R^\C(t)$ for all $\psi\in D(H)$. Sliding $t$ along the real axis produces an analytic continuation of $H_\theta\psi$ to the whole strip $S_R$.
\end{enumerate}
\end{rems}

We recall from \cite{K} that if $U\subset \C$ is open
then a family $\{T_\theta\}_{\theta\in U}$ of closed operators
is said to be \emph{analytic of Type (A)} if the domain of $T_\theta$ does not depend on $\theta$ and
the map $U\ni \theta\to T_\theta \psi$ is analytic for any $\psi$ in the common domain. If $U\subset \C^d$, then  $\{T_\theta\}_{\theta\in U}$ is said to be analytic of Type (A), if it is separately analytic of Type (A) in each of its $d$ variables.

\begin{lem}\label{lem:AnalDense} Assume Conditions \ref{cond_g}.\ref{cond_g_mourre} and  \ref{cond_g}.\ref{cond_g_C1A}. 
The following holds:
\begin{enumerate}
\item\label{item:AnalDense-1} For any $\psi\in D(H)$, $\theta\in\C$ and $m\in\N$, we have $\psi_m(\theta) := e^{- A^2/(2m) + \ci \theta A} \psi \in D(H)$.
\item\label{item:AnalDense-2} If $\theta \in \R$, we have
$\lim_{m\to \infty} \psi_m(\theta) = U(\theta)\psi$ in the topology of $D(H)$. In particular ($\theta=0$), the set of vectors in $D(H)$ that are analytic vectors for $A$ are dense in $D(H)$.
\item\label{item:AnalDense-3} For all $\psi\in D(H)$ and $m\in\N$, the map $\theta \to H \psi_m(\theta)$ is entire.
\end{enumerate}
\end{lem}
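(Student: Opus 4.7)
The unifying tool is the integral representation
\[
\psi_m(\theta)=\sqrt{\tfrac{m}{2\pi}}\int_\R e^{-m(\theta-t)^2/2}\,U(t)\psi\,\mathrm{d}t,
\]
which I would derive from the scalar Gaussian Fourier identity $e^{-a^2/(2m)+\ci\theta a}=\sqrt{m/(2\pi)}\int_\R e^{-m(\theta-t)^2/2}e^{\ci t a}\,\mathrm{d}t$ (valid for all $\theta\in\C$ by analytic continuation from $\R$) together with the functional calculus for $A$. Since $|e^{-m(\theta-t)^2/2}|=e^{m(\im\theta)^2/2}e^{-m(\re\theta-t)^2/2}$, this Bochner integral converges in $\cH$ and depends locally uniformly on $\theta\in\C$.

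For (\ref{item:AnalDense-1}), I would first establish an exponential bound $\|HU(t)\psi\|\leq Ce^{\lambda|t|}\|\psi\|_H$. Condition~\ref{cond_g}.\ref{cond_g_mourre} combined with the uniform boundedness principle gives $\sup_{|t|\leq 1}\|HU(t)(H+\ci)\inv\|<\infty$; chaining $U(t)=U(t/n)^n$ and inserting resolvents between factors (using $\|(H+\ci)U(s)(H+\ci)\inv\|$ bounded on $|s|\leq 1$) upgrades this to all $t\in\R$ with exponential growth. Consequently
\[
\sqrt{\tfrac{m}{2\pi}}\int_\R e^{-m(\theta-t)^2/2}HU(t)\psi\,\mathrm{d}t
\]
converges as a Bochner integral in $\cH$, and closedness of $H$ places $\psi_m(\theta)$ in $D(H)$ with $H\psi_m(\theta)$ equal to this integral.

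For (\ref{item:AnalDense-3}), the integrand $e^{-m(\theta-t)^2/2}HU(t)\psi$ is entire in $\theta$ for each $t$. On any compact $K\subset\C$ the scalar factor admits a $\theta$-uniform Gaussian dominant in $t$, which beats the exponential growth of $\|HU(t)\psi\|$ from step one. Either differentiation under the integral sign or Morera's theorem combined with Fubini then concludes that $\theta\mapsto H\psi_m(\theta)$ is entire.

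For (\ref{item:AnalDense-2}), substituting $s=\sqrt{m}(t-\theta)$ for $\theta\in\R$ gives
\[
H\psi_m(\theta)=\frac{1}{\sqrt{2\pi}}\int_\R e^{-s^2/2}\,HU(\theta+s/\sqrt{m})\psi\,\mathrm{d}s,
\]
while $\psi_m(\theta)\to U(\theta)\psi$ in $\cH$ follows directly from spectral calculus applied to $e^{-A^2/(2m)}\to I$. The one nontrivial input is $\cH$-continuity of $t\mapsto HU(t)\psi$, which is the main obstacle since $H$ is unbounded. By Remark~\ref{rem:Cond_g}.\ref{rem:C1Mourre}, the commutator $[H,A]^\circ$ extends to a bounded map $D(H)\to\cH$; differentiating $s\mapsto U(s)HU(-s)\varphi$ on the dense subspace $D(H)\cap D(A)$ and then closing gives
\[
U(t)HU(-t)\psi - H\psi = -\ci\int_0^t U(s)[H,A]^\circ U(-s)\psi\,\mathrm{d}s
\]
for all $\psi\in D(H)$, from which the required continuity follows. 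Dominated convergence against $e^{-s^2/2}$ (using the exponential bound on $\|HU(t)\psi\|$ to dominate) then delivers $H\psi_m(\theta)\to HU(\theta)\psi$, i.e.\ convergence in the graph norm. The $\theta=0$ specialization yields the asserted density of $A$-analytic vectors in $D(H)$.
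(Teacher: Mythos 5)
Your proof is correct and follows essentially the same route as the paper: the Gaussian (Fourier) integral representation of $\psi_m(\theta)$, the exponential growth bound $\|U(t)\psi\|_H\le e^{c|t|}$ coming from Condition~\ref{cond_g}.\ref{cond_g_mourre}, closedness of $H$ for \ref{item:AnalDense-1}, a commutator-based continuity argument for \ref{item:AnalDense-2}, and Gaussian damping for \ref{item:AnalDense-3}. The only cosmetic differences are that for \ref{item:AnalDense-3} the paper invokes Vitali--Porter on the cut-off family $n H(H+\ci n)^{-1}\psi_m(\theta)$ while you use Morera with a Gaussian dominant, and for \ref{item:AnalDense-2} you rederive the Duhamel identity underlying the Lipschitz bound $\|(HU(t)-U(t)H)\psi\|\le C|t|$ that the paper simply cites from \cite{GGM}.
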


\begin{proof}Put $\psi_m(\theta) = e^{-A^2/(2m) + \ci \theta A}\psi$. Using the Fourier transform, we may write
\begin{equation*}
\psi_m(\theta) = \sqrt{\frac{m}{2\pi}} e^{-m\theta^2/2}\int_{\R} e^{- m t^2/2 + m \theta t}  U(t)\psi\, dt.
\end{equation*}
Note that for any $m\in \N$, the integral converges absolutely in $D(H)$, since $\|U(t)\psi\|_{D(H)} \leq e^{c|t|}$, for all $t\in\R$, where $c>0$ is some constant.
This is a consequence of  Condition~\ref{cond_g}.\ref{cond_g_mourre} and implies \ref{item:AnalDense-1}.

Let $\theta\in\R$.
To show that $\psi_m(\theta)\to U(\theta)\psi$ in $D(H)$, it suffices to argue that $H\psi_m(\theta)\to HU(\theta)\psi$ in $\cH$. Since $U(\theta)\psi\in D(H)$ for real $\theta$, it suffices to prove this with $\theta=0$.
Here we observe that 
\begin{equation*}
\sqrt{m} \int_{|t|\geq 1}e^{- m t^2/2}  U(t)\psi\, dt \to 0,
\end{equation*}
due to the estimate  $\|U(t)\psi\|_{D(H)} \leq e^{c|t|}$ from before. Furthermore, the estimate
\begin{equation}
\|(HU(t)- U(t) H)\psi\| = \|(H_{-t}-H)\psi\|\leq C|t|,
\end{equation}
valid for $|t|\leq 1$ with some $C>0$, follows from Condition~\ref{cond_g}\ref{cond_g_ext} and finally yields \ref{item:AnalDense-2}. 

We now establish  \ref{item:AnalDense-3}.
Since the map $\theta\to \psi_m(\theta)$ is entire it suffices, by Vitali-Porter's theorem, to show that
$n\| H (H+\ci n)^{-1} \psi_m(\theta)\|$ is bounded locally uniformly in $\theta\in\C$. But this follows easily from the estimates already invoked above.
\end{proof}

It turns out that under the assumption in Condition \ref{cond_g}.\ref{cond_g_mourre}, the remaining three items
are equivalent to the statement that all iterated commutators of $H$ with $A$ are $H$-bounded and satisfy a certain growth bound. If these bounds are satisfied the analytic continuation of the family $H_\theta$ can be written as a power series in a neighborhood of $0$. More precisely, we can prove

\begin{prp}\label{prp_equiv}
Assume Condition \ref{cond_g}.\ref{cond_g_mourre}. Then the following two properties are equivalent:
\begin{enumerate}
\item\label{item:equiv-cond-1}  Conditions \ref{cond_g}.\ref{cond_g_C1A}, \ref{cond_g}.\ref{cond_g_ext} and \ref{cond_g}.\ref{cond_g_uniform_bd}.
\item\label{item:equiv-cond-2}  There exists a constant $C>0$ such that: the iterated commutators $\ad_{A}^k(H)$ exist as $H$-bounded operators for all $k\in\mathds{N}$ and 
\begin{equation}
 \bigl\|\ad_{A}^k(H)(H+\mathrm{i})^{-1}\bigr\| \leq  C^k k!.
\label{eq_GenCommEst} 
\end{equation}
\end{enumerate} 
In the confirming case, $\{H_\theta\}_{\theta\in B_{(3C)^{-1}}^\C(0)}$ with common domain $D(H)$ is an analytic family of Type (A),
and for all $\theta\in B^\C_{(3C)^{-1}}(0)$ and $\psi\in D(H)$, we have 
\begin{equation}\label{PowerSeries}
H_{\theta}\psi:=\sum_{k=0}^{\infty}\frac{(-\theta)^k}{k!}\ci^k\ad_{A}^k(H)\psi
\end{equation}
and 
\begin{equation}\label{RelBounds}
 \frac12 \|\psi\|_{H}\leq \|\psi\|_{H_\theta}\leq 2 \|\psi\|_{H}.
 \end{equation}
\end{prp}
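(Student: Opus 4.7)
The plan is to prove both implications of the equivalence and derive the power series representation together with the relative graph-norm bounds in tandem.

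For \ref{item:equiv-cond-1} $\Rightarrow$ \ref{item:equiv-cond-2}, I would exploit the strongly analytic family $\theta\mapsto H_\theta\psi$ on $B_R^{\C}(0)$ via Cauchy's integral formula. For each $\psi\in D(H)$, $k\in\N$ and $0<r<R$, set
\[
T_k\psi:=\frac{k!}{2\pi\ci}\oint_{|\theta|=r}\frac{H_\theta\psi}{\theta^{k+1}}\,\mrd\theta.
\]
The uniform bound from Condition \ref{cond_g}.\ref{cond_g_uniform_bd} gives $\|H_\theta(H+\ci)\inv\|\leq M$ on the contour, whence $\|T_k(H+\ci)\inv\|\leq M\,k!\,r^{-k}$. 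Sending $r\uparrow R$ yields a bound of the form \eqref{eq_GenCommEst} for any $C$ slightly larger than $R^{-1}$. What remains is to identify $T_k=(-\ci)^k\ad_{A}^k(H)$, which simultaneously constructs the higher commutators as $H$-bounded operators. To perform this identification I would restrict to the dense subspace of analytic vectors for $A$ inside $D(H)$ provided by Lemma~\ref{lem:AnalDense}. On such a vector $\psi$, the map $t\mapsto U(t)HU(-t)\psi$ is real analytic near zero with Taylor coefficients given by iterated commutators built inductively from $\ad_{A}(H)$, the latter being well-defined as an element of $\mathcal{B}(D(H),\cH)$ by Remark~\ref{rem:Cond_g}.\ref{rem:C1Mourre}; matching this against the Cauchy series identifies the $T_k$'s on a dense set, and the uniform Cauchy bound extends the identity to all of $D(H)$.

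For \ref{item:equiv-cond-2} $\Rightarrow$ \ref{item:equiv-cond-1}, use \eqref{eq_GenCommEst} to see that each summand in \eqref{PowerSeries} with $k\geq 1$ is bounded in $\cH$ by $(C|\theta|)^k\|(H+\ci)\psi\|$. Hence for $\theta\in B_{(3C)^{-1}}^\C(0)$ the geometric estimate
\[
\sum_{k=1}^{\infty}(C|\theta|)^k \leq \frac{C|\theta|}{1-C|\theta|}\leq \frac12
\]
gives absolute convergence and defines $H_\theta$ on $D(H)$ with $\|(H_\theta-H)\psi\|\leq \tfrac12\|(H+\ci)\psi\|$. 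Combined with $\tfrac12\|\psi\|_H\leq \|(H+\ci)\psi\|\leq \|\psi\|_H$, this at once yields \eqref{RelBounds} and Condition~\ref{cond_g}.\ref{cond_g_uniform_bd}. Analyticity of $\theta\mapsto H_\theta\psi$ on the disk, and thus that $\{H_\theta\}$ is of Type (A), is immediate from the absolute convergence. Condition~\ref{cond_g}.\ref{cond_g_C1A} is just the $k=1$ clause. To reconcile the series with the actual generalized dilation, I would evaluate both sides on the analytic vectors of Lemma~\ref{lem:AnalDense}: there the standard Taylor expansion of $U(t)HU(-t)\psi$ converges in $\cH$ and coincides termwise with \eqref{PowerSeries}, so by the graph-norm bound the two expressions agree on all of $D(H)$ for real $t$ in the disk. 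Sliding along the real axis as in Remark~\ref{rem:Cond_g}.\ref{rem:ExtToDisc} then upgrades the definition to the full strip $S_R$ for any $R$ strictly less than $(3C)^{-1}$.

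The main obstacle, as already flagged, is the identification step in the forward direction. Because $A$ is unbounded and may fail to preserve $D(H)$, the iterated objects $[\cdots[[H,A],A],\ldots,A]$ have no a priori meaning beyond the first commutator and must be \emph{constructed}. The Cauchy coefficients $T_k$ provide natural candidates with the right quantitative bounds, while the density of analytic vectors from Lemma~\ref{lem:AnalDense} provides a reservoir of regular elements on which strong differentiation and direct iteration of the form $[H,A]^\circ$ make sense; reconciling these two constructions is the technical heart of the argument.
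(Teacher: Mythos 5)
Your overall strategy coincides with the paper's in both directions: a Cauchy-coefficient argument for (\ref{item:equiv-cond-1}) $\Rightarrow$ (\ref{item:equiv-cond-2}), and a power-series definition of $S_\theta$ plus matching on Gaussian-regularized analytic vectors (Lemma~\ref{lem:AnalDense}) for (\ref{item:equiv-cond-2}) $\Rightarrow$ (\ref{item:equiv-cond-1}). The backward direction is close to the paper's, and your relative bounds and the verification that one may take $M=3$ are correct.

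The genuine difference, and the place where your proposal leaves a gap, is exactly the step you flag as the ``technical heart'' of the forward direction: identifying the Cauchy coefficients $T_k$ with iterated commutators and thereby \emph{proving the existence} of $\ad_A^k(H)$ as $H$-bounded operators. Your sketch assumes that on analytic vectors the iterates ``built inductively from $\ad_A(H)$'' make sense; but to form $[\ad_A(H),A]$ as an operator one must already know that $\ad_A(H)$ is of class $C^1(A)$, which is not given a priori, so the inductive construction is circular without further input. The paper resolves this by quoting \cite[Prop.~2.2]{MoWe}: having produced the $H$-bounded operators $B_k$ from the Cauchy coefficients with the bound $\|B_k(H+\ci)^{-1}\|\leq M/R^k$, it invokes that result to conclude that the iterated commutators exist and are implemented by the $B_k$'s. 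Your approach of Taylor-matching on the Gaussian-regularized vectors is in the same spirit as the proof of that cited proposition, but as written it does not establish, for each $k$, that the sesquilinear form $\langle \ad_A^{k-1}(H)\psi, A\varphi\rangle - \langle A\psi, \ad_A^{k-1}(H)\varphi\rangle$ on $D(H)\cap D(A)$ is implemented by the candidate $T_k$; that is precisely what needs an argument. Either import the cited proposition or prove its content.

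One smaller point: after sending $r\uparrow R$ your bound on $T_k(H+\ci)^{-1}$ is $Mk!/R^k$, which gives \eqref{eq_GenCommEst} with $C=\max\{1,M\}/R$, not with ``any $C$ slightly larger than $R^{-1}$'' (unless $M\leq 1$). This matches the choice \eqref{eq:ChoiceOfC} made in the paper's remark following the proposition.
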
 

\begin{rem} If one supposes \ref{item:equiv-cond-1} with given $R$ and $M$ coming from
 Condition \ref{cond_g}.\ref{cond_g_ext} and \ref{cond_g}.\ref{cond_g_uniform_bd}, respectively,  
 then one may choose $C = \max\{1,M\}/R$  in \eqref{eq_GenCommEst}. 
 
 Conversely, if one assumes \ref{item:equiv-cond-2} with a given $C$, then one may choose
$R= (3C)^{-1}$ and $M = 3$. 

Since we have elected to state our assumptions in terms of an analytic extension of $H$, we shall below employ
the estimate  \eqref{eq_GenCommEst} with  
\begin{equation}\label{eq:ChoiceOfC}
C = \frac{\max\{1,M\}}{R}.
\end{equation}
The expansion \eqref{PowerSeries} of $H_\theta$ and the relative bounds \eqref{RelBounds} will then hold true for $\theta\in B_{R'}^\C(0)$, where
\begin{equation}\label{eq:ChoiceOfRp}
R' = \frac1{3C} = \frac{R}{3 \max\{1,M\}}.
\end{equation}
\end{rem}

\begin{proof}  We begin with \ref{item:equiv-cond-2} $\Rightarrow$
\ref{item:equiv-cond-1}. Therefore, we assume that for all $k$, the iterated commutators exist as $H$-bounded operators $\ad_A^k(H)$ and that (\ref{eq_GenCommEst}) holds.

That  Condition \ref{cond_g}.\ref{cond_g_C1A} follows is obvious (take $k=1$).
 
Note that Condition \ref{cond_g}.\ref{cond_g_mourre} ensures that $H_\theta$ is well-defined for real $\theta$ 
as an operator with domain $D(H)$.

Exploiting (\ref{eq_GenCommEst}), we may for $\psi\in D(H)$  and $|\theta|< 1/C$ estimate
\begin{equation}
\sum_{k=0}^{\infty} \Bigl\|\frac{\theta^k}{k!}\ad_{A}^k(H)\psi\Bigr\|  \leq  \frac{\|(H+\ci)\psi\|}{1-C|\theta|}.
\label{eq_S_anal}
\end{equation}
%where $\|\cdot\|_{H}$ denotes the graph norm of $H$. 
Hence, the prescription
\begin{equation}
S_{\theta}\psi:=\sum_{k=0}^{\infty}\frac{(-\theta)^k}{k!}\ci^k\ad_{A}^k(H)\psi
\end{equation}
defines an analytic $\cH$-valued function defined in the disc $B^\C_{1/C}(0)$.
It is now easy to check that the map $\psi \mapsto S_{\theta}\psi$ defines -- for each $\theta \in B^\C_{1/C}(0)$ -- a linear operator with domain $D(H)$. 

The estimate \eqref{eq_S_anal} implies that
\begin{equation}
\forall \psi\in D(H):\quad \bigl\| S_\theta\psi\bigr\|  \leq  (1-C|\theta|)^{-1}\|\psi\|_{H},
\label{eq_S_anal2}
\end{equation}
and in particular that $S_\theta$ is $H$-bounded.

We proceed to show that $H$ and $S_\theta$ (for $\theta$ in a sufficiently small disc centered at $0$) 
define equivalent graph norms on $D(H)$. Note that \eqref{eq_S_anal2} already establishes that there exists a constant $C_1>0$ independent of $\theta \in B_{1/(3C)}^{\mathds{C}}(0)$ such that
\begin{equation*}
\|\psi\|_{S_{\theta}}\leq C_1 \|\psi\|_{H}.
\end{equation*} 
In complete analogy to the first estimate, we estimate for $\psi\in D(H)$ and  $\theta\in  B_{1/C}^{\C}(0)$:
\begin{align}
\|\psi\|_{H} 
&= 
\|\psi\| + \|H\psi\| \leq \|\psi\|_{S_{\theta}} + \sum_{k=1}^{\infty} (C|\theta|)^k\|\psi\|_{H} \nonumber\\
&=
\|\psi\|_{S_{\theta}} + \frac{ C|\theta|}{1-C|\theta|}\|\psi\|_{H}.
\label{eq_S-H_bd}
\end{align}
Hence, for $\theta \in  B_{1/(3C)}^{\mathds{C}}(0)$ we have
\begin{equation*}
 \|\psi\|_{H} \leq 2\|\psi\|_{S_{\theta}}.
\end{equation*}
This proves the claimed equivalence of graph norms and thus that $S_{\theta}$ is closed as an operator with domain $D(H)$ for all $\theta \in B_{1/(3 C)}^{\mathds{C}}(0)$.  Abbreviating $R= 1/(3C)$, we have now proved that $\{S_\theta\}_{\theta \in B_{R}^{\mathds{C}}(0)}$ is an analytic family of Type (A).
(Note that redoing the estimate \eqref{eq_S_anal} using $|\theta|\leq 1/(3C)$ yields $\|\psi\|_{S_\theta}\leq 2 \|\psi\|_H$ as well.)

It remains, recalling Remark~\ref{rem:Cond_g}.\ref{rem:ExtToDisc}, to argue that $S_\theta = H_\theta$ for $\theta\in (-R,R)$. 
Let  $\psi,\phi\in D(H)$ and put $\psi_m = e^{-A^2/(2m)}\psi$ and $\phi_m = e^{-A^2/(2m)}\phi$. Then, with the notation of Lemma~\ref{lem:AnalDense}, we have
\begin{equation*}
f_m(\theta) = \langle \psi_m, H_\theta \phi_m\rangle = \langle \psi_m(\bar{\theta}), H \phi_m(\theta)\rangle
\end{equation*}
a priori for real $\theta$, but extending to an entire function of $\theta$. Here we used
Lemma~\ref{lem:AnalDense}.\ref{item:AnalDense-3}.

We may use the assumption on the existence of iterated $H$-bounded commutators $\ad_A^k(H)$ to compute
\begin{equation*}
\frac{d^k f_m}{d\theta^k}\large{|}_{\theta=0} = \langle \psi_m, (-\ci)^k \ad_A^k(H)\phi_m\rangle.
\end{equation*}
Since analytic functions in $B_R^\C(0)$ are determined by their derivatives at zero, we may conclude that
\begin{equation*}
 \langle \psi_m, H_\theta \phi_m\rangle 
 =  \langle \psi_m, S_\theta \phi_m\rangle
\end{equation*}
for all $\theta\in B_R^\C(0)$. Finally, we exploit Lemma~\ref{lem:AnalDense} once more 
to compute the limit $m\to\infty$ in the above identity and conclude that
for all $\theta\in (-R,R)$ and $\psi,\phi\in D(H)\cap D(A)$, we have  $\langle \psi, H_\theta \phi\rangle 
  =  \langle \psi, S_\theta \phi\rangle$. By density of $D(H)\cap D(A)$ in $D(H)$, we conclude that $H_\theta = S_\theta$ for $\theta\in (-R,R)$ as desired. It now follows from \eqref{eq_S_anal2} that we may choose $M= 3$ in Condition \ref{cond_g}.\ref{cond_g_uniform_bd}.

%First note that $D(A)\cap\dom$ is a dense subset of $\dom$, since all $H(\xi)$ are assumed to be of class $\c1a$. Indeed, note %that this assumption implies that $(H(\xi)-z)\inv \in \c1a$ and that $(H(\xi)-z)\inv D(A)$ is a core for $A$ which is %contained in $D(A)\cap\dom$.

In order to prove that \ref{item:equiv-cond-1} $\Rightarrow$
\ref{item:equiv-cond-2}, we assume that 
 Conditions \ref{cond_g}.\ref{cond_g_C1A}--\ref{cond_g}.\ref{cond_g_uniform_bd} holds true.
 Let $\eta,\psi \in D(H)$. By Condition \ref{cond_g}.\ref{cond_g_mourre} and the analyticity of $\theta \mapsto H_{\theta}\psi$, we may use \cite[Prop.~2.2]{MoWe} to argue that all iterated commutators of $A$ with $H$ exists and are implemented by $H$-bounded operators, provided we can establish that for every $j\in\mathds{N}$ there exist $H$-bounded operators $H_0^{(j)}$, such that 
\begin{equation*}
\forall \theta \in (-R,R):\quad \frac{\mathrm{d}^j}{\mathrm{d}\theta^j} \langle \eta, H_{\theta}\psi\rangle \large{|}_{\theta=0} = \langle\eta , H_{0}^{(j)}\psi\rangle.
\end{equation*}
As a starting point we use the analyticity of $\theta \mapsto H_{\theta}\psi$ to obtain a power series expansion for $|\theta|<r < R$, that is
\begin{equation}
\langle \eta, H_{\theta} \psi \rangle= \sum_{k=0}^{\infty}\theta^k b_k(\eta,\psi),\quad b_k(\eta,\psi)= \frac{1}{2\pi\ci}\int\limits_{\Gamma_r} \theta^{-k-1}\langle \eta, H_{\theta}\psi \rangle\, \mathrm{d}\theta,
\label{eq_PwrSer}
\end{equation}
where $\eta\in\mathcal{H}$ and $\Gamma_r$ is the circle in the complex plane with radius $r$ centered at 0. Observe that
the $b_k(\eta,\psi)$'s define sesquilinear forms.

Using Condition \ref{cond_g}.\ref{cond_g_uniform_bd}, we get an $M>0$ such that
\begin{equation*}
|b_k(\eta,\psi)|\leq \|\eta\|\|\psi\|_{H}\frac{M}{R^{k}},
\end{equation*} 
where we also took the limit $r\to R$.
For every $\psi\in D(H)$ (and $k\in\N$) there thus exists a vector $\tilde{\psi}$ such that $b_k(\eta,\psi)=\langle \eta, \tilde{\psi}\rangle$ for all $\eta\in D(H)$. It follows that the assignment $B_k\psi:=\tilde{\psi}$ defines an $H$-bounded 
linear operator on $D(H)$.
With this construction, we have
\begin{equation*}
\frac{\mathrm{d}^j}{\mathrm{d}\theta^j} \langle \eta, H_{\theta}\psi\rangle \large{|}_{\theta=0} = \langle \eta, k! B_k\psi\rangle
\end{equation*}
and \cite[Prop.~2.2]{MoWe} now implies that \eqref{eq_GenCommEst} holds with $C:=\max\{1, M\}/R$. 
\end{proof}

In the following we abbreviate
\begin{equation}\label{eq_W}
W_\theta:= H_\theta-H = \sum_{k=1}^{\infty}\frac{(-\theta)^k}{k!}\ci^k\ad_A^k(H)
\end{equation}
as an operator with domain $D(H)$. Observe for $\theta \in B_{R'}^\C(0)$ the estimate
\begin{equation}\label{eq_W_est}
\|W_\theta(H+\ci)^{-1}\|\leq \frac{C|\theta|}{1-C|\theta|}\leq \frac{3 C}{2} |\theta|,
\end{equation}

We have the following -- rough but sufficient -- spectral localization result.

\begin{prp}\label{prp_spec}
\leavevmode
Assume Condition \ref{cond_g}. Then
\begin{equation*}
\forall \theta\in B_{R'}^\C(0):\quad \sigma(H_\theta) \subset \bigl\{ x+\ci y  \, \big| \, |y| \leq 4C |\theta|(|x|+1) \bigr\}.
\end{equation*}
\end{prp}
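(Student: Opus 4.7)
The plan is to show that every $z=x+\ci y \in \mdsc$ with $|y|>4C|\theta|(|x|+1)$ lies in the resolvent set $\rho(H_\theta)$. Since $H_\theta = H + W_\theta$ on the common domain $D(H)$, I would write
\[
H_\theta - z = \bigl(\one + W_\theta(H-z)\inv\bigr)(H-z)
\]
and aim to invert both factors. The factor $H-z$ is invertible since $H$ is self-adjoint and $|y|>0$ under our hypothesis, so the issue is to force $\|W_\theta(H-z)\inv\| < 1$ and conclude by a Neumann series that $H_\theta - z$ has a bounded everywhere-defined inverse.

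The main input is the already established bound \eqref{eq_W_est}, namely $\|W_\theta(H+\ci)\inv\|\leq \frac{3C}{2}|\theta|$ for $\theta\in B_{R'}^{\mdsc}(0)$. Inserting $(H+\ci)\inv(H+\ci)$ between $W_\theta$ and $(H-z)\inv$ reduces everything to estimating $\|(H+\ci)(H-z)\inv\|$ by the functional calculus of the self-adjoint operator $H$:
\[
\|(H+\ci)(H-z)\inv\| = \sup_{t\in\mdsr}\frac{\sqrt{t^2+1}}{\sqrt{(t-x)^2+y^2}} \leq 1 + \frac{|x|+1}{|y|},
\]
using $\sqrt{t^2+1}\leq |t-x|+|x|+1$ and the elementary bounds $\sup_s |s|/\sqrt{s^2+y^2}=1$ and $\sup_s 1/\sqrt{s^2+y^2}=1/|y|$.

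Combining the two estimates yields
\[
\|W_\theta(H-z)\inv\| \leq \frac{3C|\theta|}{2}\Bigl(1 + \frac{|x|+1}{|y|}\Bigr).
\]
Under the standing assumption $|\theta|<R'=(3C)\inv$ together with $|y|>4C|\theta|(|x|+1)$, the first summand is strictly below $\tfrac12$ and the second below $\tfrac38$, so the total is strictly less than $1$. Consequently $\one+W_\theta(H-z)\inv$ is boundedly invertible, $H_\theta-z$ is a bijection from $D(H)$ onto $\mdsc$ with bounded inverse $(H-z)\inv(\one+W_\theta(H-z)\inv)\inv$, and $z\in\rho(H_\theta)$, proving the claimed spectral inclusion.

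I do not foresee any serious obstacle; the only slightly delicate point is keeping track of constants so that the bound $|y|\leq 4C|\theta|(|x|+1)$ (rather than some larger multiple) comes out. This is why the crude estimate $\sqrt{t^2+1}\leq |t-x|+|x|+1$ is preferable to simply bounding $\|(H-z)\inv\|$ by $1/|y|$ and $\|H(H-z)\inv\|$ separately.
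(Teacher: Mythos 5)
Your proof is correct and takes essentially the same approach as the paper: the same factorization $H_\theta - z = \bigl(1+W_\theta(H-z)^{-1}\bigr)(H-z)$, the same key bound \eqref{eq_W_est}, and a Neumann series to invert. The only cosmetic difference is how you bound $\sup_{t\in\R}|t+\ci|/|t-z|$: your estimate $1+(|x|+1)/|y|$ via the triangle inequality $\sqrt{t^2+1}\leq|t-x|+|x|+1$ is slightly cleaner than the paper's $2/c$ (obtained by inserting $|y|\geq c(|x|+1)$ and bounding the squared ratio by $4/c^2$), but both give $\|W_\theta(H-z)^{-1}\|<1$ precisely when $|y|>4C|\theta|(|x|+1)$ and $|\theta|<R'$.
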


\begin{proof} Let $z\in \C$ with $\im z\neq 0$ and compute on $D(H)$:
\begin{equation*}
H_\theta -z = \left[1+ W_\theta(H-z)^{-1}\right](H-z)
\end{equation*}
Hence, $H_\theta-z$ is invertible if 
$\|W_\theta(H-z)^{-1}\|<1$
due to the Neumann series. The norm appearing in the previous inequality can be estimated trivially by
\begin{equation*}
\|W_\theta(H-z)^{-1}\|\leq \|W_\theta (H+\ci)^{-1}\| \sup_{p\in\R}\frac{|p+\ci|}{|p-z|}.
\end{equation*}
Let $c>0$.
Suppose $z = x+\ci y$ with $|y| \geq c(|x|+ 1)$. Then $|p+\ci|^2/|p-z|^2 \leq (p^2+1)/((p-x)^2+ c^2 x^2+ c^2) \leq 4/c^2$ uniformly in $p$, $x$ and $y$. 
Using \eqref{eq_W_est}, we have:
\begin{equation*}
\|W_\theta(H-z)^{-1}\|\leq  \frac{3 C|\theta|}{c} .
\end{equation*}
for $z=x+\ci y$ with $|y|\geq c|x|$
The choice $c = 4 C |\theta|$ ensures convergence of the Neumann series.
\end{proof}

\begin{lem}\label{lem_adjoint}
Assume Condition \ref{cond_g} and let $\theta\in B_{R'}^\C(0)$. We have
\begin{equation*}
D(H_\theta^*)= D(H) \quad \textup{and} \quad H_\theta^* = H_{\overline{\theta}}.
\end{equation*}
\end{lem}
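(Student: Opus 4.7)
The strategy is twofold: first, establish the inclusion $H_{\bar\theta}\subseteq H_\theta^*$ by an analytic continuation argument from the real axis (where self-adjointness of $H_t$ is automatic), and second, upgrade this inclusion to equality using the spectral localization of Proposition~\ref{prp_spec} together with a standard surjectivity/injectivity dichotomy for closed operators.

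For the first step, I fix $\psi,\varphi\in D(H)$ and consider the two functions
\begin{equation*}
g(\theta):=\langle\psi,H_\theta\varphi\rangle,\qquad h(\theta):=\langle H_{\bar\theta}\psi,\varphi\rangle,
\end{equation*}
on $B_{R'}^{\C}(0)$. The function $g$ is holomorphic in $\theta$ because $\theta\mapsto H_\theta\varphi$ is a strongly analytic $\cH$-valued map by Condition~\ref{cond_g}.\ref{cond_g_ext} (combined with Proposition~\ref{prp_equiv}). For $h$, I would write $h(\theta)=\overline{\langle\varphi,H_{\bar\theta}\psi\rangle}$; the map $w\mapsto\langle\varphi,H_w\psi\rangle$ is holomorphic in $w$, so the substitution $w=\bar\theta$ produces an anti-holomorphic function in $\theta$, and conjugating once more gives a holomorphic function, so $h$ is holomorphic on $B_{R'}^{\C}(0)$. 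For real $\theta=t\in(-R',R')$, $H_t=U(t)HU(-t)$ is self-adjoint, and a direct computation gives $g(t)=h(t)$. By the identity theorem, $g\equiv h$ on $B_{R'}^{\C}(0)$. Since $\varphi\mapsto\langle H_{\bar\theta}\psi,\varphi\rangle$ is manifestly a continuous linear functional on $\cH$, this yields $\psi\in D(H_\theta^*)$ together with $H_\theta^*\psi=H_{\bar\theta}\psi$, hence $H_{\bar\theta}\subseteq H_\theta^*$.

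For the second step, I would apply Proposition~\ref{prp_spec} to both $H_\theta$ and $H_{\bar\theta}$ (the spectral enclosure depends only on $|\theta|$, so applies to $\bar\theta$ with the same constants). Choose $y_0>4C|\theta|$ and set $z:=\ci y_0$, so that $z\in\rho(H_\theta)$ and $\bar z=-\ci y_0\in\rho(H_{\bar\theta})$. In particular, $H_\theta-z$ is a bijection of $D(H)$ onto $\cH$ and so is $H_{\bar\theta}-\bar z$. Now take any $\psi\in D(H_\theta^*)$ and set $\eta:=(H_\theta^*-\bar z)\psi$; by surjectivity of $H_{\bar\theta}-\bar z$, pick $\psi'\in D(H)$ with $(H_{\bar\theta}-\bar z)\psi'=\eta$. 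Since Step~1 gives $H_\theta^*\psi'=H_{\bar\theta}\psi'$, we deduce $(H_\theta^*-\bar z)(\psi-\psi')=0$; but $\ker(H_\theta^*-\bar z)=\Ran(H_\theta-z)^\perp=\{0\}$ by surjectivity of $H_\theta-z$. Hence $\psi=\psi'\in D(H)$, giving $D(H_\theta^*)\subseteq D(H)$ and thereby $H_\theta^*=H_{\bar\theta}$.

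The main conceptual delicacy lies in Step~1, specifically in identifying $h(\theta)$ as a genuinely holomorphic (rather than anti-holomorphic) function of $\theta$; the double conjugation trick is what makes this clean and avoids having to work with power-series coefficients of $\ad_A^k(H)$ directly. Step~2 is then essentially a packaging of the standard fact that a symmetric restriction of an operator with nontrivial resolvent set coincides with the operator itself — all the hard analytic work has already been done in Proposition~\ref{prp_spec}.
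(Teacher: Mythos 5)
Your proof is correct, and both steps take genuinely different routes from the paper. For the inclusion $H_{\bar\theta}\subseteq H_\theta^*$, the paper simply computes with the power series \eqref{PowerSeries} term by term, transferring $\ci^k\ad_A^k(H)$ across the inner product using the (anti-)symmetry of the iterated commutators; your Step~1 instead sets up the two holomorphic functions $g(\theta)=\langle\psi,H_\theta\varphi\rangle$ and $h(\theta)=\langle H_{\bar\theta}\psi,\varphi\rangle$, observes they agree on the real segment (where $H_t=U(t)HU(-t)$ is self-adjoint), and invokes the identity theorem. The double-conjugation trick to see that $h$ is holomorphic is exactly the right observation, and this version has the virtue of bypassing any manipulation of the series coefficients. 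For the reverse inclusion $D(H_\theta^*)\subseteq D(H)$, the paper works by hand: for fixed $\psi\in D(H_\theta^*)$ it produces an estimate $|\langle\psi,H\phi\rangle|\leq C_\psi\|\phi\|$ uniformly in $\phi\in D(H)$ — controlling the correction term $W_\theta$ via a Neumann-series bound on $(H_\theta-\ci y)^{-1}W_\theta$ with $y$ chosen as in \eqref{eq:choiceof-y} — and concludes from self-adjointness of $H$ that $\psi\in D(H^*)=D(H)$. Your Step~2 packages the same conclusion into the clean abstract fact that if $T$ is closed with $z\in\rho(T)$ and $S\subseteq T^*$ has $\bar z\in\rho(S)$, then $S=T^*$, using surjectivity of $S-\bar z$ together with $\ker(T^*-\bar z)=\Ran(T-z)^\perp=\{0\}$. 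Both routes rely on Proposition~\ref{prp_spec} to supply the resolvent-set point, but yours avoids redoing any operator-norm estimates and exposes the logical skeleton more directly; the paper's version is a bit more self-contained in not appealing to the range/kernel duality. Either is a fair choice.
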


\begin{proof}
Let $\psi, \phi\in D(H)$. We compute 
\begin{align*}
\bigl\langle \psi, H_\theta \phi\bigr\rangle 
&= 
\sum_{k=0}^{\infty}\Bigl\langle \psi, \frac{(-\theta)^k}{k!}\ci^k \ad_{A}^k(H)\phi\Bigr\rangle \\
&=
 \sum_{k=0}^{\infty} \Bigl\langle
 \frac{(-\overline{\theta})^k}{k!}\ci^k\ad_{A}^k(H)\psi, \phi\Bigr\rangle = \bigl\langle H_{\overline{\theta}}\psi,  \phi\bigr\rangle .
\end{align*}
Hence $H_{\overline{\theta}}\subset H_\theta^*$. Conversely, let $\phi\in D(H)$, $\psi \in D(H_\theta^*)$ and set
\begin{equation}\label{eq:choiceof-y}
y =  \max\bigl\{1, 8 C R'\bigr\}.
\end{equation} 
Observe that $\ci y\in \rho(H_\theta)\setminus \R$, due to Proposition~\ref{prp_spec}. We compute, using the notation from \eqref{eq_W}
\begin{align*}
|\langle \psi, H\phi\rangle| 
&\leq 
|\langle \psi, H_\theta\phi\rangle| + |\langle \psi, W_\theta\phi\rangle|\\
&=
|\langle \psi, H_\theta\phi\rangle| + |\langle \psi, (H_\theta-\ci y)(H_\theta-\ci y)^{-1}W_\theta\phi\rangle|\\
&\leq 
\|H_\theta^*\psi\|\|\phi\| + \|(H_\theta^*+ \ci y)\psi\|\|(H_\theta-\ci y)^{-1}W_\theta\phi\|.
\end{align*}
Note that
\begin{equation*}
H_\theta-\ci y = (H-\ci y)(1 + (H-\ci y)^{-1}W_\theta)
\end{equation*}
and that, recalling \eqref{eq:ChoiceOfRp}, \eqref{eq_W_est} and \eqref{eq:choiceof-y},
\begin{equation*}
\bigl\|(H-\ci y)^{-1}W_\theta\bigr\| \leq \left( \sup_{x\in \R}\frac{x^2+1}{x^2 +y^2}\right)^{1/2} \bigl\|(H-\ci)^{-1}W_\theta\bigr\| \leq \frac12.
\end{equation*}

Abbreviating $B_\theta = (1+(H-\ci y)^{-1}W_\theta)^{-1}$, we may estimate
\begin{equation*}
\|(H_\theta-\ci y)^{-1}W_\theta\phi\| \leq \|B_\theta\| \|(H-\ci y)^{-1}W_\theta\phi\|\leq C \|\phi\|.
\end{equation*} 
Hence, there exists a $C_\psi>0$ such that
\begin{equation*}
\forall \phi\in D(H):\quad \bigl|\langle \psi, H\phi\rangle\bigr| \leq C_\psi\|\phi\|,
\end{equation*}
and therefore we may conclude that $\psi \in D(H^*) = D(H)$, exploiting the self-adjointness of $H$. This shows that $D(H_\theta^*) = D(H)$ and that
 $H_\theta^*= H_{\overline{\theta}}$. 
\end{proof}

\subsection{The Mourre Estimate}\label{Subsect-ME}

At this stage we will single out a specific energy $\lambda_0\in\R$, where we shall assume that $H$ has an eigenvalue.
In order for the dilated Hamiltonian to have its essential spectrum out of the way of the eigenvalue, we shall impose a Mourre estimate locally around $\lambda_0$. To formulate the requirement, we need the notation $E_H(B)$ for the spectral projection associated with a Borel set $B\subset \R$ and 
the self-adjoint operator $H$.

\begin{cond}\label{cond_Mourre_estimateNEW}
Let $\lambda_0\in\R$.
For the pair of self-adjoint operators $H$ and $A$ satisfying Condition~\ref{cond_g}, we further assume:
\begin{enumerate}
\item\label{item:Cond-ev} $\lambda_0\in\sigma_\mathrm{pp}(H)$.
\item\label{item:CondMourre} There exist $e,C,\kappa>0$ and a compact operator $K$, such that
\begin{equation}\label{ME:With-K}
\ci \ad_A(H) \geq e - C E_H(\R\setminus [\lambda_0-\kappa,\lambda_0+\kappa])\langle H\rangle - K
\end{equation}
in the sense of quadratic forms on $D(H)$.
% \item\label{cond_g_conj}  We suppose that there exists a conjugation\footnote{see Definition \ref{def_intertw}.} $\conj$ on $\cH$ satisfying
% $\conj D(H) \subset D(H)$, $\conj D(A)\subset D(A)$,
% \begin{equation*}
% \conj H =  H \conj \quad \textup{and}\quad \conj A = - A \conj.
% \end{equation*}
\end{enumerate}
\end{cond}

\begin{nota} We write $P_0 = E_H(\{\lambda_0\})$ for the orthogonal projection onto the eigenspace of $H$ associated with the eigenvalue $\lambda_0$. Furthermore, we abbreviate $\bP_0 = 1 - P_0$ for the projection onto the orthogonal complement of the eigenspace.
\end{nota}

\begin{rems}
\begin{enumerate}
\item Observe that it is a consequence of Conditions~\ref{cond_g}.\ref{cond_g_C1A},   \ref{cond_Mourre_estimateNEW}.\ref{item:CondMourre} and the Virial Theorem \cite{GeoGer_virial} that $P_0$ is a finite rank projection.
\item Choosing $\kappa$ possibly smaller, one may replace the compact operator $K$ in \eqref{ME:With-K} with 
a positive multiple of the eigenprojection $P_0$. More precisely,
\begin{equation}\label{ME:With-P}
\ci \ad_A(H) \geq e' - C' \bigl(E_H(\R\setminus [\lambda_0-\kappa',\lambda_0+\kappa'])\langle H\rangle + P_0\bigr),
\end{equation}
for suitably chosen constants $e'\in (0,e]$, $\kappa'\in (0,\kappa]$ and $C'\geq C$. 
It is in this form that we shall use the Mourre estimate, and for convenience we assume $\kappa'\leq \sqrt{3}$.
\item Equation \eqref{ME:With-P} differ from the more usual version of Mourre's estimate:
\begin{equation*} %\label{eq_ME_standard}
E_H(I)\ci\ad_A(H)E_H(I) \geq eE_H(I) - K,
\end{equation*}
where $I = [\lambda_0-\kappa, \lambda_0+\kappa]$.
Under Condition~\ref{cond_g}, most notably the consequence that $\ad_A(H)$ is an $H$-bounded operator, the two estimates are equivalent. This would be false if the factor $\langle H\rangle$  is replaced by $1$ on the right-hand side of \eqref{ME:With-P}.
% When passing from one to the other ne may have to alter the %constants.  \eqref{ME:With-K} follows from the latter by %expanding $1=E_H(I) + 1- E_H(I)$ on both sides of $\ci\ad_A(H)$. %The term $E_H(I)\ci\ad_A(H)E_H(I)$ is estimated with the aid of %\eqref{eq_ME_standard}. The terms $E_H\ci\ad_A(H)(1-E_H(I))$ can %be estimated from below by $-C\langle H\rangle (1-E_H(I))$. For %the last term we note that
%\begin{align*}
%(1-E_H(I))\ci\ad_A(H)(1-E_H(I)) = (1-E_H(I))\langle %H\rangle^{1/2} T \langle H\rangle^{-1/2} (1-E_H(I)),
%\end{align*}
%where $T=\langle H\rangle^{-1/2} \ci\ad_A(H) \langle %H\rangle^{-1/2}$. Since both $\ci\ad_A(H)\langle H\rangle^{-1}$ %and $\langle H\rangle^{-1}\ci\ad_A(H)$ are bounded operators, $T$ %can be seen to be bounded by using interpolation theory between %rigged Hilbert spaces. Thus, $(1-E_H(I))\ci\ad_A(H)(1-E_H(I))$ %can be estimated from below by $-C'\langle H\rangle$ as well. %Putting everything together we obtain \eqref{ME:With-K}.
\end{enumerate} 
\end{rems}

As a preparation for a Feshbach analysis, we have:

\begin{lem}\label{lem_bPHbP} Assume Conditions \ref{cond_g},~\ref{cond_Mourre_estimateNEW}.\ref{item:Cond-ev}
and~\ref{cond_Mourre_estimateNEW}.\ref{item:CondMourre}.
The following three statements are true for all $\theta\in B_{R'}^\C(0)$:
\begin{enumerate}
\item\label{item:ReducedClosed} $\bP_0 H_\theta \bP_0$ is a closed operator with dense domain $\bP_0 D(H)$.
\item\label{item:ReducedAdjoint} $[\bP_0 H_\theta\bP_0]^*= \bP_0 H_{\bar{\theta}}\bP_0$ on $\bP_0 D(H)$.
\item\label{item:ReducedSpectrum} For all $\theta\in B_{R'}^\C(0)$: 
$\sigma(\bP_0 H_\theta\bP_0) \subset \bigl\{ x+\ci y  \, \big| \, |y| \leq 4C|\theta|(|x|+1) \bigr\}$.
\end{enumerate}
\end{lem}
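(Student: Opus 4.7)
My plan rests on three facts from the excerpt: $H_\theta$ is closed with domain $D(H)$ (Proposition~\ref{prp_equiv}); its adjoint is $H_{\bar\theta}$ with the same domain (Lemma~\ref{lem_adjoint}); and $P_0$ is finite-rank and commutes with $H$ (the former by the Virial-theorem remark, the latter because $P_0$ is a spectral projection of $H$). In particular $\bP_0$ leaves $D(H)$ invariant, so $\bP_0 D(H) = D(H)\cap \bP_0\cH$, which is dense in $\bP_0\cH$. Throughout, I view $\bP_0 H_\theta \bP_0$ as a densely defined operator on the Hilbert space $\bP_0\cH$ with domain $\bP_0 D(H)$.

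For (i), I would take a sequence $\phi_n \in \bP_0 D(H)$ with $\phi_n \to \phi$ and $\bP_0 H_\theta \phi_n \to \chi$, aiming to upgrade graph convergence in $\bP_0\cH$ to graph convergence in $\cH$ so that closedness of $H_\theta$ can be applied. The split $H_\theta\phi_n = \bP_0 H_\theta \phi_n + P_0 H_\theta\phi_n$ reduces the matter to convergence of $P_0 H_\theta \phi_n = P_0 W_\theta \phi_n$, where $W_\theta = H_\theta - H$ and $P_0 \phi_n = 0$. Because $P_0$ has finite rank with eigenbasis $\{e_j\}\subset D(H)$, and since $\langle e_j, W_\theta \phi_n\rangle = \langle W_{\bar\theta} e_j, \phi_n\rangle$ by Lemma~\ref{lem_adjoint}, the operator $P_0 W_\theta$ is bounded on $\cH$. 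Hence $H_\theta\phi_n$ converges; closedness of $H_\theta$ then yields $\phi\in D(H)$, and projecting $H_\theta\phi = \lim H_\theta\phi_n$ by $\bP_0$ gives $\chi = \bP_0 H_\theta\bP_0\phi$.

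For (ii), the inclusion $\bP_0 H_{\bar\theta}\bP_0 \subset [\bP_0 H_\theta \bP_0]^*$ follows directly from $\langle \bP_0 H_{\bar\theta}\bP_0\psi, \phi\rangle = \langle \psi, \bP_0 H_\theta\bP_0\phi\rangle$ for $\psi,\phi\in\bP_0 D(H)$, via Lemma~\ref{lem_adjoint}. For the reverse, suppose $\psi\in\bP_0\cH$ makes $\phi\mapsto \langle\psi,\bP_0 H_\theta\bP_0\phi\rangle$ a continuous antilinear functional on $\bP_0 D(H)$. Writing any $\eta\in D(H)$ as $\eta=\bP_0\eta+P_0\eta$, I would extend the bound to $D(H)$: the part $\bP_0\eta\in\bP_0 D(H)$ contributes $O(\|\eta\|)$ by hypothesis, while $\langle\psi,H_\theta P_0\eta\rangle$ is also $O(\|\eta\|)$ because $H_\theta$ is bounded on the finite-dimensional subspace $P_0\cH$. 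Lemma~\ref{lem_adjoint} then places $\psi\in D(H)\cap\bP_0\cH = \bP_0 D(H)$, and the adjoint acts as claimed by the computation used in the first inclusion.

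For (iii), set $\widetilde H := \bP_0 H|_{\bP_0 D(H)}$, which is self-adjoint on $\bP_0\cH$ since $H$ is reduced by $\bP_0$, and $\widetilde W_\theta := \bP_0 W_\theta \bP_0$. Because $P_0$ commutes with the resolvent of $H$, one has $(\widetilde H + \ci)^{-1} = \bP_0 (H+\ci)^{-1}\bP_0$ on $\bP_0 \cH$, and consequently $\|\widetilde W_\theta(\widetilde H+\ci)^{-1}\| \leq \|W_\theta(H+\ci)^{-1}\| \leq \tfrac{3C|\theta|}{2}$ by \eqref{eq_W_est}. The Neumann-series argument of Proposition~\ref{prp_spec} then applies verbatim on $\bP_0\cH$: for $z=x+\ci y$ with $|y| > 4C|\theta|(|x|+1)$, the factorisation $\bP_0 H_\theta\bP_0 - z = [1 + \widetilde W_\theta (\widetilde H - z)^{-1}](\widetilde H - z)$ is invertible. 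The only genuinely delicate step is (i), where the finite rank of $P_0$ is the key lever converting boundedness of $P_0 W_\theta$ into graph convergence in the full space; parts (ii) and (iii) are then routine adaptations of the earlier arguments.
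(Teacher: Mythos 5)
Your proposal is correct and, for parts (ii) and (iii), follows essentially the same route as the paper: the reverse inclusion in (ii) hinges on extending the bound from $\bP_0 D(H)$ to all of $D(H)$ via the boundedness of $H_\theta P_0$ and then invoking Lemma~\ref{lem_adjoint}, and (iii) is the Neumann-series argument of Proposition~\ref{prp_spec} transplanted to $\bP_0\cH$ using that $\bP_0$ commutes with the resolvent of $H$.

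For part (i) you argue differently. The paper obtains closedness of $\bP_0 H_\theta\bP_0$ abstractly, by viewing its graph as the image of the closed graph of $H_\theta\bP_0$ under the surjection $\bP_0\oplus\bP_0$; you instead run a direct graph-sequence argument, splitting $H_\theta\phi_n = \bP_0 H_\theta\phi_n + P_0 W_\theta\phi_n$ and using $\langle e_j,W_\theta\phi\rangle = \langle W_{\bar\theta}e_j,\phi\rangle$ to see that $P_0 W_\theta$ extends boundedly, so that $H_\theta\phi_n$ converges and closedness of $H_\theta$ applies. Both arguments ultimately rest on the same lever -- the finite rank of $P_0$ together with $P_0\cH\subset D(H)$ -- but your version is more explicit about where that fact enters; the paper's ``open map'' one-liner requires the supplementary observation that the kernel of $\bP_0\oplus\bP_0$ sits, up to the finite-dimensional piece $\{0\}\oplus P_0\cH$, inside the graph (so that the image of the closed graph is indeed closed), a point the paper glosses over and your argument sidesteps entirely.
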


\begin{proof} As for \ref{item:ReducedClosed}, note first that $H_\theta \bP_0$ with domain $D(H)$ is closed, since $\bP_0 D(H) \subset D(H)$ and  $H_\theta$ with domain $D(H)$ is a closed operator (Proposition~\ref{prp_equiv}). To conclude, observe that
the graph of $\bP_0 H_\theta \bP_0$ is the range of the open map $\cH\oplus \cH\ni (\psi,\varphi)\to 
(\bP_0\psi,\bP_0\varphi)\in \bP_0\cH \oplus \bP_0\cH$ applied to the graph
of $H_\theta\bP_0$.

We turn to the claim \ref{item:ReducedAdjoint}. Clearly, $\bP_0 H_{\bar{\theta}} \bP_0 \subset [\bP_0 H_\theta\bP_0]^*$. Let $\varphi \in D([\bP_0 H_\theta\bP_0]^*)$  viewed as an element of $\bP_0\cH \subset \cH$, and compute for $\psi\in D(H)$:
\[
\begin{aligned}
\langle \varphi, H_\theta \psi\rangle & = \langle \bP_0 \varphi, H_\theta (\bP_0 + P_0)\psi\rangle \\
& =  \langle \varphi, \bP_0 H_\theta \bP_0\psi\rangle +  \langle \bP_0 \varphi, H_\theta P_0\psi\rangle.
\end{aligned}   
\]
Since $P_0$ is finite rank operator and $H_\theta$ is closed, it follows from the Closed Graph Theorem that 
$H_\theta P_0$ is bounded. Hence, there exists $C>0$ such that
\[
\bigl| \langle \varphi, H_\theta \psi\rangle \bigr| \leq C \|\psi\|,
\]
which implies that $\varphi \in D((H_\theta)^*) = D(H_{\bar{\theta}}) = D(H)$. Here we used 
Lemma~\ref{lem_adjoint}. Since $\bP_0\cH\cap D(H) = \bP_0 D(H)$, we are done.

The last claim \ref{item:ReducedSpectrum} may be established by repeating the proof of Proposition~\ref{prp_spec}.
\end{proof}

In formulating the following proposition and in its proof, we make use of the eigenvalue $\lambda_0$  from Condition~\ref{cond_Mourre_estimateNEW}
and the constants $e'$ and $\kappa'$ from \eqref{ME:With-P}. The radius $R'$ 
was defined in \eqref{eq:ChoiceOfRp}. For the open upper half-plane, we use the notation
\begin{equation}
\label{Cplus}
\C^+ := \bigl\{z\in \C \, \big| \, \Im(z)>0\bigr\}.
\end{equation}
  
\begin{prp}\label{prp_windowNEW} Assume Conditions~\ref{cond_g} and~\ref{cond_Mourre_estimateNEW}.  
Abbreviate for $\sigma,\rho >0$ and $\theta \in \C^+$:
  \begin{equation}\label{eq:Rectangle}
    \cR_\theta(\sigma,\rho)=\bigl\{z\in\C\,\big|\,\Re(z)\in (\lambda_0-\rho,\lambda_0+\rho),\Im(z)\in(-\sigma\Im(\theta),\infty)\bigr\},
  \end{equation}
  There exist constants $R'',\rho>0$ with $R'' \leq R'$, such that
  \begin{equation}\label{eq_window}
    \forall\theta\in B_{R''}^\C(0)\cap \C^+\colon \quad 
    \cR_\theta(e'/2,\rho)\cap\sigma(\bP_0 H_\theta \bP_0)=\emptyset.
  \end{equation}
\end{prp}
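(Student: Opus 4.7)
The plan is to perform a Feshbach reduction on $\bP_0 H_\theta \bP_0 - z$ together with a careful quadratic form estimate that exploits the Mourre estimate and the sign structure of the Feshbach correction. Fix $\kappa''\in (0,\kappa')$ and $\rho \in (0,\kappa''/2)$ (both to be chosen small at the end). Define the orthogonal projections $Q_+ := \bP_0 E_H([\lambda_0-\kappa'',\lambda_0+\kappa''])$ and $Q_- := \bP_0 - Q_+$, both of which commute with $H$ and $\bP_0$, so that $Q_\pm H_\theta Q_\mp = Q_\pm W_\theta Q_\mp$ with $W_\theta := H_\theta-H$. First I would show that, for $z\in\cR_\theta(e'/2,\rho)$ and $|\theta|$ small, the outer block $Q_-(H_\theta-z)Q_-$ is invertible on $Q_-\cH$ with uniformly bounded inverse $R_-(\theta,z)$: this follows from the spectral gap $\mathrm{dist}(\re z,\sigma(Q_- H Q_-))\geq \kappa''-\rho$ together with the norm bound~\eqref{eq_W_est}, via a Neumann series. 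The Feshbach formula (noting that the off-diagonal blocks $Q_\pm W_\theta Q_\mp$ have operator norm $O(|\theta|)$ because $(H+\ci)Q_+$ is bounded) then reduces the problem to invertibility on $Q_+\cH$ of the Schur complement
\[
F(z):=Q_+(H_\theta-z)Q_+ - Q_+ W_\theta Q_-\,R_-(\theta,z)\,Q_- W_\theta Q_+.
\]

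The heart of the proof is the estimate $\im\langle\psi,F(z)\psi\rangle\leq -(e'/4)\im\theta\,\|\psi\|^2$ for $\psi\in Q_+\cH$. Writing
\[
\im\langle\psi,F(z)\psi\rangle=-\im z\,\|\psi\|^2+\im\langle\psi,W_\theta\psi\rangle-\im\langle\psi,\mathrm{corr}(\theta,z)\psi\rangle,
\]
the middle term is handled by the Mourre estimate~\eqref{ME:With-P} restricted to $\Ran Q_+$ (where both the $E_H$-error and the $P_0$-error vanish, by choice of $\kappa''\leq\kappa'$), combined with the expansion $(W_\theta-W_{\bar\theta})/(2\ci)=-\im\theta\cdot\ci[H,A]+O(\im\theta\cdot|\theta|)$ as an $H$-bounded operator (obtained from~\eqref{PowerSeries} using the elementary bound $|\theta^k-\bar\theta^k|\leq 2\im\theta\cdot k|\theta|^{k-1}$); this yields $\im\langle\psi,W_\theta\psi\rangle\leq -e'\im\theta\|\psi\|^2+O(\im\theta\cdot|\theta|)\|\psi\|^2$. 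For the Feshbach correction I would first evaluate at $\theta_0:=\re\theta$, where $H_{\theta_0}$ is self-adjoint, so $\mathrm{corr}(\theta_0,z)^*=\mathrm{corr}(\theta_0,\bar z)$, and a direct computation gives
\[
\im\langle\psi,\mathrm{corr}(\theta_0,z)\psi\rangle = \im z\cdot \|R_-(\theta_0,z)^* Q_- W_{\theta_0} Q_+ \psi\|^2,
\]
a quantity of sign $\im z$ and magnitude $O(|\theta|^2)\|\psi\|^2$. Joint analyticity in $\theta$, together with $\mathrm{corr}=O(\theta^2)$, gives $\|\mathrm{corr}(\theta,z)-\mathrm{corr}(\theta_0,z)\|=O(\im\theta\cdot|\theta|)$, so
\[
\im\langle\psi,\mathrm{corr}(\theta,z)\psi\rangle=\im z\cdot g(\psi)+O(\im\theta\cdot|\theta|)\|\psi\|^2,\qquad 0\leq g(\psi)\leq O(|\theta|^2)\|\psi\|^2.
\]
Assembling the three pieces,
\[
\im\langle\psi,F(z)\psi\rangle\leq-\im z\,(\|\psi\|^2+g(\psi))-e'\im\theta\,\|\psi\|^2+O(\im\theta\cdot|\theta|)\|\psi\|^2.
\]
For $\im z\geq 0$ the first summand is non-positive; for $\im z\in(-(e'/2)\im\theta,0)$ one uses $|\im z|\leq (e'/2)\im\theta$ and the bound on $g$ to control it by $(e'/2)\im\theta(1+O(|\theta|^2))\|\psi\|^2$. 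Either way, the Mourre contribution dominates once $|\theta|\leq R''$ for a sufficiently small $R''>0$ (depending on $e'$, $\kappa''$, $\rho$, and the commutator constants).

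From this bound one gets $\|F(z)\psi\|\geq (e'/4)\im\theta\,\|\psi\|$, so $F(z)$ is injective with closed range. To promote this to bijectivity I would invoke the identity $\im(F(z)^*)=-\im F(z)$, which immediately yields the matching lower bound for $F(z)^*$; thus $F(z)^*$ is injective, $\Ran F(z)$ is dense, and $F(z)$ is invertible. Combined with the first step and the Feshbach formula, this gives invertibility of $\bP_0 H_\theta\bP_0-z$ on $\bP_0\cH$, i.e., $z\notin \sigma(\bP_0 H_\theta\bP_0)$, as required.

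The main obstacle is the control of $\im\langle\psi,\mathrm{corr}(\theta,z)\psi\rangle$. A naive norm bound gives only $|\im\langle\psi,\mathrm{corr}\psi\rangle|\leq \|\im(\mathrm{corr})\|\cdot\|\psi\|^2=O(|\theta|^2)\|\psi\|^2$, which cannot be absorbed into the Mourre contribution $-e'\im\theta\|\psi\|^2$ when $\im\theta\ll|\theta|$, i.e., for $\theta$ near the real axis within the half-disc. The resolution is the sign structure $\im\mathrm{corr}(\theta_0,z)=\im z\cdot(\text{positive operator})$ at real $\theta_0$: for $\im z\geq 0$ this sign cooperates with $-\im z\|\psi\|^2$, and for $\im z<0$ it is tied to $|\im z|\leq (e'/2)\im\theta$ and hence scales linearly in $\im\theta$, making it safely absorbable into the Mourre term for all $\theta$ in the half-disc.
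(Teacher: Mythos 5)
Your proof is correct, and it takes a genuinely different route from the paper's. The paper argues by contradiction: it invokes the conjugation $\conj$ of Condition~\ref{cond_Mourre_estimateNEW}.\ref{cond_g_conj} to obtain, via Lemma~\ref{lem_T} and Corollary~\ref{cor_spectrum}, a Weyl sequence $\{\psi_n\}\subset\bP_0 D(H)$ for any $\mu\in\sigma(\bP_0 H_\theta\bP_0)$; it then uses the approximate-eigenvector property to bound $\|\bP_0\psi_n\|_H$ uniformly, plugs this into the Mourre estimate via the bounded operator $L=C'E_H(\R\setminus[\lambda_0-\kappa',\lambda_0+\kappa'])\langle H\rangle(H-\lambda_0)^{-1}$, and concludes $\im\mu\le-(e'/2)\im\theta$, contradicting $\mu\in\cR_\theta(e'/2,\rho)$. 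You instead bypass the conjugation entirely: your internal Feshbach decomposition $\bP_0=Q_++Q_-$ isolates the energy window where the Mourre estimate is clean (on $\Ran Q_+$ the error terms in \eqref{ME:With-P} vanish identically, and $(H+\ci)Q_+$ is bounded so $\|\psi\|_H\simeq\|\psi\|$, which is exactly what the paper gets instead from the Weyl sequence), while the $Q_-$ block is invertible by spectral gap. Bijectivity of the Schur complement then comes from the two-sided numerical-range bound, with the matching bound for $F(z)^*$ obtained from $\im(F(z)^*)=-\im F(z)$, rather than from the intertwining conjugation. The trade-offs: the paper's argument is shorter but needs Condition~\ref{cond_Mourre_estimateNEW}.\ref{cond_g_conj}; yours is longer (in particular the sign-structure analysis $\im\,\mathrm{corr}(\theta_0,z)=\im z\cdot(\text{positive})$ at real $\theta_0=\re\theta$, which you correctly identify as necessary because a naive $O(|\theta|^2)$ bound on the correction cannot be absorbed into $-e'\im\theta$ near the real axis) but establishes this proposition without the conjugation hypothesis, making it somewhat more self-contained. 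Both proofs ultimately rest on the same Mourre estimate and the same power-series control of $W_\theta$.
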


\begin{proof} Using the constants from \eqref{ME:With-P}, we define a bounded operator
\begin{equation}\label{eq_windowproof_L-NEW}
L:= C' E_H(\R\setminus [\lambda_0- \kappa',\lambda_0+\kappa'])\langle H \rangle ( H-\lambda_0)^{-1}.
\end{equation}
Note that $\|L\| \leq 4 C' \langle \lambda_0\rangle/\kappa'$, where we used $\kappa'\leq \sqrt{3}$
and that $\langle \lambda\rangle \leq 2 \langle\lambda_0\rangle\langle \lambda-\lambda_0\rangle$. We claim suitable choices
\begin{equation}
 \begin{aligned}\label{rho-and-Rpp}
\rho  &= \min\Bigl\{1,\frac{e'}{16 C' \langle \lambda_0\rangle/\kappa'}\Bigr\},\\
R''  &= \min\Bigl\{R', \frac{e'}{24 C(5|\lambda_0|+ 11)( C' \langle \lambda_0\rangle/\kappa' + C)}\Bigr\},
 \end{aligned}
 \end{equation}
where $C$ and $ R'$ were defined in \eqref{eq:ChoiceOfC} and \eqref{eq:ChoiceOfRp}, respectively. Recall that
$R'' C\leq R' C\leq  1/3$. 

Let $\theta\in\C^+\cap B_{R''}^\C(0)$ and $\mu\in \cR_\theta(e'/2,\rho)\cap\sigma(\bP_0H_\theta \bP_0)$.
Note that due to Lemma~\ref{lem_bPHbP}.\ref{item:ReducedSpectrum}, we may estimate
\begin{equation}\label{eq_mu_upper_bound}
|\mu| \leq \bigl(|\lambda_0|+\rho + 1\bigr)\bigl(1+ 16 R'^2 C^2\bigr)^{1/2}\leq 2|\lambda_0|+ 4.
\end{equation}

By 
%Remark~\ref{rem:barH}.\ref{item:barHandC} and 
Lemma~\ref{lem_bPHbP}.\ref{item:ReducedClosed}, we may apply Lemma~\ref{lem-dich} to the operator $T=\bP_0 H_\theta\bP_0$ acting in $\bP_0\cH$. Assume first that there exists a sequence $\psi_n\in \bP_0 D( H)$ with $\|\psi_n\|=1$, such that 
\begin{equation}\label{Approx-ev}
o_n := \bigl\|\bP_0(H_\theta-\mu) \bP_0\psi_n\bigr\|\to 0 \quad \textup{for } n\to\infty.
\end{equation} 

We estimate for all $n$ using \eqref{RelBounds} and \eqref{eq_W_est} (recalling that $C|\theta|\leq C R''\leq 1/3$)
\begin{align}\label{UnifBndOnSeq}
\nonumber \bigl\|\bP_0\psi_n\bigr\|_H & \leq 2 \bigl\|\bP_0\psi_n\bigr\|_{H_\theta}\\
\nonumber & \leq 2\bigl(\bigl\|\bP_0H_\theta\bP_0\psi_n\bigr\| + \bigl\|P_0 W_\theta \bP_0\psi_n\bigr\|+1\bigr)\\
\nonumber &\leq 2\bigl(o_n + |\mu| + \frac12 (|\lambda_0|+1)+1\bigr)\\
& = 2o_n + 5|\lambda_0| + 11.
\end{align}
Here we used \eqref{eq_mu_upper_bound} in the last step.
Exploiting the power series expansion \eqref{PowerSeries} of $H_\theta$, the Mourre estimate \eqref{ME:With-P} and simplifying for real expectation values, we obtain for any $n$
\begin{align} \label{eq_windowproof1}
\Im(\mu) & =\Im\inner[\big]{\bP_0\psi_n}{(\mu-H_\theta)\bP_0\psi_n}+\Im\inner[\big]{\bP_0\psi_n}{H_\theta\bP_0\psi_n}\nonumber\\
    & =\Im\inner[\big]{\bP_0\psi_n}{(\mu-H_\theta)\bP_0\psi_n} -\Im\inner[\big]{\bP_0\psi_n}{ \theta\ci\ad_A(H)\bP_0\psi_n}\nonumber\\
    &\quad -\Im\inner[\Big]{\bP_0\psi_n}{\sum_{k=2}^\infty\frac{(-\theta)^k}{k!}\ci^k\ad_A^k(H)\bP_0\psi_n}\nonumber\\
    & =\Im\inner[\big]{\bP_0\psi_n}{(\mu-H_\theta)\bP_0\psi_n} - \Im(\theta)\inner[\big]{\bP_0\psi_n}{\ci\ad_A(H)\bP_0\psi_n}\nonumber\\
    &\quad -\sum_{k=2}^\infty\frac{\Im\bigl((-\theta)^k\bigr)}{k!}\inner[\Big]{\bP_0\psi_n}{\ci^k\ad_A^k(H)\bP_0\psi_n}\nonumber\\
& \leq o_n
 - \Im(\theta)\left[e' -C'\inner[\big]{\bP_0\psi_n}{E(|H-\lambda|\geq \kappa')\langle H\rangle\bP_0\psi_n}\right] \nonumber\\
    &\quad -\sum_{k=2}^\infty\frac{\Im\bigl( (-\theta)^k\bigr)}{k!}\inner[\Big]{\bP_0\psi_n}{\ci^k\ad_A^k(H)\bP_0\psi_n}. 
\end{align}
Note that for all $k$, we have
$|\Im((-\theta)^k)|
\leq  2^k |\Im(\theta)| |\theta|^{k-1}$.
Therefore,
\begin{align}\label{eq_windowproof2}
&\Bigl|\sum_{k=2}^\infty\frac{\Im\bigl((-\theta)^k\bigr)}{k!}\inner[\Big]{\bP_0\psi_n}{\ci^k\ad_A^k(H)\bP_0\psi_n}\Bigr|
\leq
\sum_{k=2}^\infty \bigl|\Im\bigl((-\theta)^k\bigr)\bigr|C^k\bigl\|\bP_0\psi_n\bigr\|_{H}\nonumber\\
& \quad \leq 
C \bigl|\Im(\theta)\bigr| \sum_{k=2}^\infty 2^{k-1}\bigl|\theta\bigr|^{k-1}C^{k-1}\bigl\|\bP_0\psi_n\bigr\|_{H}\nonumber\\
\nonumber & \quad =
C \bigl|\Im(\theta)\bigr|\frac{2C|\theta|}{1-2C|\theta|}\bigl\|\bP_0\psi_n\bigr\|_{H}\\
 & \quad \leq 6 \bigl|\Im(\theta)\bigr| R'' C^2 \bigl(2o_n + 5|\lambda_0| +11\bigr),
\end{align}
where we used \eqref{UnifBndOnSeq} and that $C|\theta| \leq 1/3$ in the last step. 
%Define the bounded operator\fxnote{remove this version of $L$}
%\begin{equation}
%L:= C_0' E(|H(\xi_0)-\lambda(\xi_0)|\geq \kappa_0)\langle H(\xi_0) \rangle ( H(\xi_0)-\lambda(\xi_0))\inv.
%\label{eq_windowproof_L}
%\end{equation}
We estimate using \eqref{eq_W_est},
recalling the definition  \eqref{eq_windowproof_L-NEW} of the bounded self-adjoint operator $L$,
\begin{align}\label{eq_windowproof3}
\nonumber &
C'\bigl|\bigl\langle\bP_0\psi_n, E_H(\R\setminus [\lambda_0-\kappa',\lambda_0+\kappa'])\langle H\rangle\bP_0\psi_n\bigr\rangle\bigr| \\
\nonumber&\quad = \bigl| \bigl\langle\bP_0 \psi_n, L \bP_0 (H-\lambda_0)\bP_0\psi_n\bigr\rangle\bigr| \\
&\nonumber\quad \leq  \bigl| \bigl\langle\bP_0 \psi_n, L \bP_0 (H-\mu)\bP_0\psi_n\bigr\rangle\bigr|+
|\lambda_0-\re(\mu)| \bigl| \bigl\langle \bP_0\psi_n, L\bP_0\psi_n\bigr\rangle\bigr| \\
\nonumber&\quad \leq
\bigl\|L\bigr\|\bigl\|\bP_0(H_\theta-\mu)\bP_0\psi_n\bigr\| + \bigl\|L\bigr\| \bigl\| W_\theta\bP_0\psi_n\bigr\| + |\lambda_0-\re(\mu)|\bigl\|L\bigr\|\\
\nonumber&\quad \leq
\bigl\|L\bigr\| o_n + \bigl\|L\bigr\| \bigl\| W_\theta (H+\ci)\inv\bigr\|\bigl\|\bP_0\psi_n\bigr\|_{H} + \rho\bigl\|L\bigr\|\\
\nonumber&\quad \leq
\|L\| o_n + \frac32 C R''  \|L\| \|\bP_0\psi_n\|_{H}  + \rho\|L\|\\
&\quad  \leq \bigl\|L\bigr\|\bigl(1+ 3 C R'' \bigr)o_n + \frac32\bigl(5|\lambda_0|+11\bigr)CR''\bigl\|L\bigr\| + \rho\bigl\|L\bigr\|,
\end{align}
where we used \eqref{UnifBndOnSeq} in the final step.

Combining (\ref{eq_windowproof1}), (\ref{eq_windowproof2}) and (\ref{eq_windowproof3}) we obtain
\begin{align*}
\Im(\mu) &\leq  - \Im(\theta) \bigl(e' -  \frac32  C R'' (5|\lambda_0|+11)(\|L\|+ 4  C) - \rho \|L\| \bigr)\\
&\quad  +  
\bigl(1+  |\Im(\theta)|(12 R'' C^2  +\|L\|(1+ 3 C R'' ))  \bigr)o_n.
\end{align*}
By the choices of $\rho$ and $R''$ from \eqref{rho-and-Rpp} and the estimate $\|L\| \leq 4 C' \langle \lambda_0\rangle/\kappa'$, we observe that
\begin{equation*}
\frac32 R''C (5|\lambda_0|+11)(\|L\|+ 4  C) + \rho \|L\| \leq \frac{e'}{2}
\end{equation*}
and thus, taking the limit $n\to \infty$ using \eqref{Approx-ev}, 
we arrive at
\begin{equation}\label{ContraBound1}
\Im(\mu) \leq -\Im(\theta)\frac{e'}{2}.
\end{equation}
This estimate contradicts the choice of $\mu\in \cR_\theta(e'/2,\rho)$, cf.~\eqref{eq:Rectangle}.

If \eqref{Approx-ev} does not hold, then by Lemma~\ref{lem_bPHbP}.\ref{item:ReducedAdjoint} and Lemma~\ref{lem-dich}, there exists a sequence $\phi_n\in \bP_0 D(H_{\bar{\theta}})$, with $\|\phi_n\| = 1$ all $n$ and 
\begin{equation}
o_n := \|\bP_0 (H_{\bar{\theta}}-\bar{\mu})\bP_0)\phi_n\| \to 0, \quad \textup{for }n\to\infty.
\end{equation}
We now repeat the estimates \eqref{UnifBndOnSeq}--\eqref{ContraBound1}, replacing $\mu$ by $\bar{\mu}$ 
and $\theta$ by $\bar{\theta}$, and recalling that $\Im(\bar{\theta})>0$. This results in the estimate
\[
\Im(\bar{\mu}) \geq -\Im(\bar{\theta}) \frac{e'}2.
\]
Hence $\Im(\mu) \leq -\Im(\theta) e'/2$, which  completes the proof.
\end{proof}

In the following, we use the definition $\sigma_{\mathrm{ess}}(H) = \sigma(H)\setminus\sigma_{\mathrm{disc}}(H)$, where $\sigma_{\mathrm{disc}}(H)$ is the set of all isolated points $\lambda\in\sigma(H)$ such that when $\Gamma$ is a counterclockwise loop around $\lambda$, which separates $\lambda$ from the rest of the spectrum, the Riesz projection
\begin{equation*}
\frac1{2\pi \ci}\int_\Gamma(z-H)\inv \mathrm{d}z
\end{equation*} 
onto the generalized eigenspace has finite rank.

The following theorem is proven using the Feshbach reduction method, for which Proposition~\ref{prp_windowNEW} above is an essential prerequisite. 

\begin{thm}\label{thm_ess_spec}
Assume Conditions \ref{cond_g} and \ref{cond_Mourre_estimateNEW}. Then 
\begin{equation*}
\forall \theta \in B_{R''}^{\C}(0)\cap \mdsc^+:\quad \sigma_{\mathrm{ess}}(H_\theta)\cap \cR_\theta(e'/2,\rho) =\emptyset.
\end{equation*}
The constants $\rho,R''$ and the sets $\cR_\theta$ come from Proposition~\ref{prp_windowNEW}.
\end{thm}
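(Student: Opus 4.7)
The plan is to perform a Feshbach (Schur complement) reduction of $H_\theta-z$ with respect to the decomposition $\cH=P_0\cH\oplus\bP_0\cH$, exploiting the finite rank of $P_0$ (noted in the remarks following Condition~\ref{cond_Mourre_estimateNEW}) together with Proposition~\ref{prp_windowNEW}, which provides invertibility of the $\bP_0$-block throughout the rectangle $\cR_\theta(e'/2,\rho)$.

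First I would write out the block decomposition of $H_\theta-z$ on $D(H)=P_0\cH\oplus\bP_0 D(H)$,
\begin{equation*}
H_\theta-z = \begin{pmatrix} P_0(H_\theta-z)P_0 & P_0 H_\theta \bP_0 \\ \bP_0 H_\theta P_0 & \bP_0(H_\theta-z)\bP_0 \end{pmatrix},
\end{equation*}
and observe that both off-diagonal blocks are bounded, in fact finite rank: as already used in the proof of Lemma~\ref{lem_bPHbP}, $H_\theta P_0$ is bounded by the closed graph theorem combined with the finite-rankness of $P_0$, and $P_0 H_\theta = (H_{\bar\theta} P_0)^*$ is bounded by the same argument together with Lemma~\ref{lem_adjoint}. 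For $\theta\in B_{R''}^\C(0)\cap\C^+$ and $z\in\cR_\theta(e'/2,\rho)$, Proposition~\ref{prp_windowNEW} furnishes a bounded inverse $G_\theta(z):=[\bP_0(H_\theta-z)\bP_0]^{-1}\in\mathcal{B}(\bP_0\cH)$, with range $\bP_0 D(H)$ by Lemma~\ref{lem_bPHbP}.\ref{item:ReducedClosed}, depending analytically on $z$.

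Next I would introduce the Feshbach operator
\begin{equation*}
F_\theta(z) := P_0(H_\theta-z)P_0 - P_0 H_\theta \bP_0 \, G_\theta(z) \, \bP_0 H_\theta P_0
\end{equation*}
on the finite-dimensional space $P_0\cH$ and invoke the standard Schur complement identity: $H_\theta-z\colon D(H)\to\cH$ is bijective if and only if $F_\theta(z)$ is invertible on $P_0\cH$, with the resolvent then given by the usual $2\times 2$ block formula, while in general $\ker(H_\theta-z)\cong\ker F_\theta(z)$. Since $F_\theta(z)$ is a holomorphic matrix-valued function on $\cR_\theta(e'/2,\rho)$, the scalar $z\mapsto\det F_\theta(z)$ is holomorphic there. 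Proposition~\ref{prp_spec} confines $\sigma(H_\theta)$ to the cone $|\im z|\leq 4C|\theta|(|\re z|+1)$, hence $\lambda_0+\ci y$ belongs to $\rho(H_\theta)\cap\cR_\theta(e'/2,\rho)$ for $y$ sufficiently large. Consequently $\det F_\theta$ is not identically zero, its zero set in the rectangle is discrete, and at each such zero the dimension of $\ker(H_\theta-z)$ is at most $\dim P_0\cH<\infty$, so $z$ is an isolated eigenvalue of finite algebraic multiplicity; at every other point of the rectangle $z\in\rho(H_\theta)$. In either case $z\notin\sigma_\mathrm{ess}(H_\theta)$, which is the claim.

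The main obstacle I anticipate is the careful bookkeeping required to make the Schur complement identity rigorous for the unbounded operator $H_\theta$: one must verify that all compositions appearing in $F_\theta(z)$ and in the inverse formula act between the correct domains, in particular that $G_\theta(z)$ maps $\bP_0\cH$ into $\bP_0 D(H)\subset D(H)$ so that $P_0 H_\theta \bP_0 \, G_\theta(z)$ is well defined on all of $\bP_0\cH$. These checks are routine given Lemma~\ref{lem_bPHbP}, the finite-rankness of $P_0$, and the boundedness of $H_\theta P_0$ and $P_0 H_\theta$ established above.
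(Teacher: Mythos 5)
Your proposal follows the same Feshbach reduction strategy as the paper, and most of the steps are correct, including the clean observation that Proposition~\ref{prp_spec} guarantees $\det F_\theta\not\equiv 0$ on the rectangle. However, there is a genuine gap in the final step.

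You write that since $\dim\ker(H_\theta-z)\leq\dim P_0\cH<\infty$, each zero of $\det F_\theta$ is ``an isolated eigenvalue of finite algebraic multiplicity.'' This does not follow: for a closed non-self-adjoint operator, a finite-dimensional kernel at an isolated spectral point does \emph{not} imply a finite-rank Riesz projection. A simple counterexample: let $V$ be the Volterra operator on $L^2[0,1]$, set $T_0(c,f)=(0,Vf)$ on $\C\oplus L^2[0,1]$, and let $T=T_0\oplus M_x$ with $M_x$ multiplication by $x$ on $L^2[1,2]$. Then $0$ is an isolated spectral point, $\dim\ker T=1$, but the Riesz projection at $0$ is the identity on $\C\oplus L^2[0,1]$, which has infinite rank, so $0\in\sigma_{\mathrm{ess}}(T)$. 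Since the definition of essential spectrum in use is $\sigma_{\mathrm{ess}}(T)=\sigma(T)\setminus\sigma_{\mathrm{d}}(T)$ with $\sigma_{\mathrm{d}}$ the isolated eigenvalues having finite-rank Riesz projections, finite geometric multiplicity is not enough.

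What the paper does (and what your proof must also do) is to exploit the block formula for $R_\theta(z)=(H_\theta-z)^{-1}$: the matrix-valued $F_{P_0}(z)^{-1}$ has a Laurent expansion with a pole of finite order at $\mu$, and the singular part of each block of $R_\theta(z)$ — including the $\bP_0$-$\bP_0$ block — is a finite sum of terms each of which factors through the finite-dimensional space $P_0\cH$ (because the reduced resolvent $\overline R_\theta(z)$ is holomorphic near $\mu$). Extracting the residue via Cauchy's formula then exhibits $\bP_0 P_\theta(\mu)\bP_0$ as a finite sum of finite-rank operators, and the other three blocks of $P_\theta(\mu)$ are automatically finite rank because they carry a $P_0$ on one side. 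This Laurent-expansion computation is precisely the missing content; without it, the ``so'' in your last paragraph is unjustified.
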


\begin{proof}  By Proposition \ref{prp_windowNEW} there exist $R'',\rho>0$  such that for all $|\theta|<R''$ the closed operator $\bP_0 H_\theta\bP-z\bP_0$ is invertible on $\bP_0\mathcal{H}$ for all $z\in \cR:=\cR_\theta(e'/2,\rho)$. Define reduced resolvents
\begin{equation*}
\overline{R}_\theta(z):=\bigl(\bP_0 H_\theta\bP_0 -z\bP_0\bigr)^{-1}
\end{equation*}
on $\bP_0\cH$ for $z\in\cR$. Recall that $W_\theta$ is defined in (\ref{eq_W}). For $z\in \cR$ we can construct the Feshbach map on the finite dimensional subspace $P_0\cH$:
\begin{align*}
F_{P_0}(z)
&=
P_0(H_\theta-z)P_0 - P_0H_\theta\bP_0\overline{R}_\theta(z)\bP_0 H_\theta P_0\\
&=
P_0(W_\theta+\lambda_0-z)P_0 - P_0W_\theta\bP_0\overline{R}_\theta(z)\bP_0 W_\theta P_0.  
\end{align*}
Clearly, $F_{P_0}(z)$ is a finite rank operator, which can be interpreted as a matrix, and hence;
by isospectrality of the Feshbach reduction \cite{DJ,GrieHas_smoothFeshbach},
\begin{equation*}
\mu\in \sigma(H_\theta)\cap\cR  \Leftrightarrow \det\bigl(F_{P_0}(\mu)\bigr) =0.
\end{equation*}
Since for $|\re\mu-\lambda_0|< e'/2$ and $\im\mu$ large it holds that $\mu\not\in\sigma(H_\theta)$ (cf.~Proposition~\ref{prp_spec}), we conclude
by the Unique Continuation Theorem for holomorphic functions that the set  $\sigma(H_\theta)\cap \mathcal{R}$ is locally finite.
Note that $\mu\in\sigma(H_\theta)\cap\cR$ is necessarily an eigenvalue for $H_\theta$. 
In order to establish the theorem, it remains to prove that the Riesz projections pertaining to the
eigenvalues in $\cR$ are of finite rank. Let $\mu\in \cR\cap \sigma(H_\theta)$ and choose $r>0$, such that
$D\subset \cR\setminus\sigma(H_\theta)$, where $D = \{ z\in\C \,|\, 0 <|z-\mu|\leq r\}$ denotes a closed punctured disc.  

 The inverse of $F_{P_0}(z)$ for $z\in D$ has a Laurent expansion
\begin{equation*}
F_{P_0}(z)^{-1} = \sum_{k=1}^{N}B_{-k}(z-\mu)^{-k} +\sum_{k=0}^{\infty}B_k(z-\mu)^k
\end{equation*}
convergent in the punctured disc $D$. Here $N\geq 1$ and $\{B_k\}_{k=-N}^\infty$ denote linear operators on $P_0\cH$.
See \cite[Sect.~6.1]{RaSt}. Note that the inverse has no essential singularities since we are in finite dimension.

By \cite{DJ,GrieHas_smoothFeshbach}, for $z\in\cR\setminus \sigma(H_\theta)$, the inverse $R_\theta(z)$ of $H_\theta-z$ can be recovered from the inverse Feshbach operator 
and the reduced resolvent via the block decomposition
\begin{align*}
P_0 R_\theta(z)P_0 &=F_{P_0}(z)^{-1},\\
P_0R_\theta(z)\bP_0 &=-F_{P_0}(z)^{-1}P_0 W_\theta\bP_0\overline{R}_\theta(z),\\
\bP_0 R_\theta(z)P_0 &= -\overline{R}_\theta(z) \bP_0 W_\theta P_0 F_{P_0}(z)^{-1}, \\
\bP_0 R_\theta(z)\bP_0 &= \overline{R}_\theta(z) + \overline{R}_\theta(z)\bP_0 W_\theta P_0 F_{P_0}(z)^{-1} P_0 W_\theta\bP_0 \overline{R}_\theta(z).
\end{align*}
Note that the map $z\mapsto \overline{R}_\theta(z)$ is analytic in $\cR$, so the only singularities are those in $\sigma(H_\theta)$, coming from the inverse Feshbach operator.

Let $\gamma\colon [0,2\pi]\to \C$ be the closed curve $\gamma(t) = \mu + r e^{\ci t}$ parametrizing the (outer) boundary of $D$, encircling $\mu$. Recall the
construction of the Riesz projection
\begin{equation*}
P_\theta(\mu) = -\frac{1}{2\pi \ci}\int_{\gamma} R_\theta(z)\,\mathrm{d}z
\end{equation*}
associated with the eigenvalue $\mu$. The block decomposition of $R_\theta(z)$ induces a block decomposition of  $P_\theta(\mu)$ and the Riesz projection has finite rank, provided $\bP_0 P_\theta(\mu) \bP_0$ is of finite rank. To check this, we compute
\begin{align*}
&- \bP_0 P_\theta(\mu) \bP_0 \\
&\quad = \frac1{2\pi \ci}
\int_{\gamma} \left[\overline{R}_\theta(z) +\overline{R}_\theta(z)\bP_0 W_\theta P_0 F_{P_0}(z)^{-1} P_0 W_\theta \bP_0 \overline{R}_\theta(z) \right]\, \mathrm{d}z\\
&\quad =
\sum_{k=1}^{N} \frac1{2\pi \ci}\int_{\gamma}(z-\mu)^{-k}\overline{R}_\theta(z)\bP_0 W_\theta P_0 B_{-k} P_0 W_\theta \bP_0 \overline{R}_\theta(z)\, \mathrm{d}z\\
&\qquad+
 \frac1{2\pi \ci}\int_{\gamma}\sum_{k=0}^{\infty}(z-\mu)^k\overline{R}_\theta(z)\bP_0 W_\theta P_0 B_k P_0 W_\theta \bP_0 \overline{R}_\theta(z)\, \mathrm{d}z,
\end{align*}
where we have used that the function $\cR\ni z\mapsto \overline{R}_\theta(z)$ is analytic. Moreover, the integral in the last line of the equation above is carried out over an analytic function, once again, and thus equals $0$. The remaining $N$ singular integrals can be evaluated by Cauchy's Integral Formula:
\begin{align*}
&\frac1{2\pi \ci}\int_{\gamma}(z-\mu)^{-k}\overline{R}_\theta(z)\bP_0 W_\theta P_0 B_{-k} P_0 W_\theta \bP_0 \overline{R}_\theta(z) \,\mathrm{d}z\\
&\quad =
\frac1{(k-1)!}\frac{\mathrm{d}^{k-1}}{\mathrm{d}z^{k-1}} \overline{R}_\theta(z)\bP_0 W_\theta P_0 B_{-k} P_0 W_\theta \bP_0 \overline{R}_\theta(z)\bigg|_{z=\mu}\\
&\quad =\frac1{(k-1)!}
\sum_{j=0}^{k-1}\binom{k-1}{j}(-1)^{k-1} j!(k-1-j)!\\
& \qquad \qquad \times \overline{R}_\theta(\mu)^{1+j}\bP_0 W_\theta P_0 B_{-k} P_0 W_\theta \bP_0 \overline{R}_\theta(\mu)^{k-j}\\
&\quad =
(-1)^{k-1}  \sum_{j=0}^{k-1} \overline{R}_\theta(\mu)^{j+1}\bP_0 W_\theta P_0 B_{-k} P_0 W_\theta \bP_0 \overline{R}_\theta(\mu)^{k-j}.
\end{align*}
Since each term in the sum above is a finite rank operator, we conclude that $\bP_0 P_\theta(\mu) \bP_0$ is of finite rank.

Since $\sigma(H)\cap \cR$ is locally finite  and all the associated Riesz projections have
finite rank, we have shown that $\sigma_{\mathrm{ess}}(H_\theta)\cap \cR = \emptyset$. This completes the proof. 
\end{proof}

Note that 
\[
D(U(\theta)) = \Bigl\{\psi\in\cH \,\Big|\, \int_{\R} e^{2\Im(\theta)x} \,\mathrm{d}E_\psi(x)<\infty\Bigr\},
\]
where $E_\psi$ is the spectral measure for $A$ associated with the state $\psi$. Motivated by this we abbreviate for $r\geq 0$:
\[
D_r(A) =  \Bigr\{\psi\in\cH \,\Big|\, \int_{\R} e^{2 r |x|} \,\mathrm{d}E_\psi(x)<\infty\Bigr\}.
\]
 Having established Theorem \ref{thm_ess_spec}, we may conclude the following theorem by invoking a general result
of Hunziker and Sigal \cite[Theorem~5.2]{HS}.

\begin{thm}\label{thm_HuSi_appl}
Assume Conditions \ref{cond_g} and \ref{cond_Mourre_estimateNEW}. Let $\theta\in B_{R''}^\C(0)\cap \C^+$.
Then the dilated Hamiltonian $H_\theta$ has an isolated eigenvalue at $\lambda_0$. Denote by  $P_\theta$ the associated  Riesz projection. The following statements hold true:
\begin{enumerate}
\item $\Ran(P_\theta)$ is the eigenspace of $H_\theta$ pertaining to the eigenvalue $\lambda_0$.
\item $P_0 = U(-\theta) P_\theta U(\theta)$ as a form identity on $D_{|\Im(\theta)|}(A)$.
\item $\Rank(P_0) = \Rank(P_\theta)$.
\item Let $r<R''$. Then $\Ran(P_0) \subset D_r(A)$.
\end{enumerate}
\end{thm}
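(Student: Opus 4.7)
The plan is to establish the theorem by direct reduction to \cite[Theorem~5.2]{HS}, using Theorem~\ref{thm_ess_spec} to supply the crucial geometric input. For $\theta\in B_{R''}^{\C}(0)\cap\C^+$, I would assemble two ingredients already in place: (i) Proposition~\ref{prp_equiv} exhibits $\{H_\theta\}_{\theta\in B_{R'}^{\C}(0)}$ as an analytic family of Type (A) with common domain $D(H)$, extending $U(\theta)HU(-\theta)$ from the real axis; and (ii) Theorem~\ref{thm_ess_spec} gives $\sigma_{\mathrm{ess}}(H_\theta)\cap\cR_\theta(e'/2,\rho)=\emptyset$, which isolates $\lambda_0$ in $\sigma(H_\theta)$ since $\lambda_0$ lies in the interior of $\cR_\theta(e'/2,\rho)$. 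Together these are precisely the standing hypotheses of HS, and their Theorem~5.2 then delivers all four conclusions: (1) $\Ran(P_\theta)$ equals the geometric eigenspace $\ker(H_\theta-\lambda_0)$; (2) the intertwining form identity $P_0 = U(-\theta)P_\theta U(\theta)$ on $D_{|\Im(\theta)|}(A)$; (3) rank equality, as a consequence of the intertwining and the injectivity of $U(\theta)$ on the finite-dimensional range; and (4) the analytic-vector containment $\Ran(P_0)\subset D_r(A)$ for $r<R''$.

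To cover $\theta\in B_{R''}^{\C}(0)\cap\C^-$, I would invoke the conjugation $\conj$ from Condition~\ref{cond_Mourre_estimateNEW}.\ref{cond_g_conj}. By Remark~\ref{rem:barH}, $\conj H_\theta\conj = H_{\bar\theta}$ on $D(H)$, so spectral data of $H_\theta$ and $H_{\bar\theta}$ are related by complex conjugation and Riesz projections satisfy $P_\theta = \conj P_{\bar\theta}\conj$. Since $\bar\theta\in B_{R''}^{\C}(0)\cap\C^+$, applying the upper-half-plane result to $H_{\bar\theta}$ and then conjugating yields statements (1) and (3) verbatim (using $\conj\lambda_0=\lambda_0$). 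For the intertwining in (2), the additional relation $\conj A = -A\conj$ gives $\conj U(\theta) = U(\bar\theta)\conj$ on analytic vectors, so the form identity transfers; and (4) concerns $P_0$ alone, so it holds regardless of the sign of $\Im(\theta)$.

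The principal subtlety I would foresee is statement (4), the claim that every eigenvector of $H$ at $\lambda_0$ is automatically an analytic vector for $A$ of radius $r$, for any $r<R''$. I would not reprove this from scratch but invoke it directly from HS. The underlying mechanism is that, since $\lambda_0$ is isolated in $\sigma(H_\theta)$ throughout $B_{R''}^{\C}(0)\setminus\R$, the map $\theta\mapsto P_\theta$ extends analytically to the full disc via the Riesz integral $P_\theta = (2\pi\ci)^{-1}\int_{\gamma}(H_\theta-z)^{-1}\mathrm{d}z$, and the real-$\theta$ intertwining $U(\theta)\Ran(P_0)\subset\Ran(P_\theta)$ then forces $\theta\mapsto U(\theta)\psi$ to extend analytically to $B_{R''}^{\C}(0)$ for every $\psi\in\Ran(P_0)$, which is precisely the condition $\psi\in D_r(A)$ for $r<R''$. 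Beyond invoking HS, the only care I need to exercise is in matching our notation and constants (in particular that $R''$ and $\rho$ from Proposition~\ref{prp_windowNEW} give a disc on which HS's analytic-continuation argument is valid), and in accounting for the lower half-plane via the conjugation as described above.
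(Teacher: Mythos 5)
Your approach matches the paper's exactly: the paper's entire proof is the one-line remark that Theorem~\ref{thm_ess_spec} supplies the hypothesis needed to invoke \cite[Theorem~5.2]{HS}, which delivers all four conclusions. Your elaboration is sound — in particular, your use of Remark~\ref{rem:barH} and $\conj A = -A\conj$ to transfer the result from $\C^+$ to $\C^-$ fills in a step the paper leaves implicit (Theorem~\ref{thm_ess_spec} is only stated for $\theta\in\C^+$, so something like your conjugation argument is genuinely needed for the lower half-plane). One small caveat worth flagging: the theorem as stated allows real $\theta\neq 0$, for which $H_\theta$ is unitarily equivalent to $H$ and $\lambda_0$ cannot be isolated — the intended hypothesis is $\Im\theta\neq 0$, as your proof correctly assumes throughout.
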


\begin{rem} The above theorem implies that eigenfunctions pertaining to the eigenvalue $\lambda_0$ 
are analytic vectors for the operator $A$. A result previously established by brute force in \cite{MoWe} under a slightly weaker condition. % Here, however, we needed  Condition~\ref{cond_Mourre_estimateNEW}.\ref{cond_g_conj}, which played no role for the method employed in \cite{MoWe}.
\end{rem}

\subsection{Analytic Perturbation Theory}\label{Subsec-AnPert}

\begin{cond}\label{cond_analytic} Let $\lambda_0\in\R$,
  $\xi_0\in\R^d$, and $U\subset\mdsr^d$ an open (connected)
  neighborhood of $\xi_0$, $A$ a self-adjoint operator on $\cH$ and
  $\{H(\xi)\}_{\xi\in U}$ a family of self-adjoint operators on $\cH$.
\begin{enumerate}
\item\label{item-AnaCond-1} $D(H(\xi)) = D(H(\xi_0)) =: \dom$ for all $\xi\in U$.
\item\label{item-AnaCond-2} For all $\xi$ in $U$, the operator $H(\xi)$ satisfies Condition \ref{cond_g} with the same constants $R$ and $M$.
\item\label{item-AnaCond-3} The triple $\lambda_0$, $A$ and $H(\xi_0)$ satisfies Condition \ref{cond_Mourre_estimateNEW}.
\item\label{cond_g_fam} There exists $\theta_0\in B_R^\C(0)$ with $\Im(\theta_0) > 0$, such that the map
$\xi\to H_{\theta_0}(\xi)$ extends from $U$  to an analytic family of Type (A) defined for $\xi\in U_\C\subset \C^d$,  an open (connected) set with $U\subset U_\C\cap\R$.
\end{enumerate}
\end{cond}

\begin{rem} Suppose one strengthens 
Condition~\ref{cond_analytic} and assumes that $\xi\to H_\theta(\xi)$ extends to an analytic family of Type (A)
not just for one $\theta_0$ but for all $\theta$ in a complex disc of radius $\Theta< R'$ around $0$. 
Then one may use Morera's theorem to conclude that for any $\psi\in\dom$ and $n$, we have
\[
\ci^n\ad_A^n(H(\xi))\psi = \frac{(-1)^n}{2\pi \ci} \int_{|\theta|=\Theta/2} n! \theta ^{-n-1} H_\theta(\xi)\psi \,\mathrm{d}\theta,
\] 
a priori for real $\xi$, but since the right-hand side extends analytically to  $\xi$ in a complex neighborhood of $\xi_0$,
so does the left-hand side. This will in particular permit one to conclude that also for complex $\xi$ does the
closed operator  $H(\xi)$ iteratively admit commutators with $A$ of arbitrary order. (Note that $H(\bar{\xi})\subset H(\xi)^*$, by unique continuation.) Furthermore, the iterated commutators
must coincide (strongly) with the analytic extension from real $\xi$ of $\ad_A^n(H(\xi))\psi$, obtained above.
\end{rem}

Recall the notation $\lambda_0$ for the eigenvalue of $H(\xi_0)$ with eigenprojection $P_0$. By Theorem~\ref{thm_HuSi_appl},
we know that $\lambda_0$ is an isolated eigenvalue of $H_{\theta_0}(\xi_0)$ with
finite rank eigenprojection $P_{\theta_0}$.  Denote by $n_0$ the common rank of $P_0$ and $P_{\theta_0}$.

Fix $0<\rho'<\rho$, such that 
\begin{equation}\label{rhoprime}
\sigma\bigl(H_{\theta_0}(\xi_0)\bigr)\cap B_{2\rho'}^\C(\lambda_0) = \{\lambda_0\}.
\end{equation}

\begin{rem}\label{rem:HS-lu} We may choose $r'>0$, such that
for all $\xi\in B_{r'}^{\R^d}(\xi_0)$, we have 
\[
\sigma(H_{\theta_0}(\xi))\cap B^\C_{\rho'}(\lambda_0) = \sigma_{\mathrm{pp}}(H_{\theta_0}(\xi))\cap B^\C_{\rho'}(\lambda_0)
\]
 and the total algebraic multiplicity of the eigenvalues in $B^\C_{\rho'}(\lambda_0)$ equals $n_0$ (\!\!\cite[Sect.~IV.4]{K}).
 
By \cite[Theorem~5.2]{HS}, we may now conclude -- just as we did with Theorem~\ref{thm_HuSi_appl} -- that for all $\xi\in B_{r'}^{\R^d}(\xi_0)$:
\begin{equation}\label{eq:RotBack}
\sigma_{\mathrm{pp}}(H(\xi))\cap (\lambda_0-\rho',\lambda_0+\rho') = \sigma(H_{\theta_0}(\xi))\cap  (\lambda_0-\rho',\lambda_0+\rho').
\end{equation}
\end{rem}

If the perturbation parameter $\xi$ is one-dimensional, we may in light of Theorem~\ref{thm_ess_spec} and Condition~\ref{cond_analytic}, invoke 
Kato, in the form of \cite[Theorem~VII.1.8]{K}, and conclude the following theorem.

\begin{thm}\label{thm_kato} Suppose Condition~\ref{cond_analytic} and that $d=1$. There exist 
\begin{itemize}
\item $r>0$ with $(\xi_0-r,\xi_0+r)\subset U$,
\item natural numbers $0\leq m_\pm\leq n_0$ and $n_1^\pm,\dotsc,n_{m_\pm}^\pm\geq 1$ 
with $n_1^\pm+\cdots +n_{m_\pm}^\pm \leq n_0$,
\item   real analytic functions
$\lambda_1^\pm,\dotsc,\lambda_{m_\pm}^\pm\colon I_\pm \to \R$, where $I_- = (\xi_0-r,\xi_0)$ and $I_+=(\xi_0,\xi_0+r)$, satisfying $\lambda_i^\pm(\xi)\neq \lambda_j^\pm(\xi)$ for all $1\leq i < j\leq m_\pm$ and $\xi\in I_\pm$,
\end{itemize}
 such that (recalling $\rho'$ from \eqref{rhoprime})
\begin{enumerate}
\item for any $\xi\in I_\pm$, we have
$\sigma_\mathrm{pp}(H(\xi))\cap (\lambda_0-\rho',\lambda_0+\rho') = \{\lambda_1^\pm(\xi),\dotsc,\lambda^\pm_{m_\pm}(\xi)\}$,
\item $\forall j=1,\dotsc,m_\pm$, we have $\lim_{I_\pm\ni \xi\to\xi_0} \lambda_j^\pm(\xi) = \lambda_0$,
\item the eigenvalue branches
$I_\pm\ni\xi\to \lambda_\pm(\xi)$ have constant algebraic and geometric multiplicity $n_j^\pm$.
\item\label{Ana-d1} Each eigenvalue branch $\lambda^\pm_j\colon I_\pm\to \R$ can be expanded in a convergent Puiseux series near $\xi_0$, that is; a convergent power series expansion in $(\pm(\xi-\xi_0))^{1/\ell}$ for some integer $\ell\geq 1$.
\end{enumerate}
\end{thm}

\begin{rem} If $\lambda_0$ is an isolated eigenvalue, then we know from Kato \cite{K} that the perturbed eigenvalues $\lambda^\pm_i$ are analytic at $\xi=\xi_0$, that is; in \ref{Ana-d1} one may choose $\ell=1$ (and the continuation through $\xi_0$ yields one of the other branches $\lambda^\mp_j$).
We cannot exclude algebraic singularities at $\xi_0$ since the eigenvalues come from $H_{\theta_0}(\xi)$, which may not be a normal operator. We do not know of an example of an embedded eigenvalue where one cannot choose $\ell=1$.
\end{rem}

In the case of multiple parameters, the structure of the point spectrum becomes more complicated, and the right setting here
is that of semi-analytic sets, the definition of which is recalled in Appendix~\ref{sec-semi-sub}. More precisely, we are interested in the analytic structure of the set
\begin{equation*}
\Sigma_\mathrm{pp} := \bigl\{ (\lambda,\xi)\in \R\times U\, \big|\,
\lambda\in \sigma_\mathrm{pp}(H(\xi)) \bigr\}.
\end{equation*}
In the following section, we explore an example where $\xi$ is a total momentum variable, in which case $\Sigma_\mathrm{pp}$ is the energy-momentum point spectrum.
The reader may wish to consult Definitions~\ref{def-semi-sub}.\ref{item-def-OW} and~\ref{def-semi-sub}.\ref{item-def-semi} before proceeding
to the main theorem of this subsection:
%\begin{defi} \begin{enumerate}
%\item Let $W\subset \R^\nu$ be an open set. We write $\mathcal O(W)$ for the smallest ring\footnote{collection of sets %stable under complement as well as under finite intersections and unions.} of subsets of $W$  
%containing sets of the form $\{y\in W \, | \, f(y)>0\}$ and  $\{y\in W \, | \, f(y)=0\}$ where $f$ ranges over real %analytic functions $f\colon W\to \R$.
%\item Let $M\subset \R^\nu$ be an open set.
%A subset $\Sigma\subset M$ is called a semi-analytic subset of $M$
%if: for any $x\in M$, there exists an open neighborhood $W\subset M$ of $x$, such that $\Sigma\cap W\in \mathcal O(W)$.
%\end{enumerate}
%\end{defi} 

\begin{thm}\label{thm_semianalytic} Suppose Condition~\ref{cond_analytic}. There exists $r>0$ and $\rho>0$, such that with $W = (\lambda_0-\rho',\lambda_0+\rho')\times B_r^{\R^d}(\xi_0)$, we have that
$\Sigma_\mathrm{pp}\cap W \in\mathcal O(W)$. In particular, $\Sigma_\mathrm{pp}\cap W$ is a semi-analytic subset of $W$.
\end{thm}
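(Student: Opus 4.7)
The strategy is to represent $\Sigma_\mathrm{pp}\cap W$ as the common zero set of two real-analytic functions on $W$, obtained by splitting the characteristic polynomial of an analytically varying finite-dimensional matrix into its real and imaginary parts. In other words, we reduce the problem to a finite-dimensional (holomorphic) linear algebra calculation parametrised by $\xi$, then observe that forcing a complex equation to hold at a real $\lambda$ is the same as two real-analytic equations.

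First, shrink $r$ so that $B_r^{\R^d}(\xi_0)\subset U$ and Remark~\ref{rem:HS-lu} applies, and take $\rho\in (0,\rho']$. Setting $W=(\lambda_0-\rho,\lambda_0+\rho)\times B_r^{\R^d}(\xi_0)$, we obtain
\[
\Sigma_\mathrm{pp}\cap W=\bigl\{(\lambda,\xi)\in W \,\big|\, \lambda\in\sigma(H_{\theta_0}(\xi))\bigr\},
\]
and for every real $\xi\in B_r^{\R^d}(\xi_0)$ the spectrum of $H_{\theta_0}(\xi)$ inside $B_{\rho'}^\C(\lambda_0)$ is pure point and has total algebraic multiplicity $n_0$.

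Second, I pass to the complex side using Condition~\ref{cond_analytic}.\ref{cond_g_fam}. By construction, the circle $\gamma=\partial B_{\rho'}^\C(\lambda_0)$ lies in $\rho(H_{\theta_0}(\xi_0))$; standard Type~(A) perturbation theory (cf.~\cite[Ch.~VII]{K}, applied separately in each of the $d$ complex variables and glued by Hartogs) yields an open neighborhood $V\subset U_\C$ of $\xi_0$ on which $\gamma\subset\rho(H_{\theta_0}(\xi))$ and the Riesz projection
\[
P(\xi)=\frac{1}{2\pi\ci}\oint_{\gamma}(z-H_{\theta_0}(\xi))^{-1}\,\mathrm{d}z
\]
is holomorphic in $\xi\in V$ with constant rank $n_0$. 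Kato's analytic transformation function supplies a holomorphic invertible $U(\xi)\in\mathcal B(\cH)$ with $U(\xi_0)=1$ and $U(\xi)P(\xi_0)=P(\xi)U(\xi)$, so that
\[
M(\xi):=P(\xi_0)U(\xi)^{-1}H_{\theta_0}(\xi)U(\xi)P(\xi_0)
\]
is a holomorphic family of endomorphisms of the fixed finite-dimensional subspace $\mathrm{Ran}(P(\xi_0))$ satisfying $\sigma(M(\xi))=\sigma(H_{\theta_0}(\xi))\cap B_{\rho'}^\C(\lambda_0)$.

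Third, set $p(\xi,\lambda)=\det\bigl(\lambda P(\xi_0)-M(\xi)\bigr)$: a monic polynomial of degree $n_0$ in $\lambda$ whose coefficients are holomorphic on $V$. After shrinking $r$ so that $B_r^{\R^d}(\xi_0)\subset V$, these coefficients, restricted to real $\xi$, are complex-valued real-analytic functions; hence $p$ is a complex-valued real-analytic function on $W$. Combining with the first step, for $(\lambda,\xi)\in W$
\[
(\lambda,\xi)\in\Sigma_\mathrm{pp}\cap W\;\Longleftrightarrow\;p(\xi,\lambda)=0\;\Longleftrightarrow\;\Re p(\xi,\lambda)=\Im p(\xi,\lambda)=0,
\]
which presents $\Sigma_\mathrm{pp}\cap W$ as the vanishing locus of two real-analytic scalar functions on $W$. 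This is a basic closed element of $\mathcal O(W)$ in the sense of Definition~\ref{def-semi-sub}.\ref{item-def-OW}, and therefore semi-analytic by Definition~\ref{def-semi-sub}.\ref{item-def-semi}.

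The main obstacle is the second step: ensuring that $P(\xi)$, $U(\xi)$, and thus $M(\xi)$ extend holomorphically to a full complex neighborhood of $\xi_0$ rather than merely being separately analytic in each coordinate. This is handled by applying Kato's one-variable constructions coordinate by coordinate (which is exactly what the excerpt's definition of ``analytic of Type~(A)'' in several variables permits), combined with Hartogs' theorem applied to scalar matrix entries of $M(\xi)$. Once the finite-dimensional holomorphic matrix $M(\xi)$ is in hand, everything else is linear algebra plus the trivial observation that a single complex equation evaluated at a real $\lambda$ is equivalent to two real-analytic equations.
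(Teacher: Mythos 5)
Your proposal is correct and follows essentially the same strategy as the paper: pass to the deformed operator $H_{\theta_0}(\xi)$, restrict to the isolated eigenvalue cluster near $\lambda_0$ via the Riesz projection $Q(\xi)$, trivialize the holomorphically varying $n_0$-dimensional eigenspace $\Ran Q(\xi)$ over a fixed reference space, and present $\Sigma_\mathrm{pp}\cap W$ as the common zero set of $\Re\det(T(\xi)-\lambda)$ and $\Im\det(T(\xi)-\lambda)$. The reduction to a matrix determinant and the real/imaginary split are identical.

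The only genuine difference is the mechanism used to identify $\Ran Q(\xi)$ with $\Ran Q(\xi_0)$. You invoke Kato's analytic transformation function $U(\xi)$ (the solution of the commutator ODE), and you flag --- correctly --- that its joint holomorphy in $d\ge2$ complex parameters is not automatic. The paper instead follows G\'erard--Nier and uses the bare restriction $\Theta(\xi):=Q(\xi)|_{\Ran Q(\xi_0)}$, which is an isomorphism onto $\Ran Q(\xi)$ whenever $\|Q(\xi)-Q(\xi_0)\|\le 1/2$ and is manifestly (jointly) holomorphic since $Q(\xi)$ is, with no ODE, no path-independence issue, and no Hartogs argument. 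Your workaround (apply Kato coordinate by coordinate, then Hartogs) is salvageable but needs a small repair: Hartogs' theorem concerns scalar functions, so one should apply it to the matrix elements $\langle\phi,U(\xi)\psi\rangle$ and then upgrade to operator-norm joint holomorphy via local uniform boundedness; even simpler, one can use the explicit algebraic Sz.-Nagy/Kato formula $U(\xi)=\bigl[1-(Q(\xi)-Q(\xi_0))^2\bigr]^{-1/2}\bigl[Q(\xi)Q(\xi_0)+(1-Q(\xi))(1-Q(\xi_0))\bigr]$, which is given by a norm-convergent power series in $Q(\xi)-Q(\xi_0)$ and is therefore jointly holomorphic whenever $Q(\xi)$ is. Either repair gives the same $M(\xi)$, similar to the paper's $T(\xi)$, and from there the argument is identical. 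In short: correct, same route, with a slightly more elaborate (but fixable) construction of the trivialization than the one in the paper.
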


\begin{proof} Let $r'$ be chosen as in Remark~\ref{rem:HS-lu}.
 The projection onto the generalized eigenspace is the Riesz projection,
\[
P(\xi) \equiv P_{\theta_0}(\lambda_0;\xi) = \frac{1}{2\pi \ci}\int_{|z-\lambda_0| = \rho'} (z-H_{\theta_0}(\xi))^{-1}\,\mathrm{d}z,
\]
which depends analytically on $\xi\in B_{r'}^\C(\xi_0)$. Write $V(\xi) = \Ran(P(\xi))$ for the generalized eigenspace 
of dimension $n_0$ and $\Pi\colon \C^{n_0}\to \Pi(\xi_0)$ a linear isomorphism identifying the unperturbed eigenspace with $\C^{n_0}$. Following \cite{GerNier}, we choose $r\in(0,r']$ such that $\|P(\xi)-P(\xi_0)\|\leq 1/2$ for $|\xi-\xi_0|\leq r$. Then $\Theta(\xi) := P(\xi)|_{V(\xi_0)}$ defines 
a linear isomorphism from $V(\xi_0)$ onto $V(\xi)$ with
inverse $\Theta^{-1}(\xi) = (1+ P(\xi_0)(P(\xi)-P(\xi_0)))\inv P(\xi_0)$ and 
\[
\forall \xi\in B_{r}^{\C^d}(\xi_0):\quad T(\xi) = \Pi\inv \Theta(\xi)^{-1} H_{\theta_0}(\xi) \Theta(\xi) \Pi
\]
defines a family of linear operators on $\C^d$ depending analytically on $\xi$ and satisfying that
$\sigma(T(\xi)) = \sigma(H_{\theta_0}(\xi))\cap B^\C_{\rho'}(\lambda_0)$. Hence, recalling \eqref{eq:RotBack},
\[
\Sigma_\mathrm{pp}\cap W = \bigl\{ (\lambda,\xi)\in W \, |\,  \det(T(\xi)-\lambda) = 0\bigr\}.
\]
Here $W$ is defined in the statement of the theorem.
Split into real and imaginary parts  $\det(T(\xi)-\lambda) = u(\lambda,\xi) + \ci v(\lambda,\xi)$, to obtain two real analytic real-valued functions. Then 
\begin{equation*}
\Sigma_\mathrm{pp}\cap W=\bigl\{(\lambda,\xi)\in  W \, \big| \, u(\lambda,\xi)=0\bigr\}\cap\bigl\{(\lambda,\xi)\in W \, \big| \, v(\lambda,\xi)=0  \bigr\}.
\end{equation*}
Since the right-hand side is an element of $\mathcal O(W)$, we are done.
\end{proof}

\begin{rem} In the one dimensional setup, the Puiseux expansion of the eigenvalue branches in particular ensures that 
two distinct branches separate as $c|\xi-\xi_0|^{p/q}$ for some $c\neq 0$ and some natural numbers $p,q$. (This behavior is violated in Example~\ref{ex-Simon})  It is not apparent in the higher dimensional setup that something like this holds true. There is however a remnant in the theory of semi-analytic sets called the Lojasiewicz inequality, which can be interpreted as a statement that two distinct strata that meet at a common boundary stratum, does so with an algebraic lower bound in the distance to the boundary. See \cite[Sect.~7]{BM}.
\end{rem}

\section{Example}\label{sec_ex}

We introduce a two-particle Hamiltonian on $\mathrm{L}^2(\mdsr^d)$ by
\begin{equation*}
H_V '=\omega_1(p_1)+\omega_2(p_2) + V(x_1-x_2),
\end{equation*}
where $p_i =- \mathrm{i}\nabla_{x_i}$, $x_i\in\mathds{R}^d$.
 
 We impose the following set of conditions on $\omega_1,\omega_2$ and $V$:

\begin{cond}[Properties of $\omega_1, \omega_2$ and V]\label{cond_ex}
\leavevmode
\begin{enumerate}
\item\label{item-cond_ex-1} The $\omega_i$'s are real-valued, real analytic functions on $\R^d$ and there exists $\tR>0$, such that the $\omega_i$'s extend to analytic functions in the $d$-dimensional strip 
\begin{equation*}
S_{2\tR}^d:=\bigl\{(z_1,\dots, z_d)\in\C^d\,\big|\,|\im (z_i)| <  2\tR, i=1,\dots, d\bigr\}.
\end{equation*}
We denote the analytic continuations of these functions by the same symbols.
%\label{cond_ex_v_omega}
\item\label{item-cond_ex2} There exist real numbers $s_2\geq s_1 > 0$ and a constant $\tC>0$ such that  
\begin{equation}\label{eq_omega}
|\partial^\alpha\omega_j(k)|\leq \tC \langle k\rangle^{s_j} ,\quad|\omega_j(k)|\geq \frac1{\tC}\langle k\rangle^{s_j} - \tC
\end{equation}
for every multi-index $\alpha\in\N_0^d$, $|\alpha|\leq 1$ and all $k\in S^d_{2\tR}$.
\label{cond_ex_omega_est}\label{cond_ex_bnd_on_v}
%\item 
%The analytic continuation of $v_{\xi}$ is bounded on the $d$-dimensional strip $S_R$, that is 
%\begin{align}
%\sup_{z\in\mathcal{S}_R}|v_{\xi}(z)| \leq M < \infty
%\label{eq_bnd_on_v}
%\end{align}
%for some $M>0$.
%\label{cond_ex_bnd_on_v}
% \item The analytic continuations of the $\omega_j$ still satisfy the bounds (\ref{eq_omega}) on $\mathcal{S}_R$.
%
%\item There exists $q>0$ such that the analytic extensions of $\nabla_k\omega_\xi$ and $v_\xi$ to $\mathcal{S}_R$ satisfy $\sup_{z\in\mathcal{S}_R} |v_\xi(z)\cdot \nabla_k \omega_\xi(z)| \leq q$.
%\label{cond_ex_v_dot_grad}
\item\label{cond_ex_V} Let $d' = 2[d/2]+2$. We suppose that $V\in C^{d'}(\R^d)$ and there exists $a>0$, such that for all $\alpha\in\N_0^d$ with $|\alpha|\leq d'$, we have $\sup_{x\in \R^d}\mathrm{e}^{a|x|}|\partial^\alpha_x V(x)|<\infty$.
%) For all $\alpha\in\mathds{N}_0^d$ with $|\alpha|\leq 1$ we have that $\widehat{x^\alpha V}\in\mathrm{L^2}(\mdsr^d)$.
%\item The Fourier transform $\widehat{V}$ is continuous for $k\neq 0$ and there exist $a > 0$ and $C_V>0$ such that %$\widehat{V}$ extends to an analytic function in $S^d_a\setminus\{0\}$ and the extension satisfies the estimate
%\[
%\forall k\in S^d_a:\qquad  \bigl|\hV(k)\bigr| \leq C_V |k|^{-d}.
%\]
\end{enumerate}
\end{cond}

 Conjugating with the Fourier transform, we see that $H_V'$ is unitarily equivalent to
\begin{equation*}
H_V = \omega_1(k_1) + \omega_2(k_2) + t_V,
\end{equation*}
where $t_V$ is the partial convolution operator
\begin{equation*}
(t_Vf)(k_1,k_2):= \int_{\R^d} \hV(u)f(k_1-u,k_2+u)\, \mathrm{d}u
\end{equation*}
and
\[
\hV(k) = (2\pi)^{-d/2} \int_{\R^d} \mathrm{e}^{-\ci k \cdot x} V(x) \, \mathrm{d} x.
\]
In order to fibrate $H_V$ w.r.t. total momentum $\xi = k_1+k_2$, we introduce a unitary operator $I \colon L^2(\R^d\times\R^d)\to L^2(\R^d;L^2(\R^d))$ by setting
\[
(If)(\xi) = f(\xi-\cdot,\cdot).
\]
%Note that, identifying $\mathrm{L}^2(\mdsr^d\times \mdsr^d) = \int^{\oplus}_{\mdsr^d}\mathrm{L}^2(\mdsr^d)$, the %operator $H_V$ can be fibered w.r.t. $\xi\in\mdsr^d$ by the unitary transform $f(\xi,k)\mapsto f_{\xi}(k)=f(-\xi-k,k)$, %where all fibers are equipped with the Hilbert space $\mathcal{H}_{\xi}=\mathrm{L}^2(\mdsr^d)$. 
Under this transformation, we find that
the Hamiltonian takes the form
\begin{equation*}
I H_V I^* = \int_{\R^d}^{\oplus} H(\xi)\,\mathrm{d} \xi, \quad \textup{where} \quad H(\xi)=\omega_{\xi} + T_{V},
\end{equation*}
and
\begin{equation}\label{Eq-omegaxi}
\omega_{\xi}(k)=\omega_1(\xi-k) + \omega_2(k), \qquad
(T_Vf)(k)=(\widecheck{V}*f)(k).
\end{equation}
Here $\wcV(k)=\hV(-k)$ is the inverse Fourier transform of $V$ and $\wcV *f$ denotes the convolution product. 

We are now in a position to formulate our main result of this section. We introduce the joint
energy-momentum point spectrum
\[
\Sigma_\mathrm{pp} = \bigl\{(\lambda,\xi)\in\R\times\R^d \, \big|\, \lambda\in \Sigma_\mathrm{pp}(\xi)\bigr\},\quad 
\Sigma_\mathrm{pp}(\xi) = \sigma_\mathrm{pp}(H(\xi))
\]
and the energy-momentum threshold set
\[
\begin{aligned}
\mathcal T & = \bigl\{ (\lambda,\xi)\in\R\times\R^d \, \big|\, \lambda\in\mathcal T(\xi)\bigr\},\\
\mathcal T(\xi) & = \bigl\{ \lambda\in\R\,\big|\, \exists k\in\R^d: \ \omega_\xi(k) = \lambda \textup{ and } \nabla_k \omega_\xi(k) = 0\bigr\}.
\end{aligned}
\]
The main result of this section is the following.

\begin{thm}\label{thm-ex} Suppose Condition~\ref{cond_ex}. Then we have
\begin{enumerate}
\item\label{item-thm-ex-1}  $\mathcal T$ is a closed sub-analytic subset of $\R\times\R^d$.
\item\label{item-thm-ex-2} For each $\xi\in \R^d$, the set $\mathcal T(\xi)$ is a locally finite subset of $\R$.
\item\label{item-semian-ex}  $\Sigma_{\mathrm{pp}}\setminus \mathcal T$ is a semi-analytic subset of $(\R\times\R^d)\setminus \mathcal T$.
\item\label{item-thm-ex-4} For each $\xi \in \R^d$, the set  $\Sigma_{\mathrm{pp}}(\xi)\setminus \mathcal T(\xi)$ is a locally finite subset of $\R\setminus \mathcal T(\xi)$. 
\end{enumerate}
\end{thm}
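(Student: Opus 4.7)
The four items split into two tiers: \ref{item-thm-ex-1} and \ref{item-thm-ex-2} are settled by classical real analytic geometry applied to $\omega_\xi(k)=\omega_1(\xi-k)+\omega_2(k)$, while \ref{item-semian-ex} and \ref{item-thm-ex-4} follow from local applications of Theorems~\ref{thm_semianalytic} and~\ref{thm_ess_spec}, respectively. For \ref{item-thm-ex-1}, the set $Z=\{(\lambda,\xi,k):\omega_\xi(k)=\lambda,\,\nabla_k\omega_\xi(k)=0\}$ is a real analytic, in particular sub-analytic, subset of $\R\times\R^d\times\R^d$, and $\mathcal T$ is its projection onto the $(\lambda,\xi)$-factor. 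The coercivity in Condition~\ref{cond_ex}.\ref{item-cond_ex2} makes this projection proper over compacta, so $\mathcal T$ is closed sub-analytic. For \ref{item-thm-ex-2}, fix $\xi$ and consider the critical set $C_\xi=\{\nabla_k\omega_\xi=0\}$; coercivity makes $C_\xi\cap\omega_\xi^{-1}([-\Lambda,\Lambda])$ compact, and a compact real analytic set has only finitely many connected components. On each connected component of a real analytic stratification of $C_\xi$ the differential of $\omega_\xi$ vanishes, so $\omega_\xi$ is constant there, whence $\mathcal T(\xi)\cap[-\Lambda,\Lambda]$ is finite.

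For \ref{item-semian-ex}, we verify Condition~\ref{cond_analytic} at each $(\lambda_0,\xi_0)\in\Sigma_\mathrm{pp}\setminus\mathcal T$ and apply Theorem~\ref{thm_semianalytic}. The non-threshold assumption ensures $\nabla_k\omega_{\xi_0}\neq 0$ on the compact shell $\omega_{\xi_0}^{-1}([\lambda_0-\kappa,\lambda_0+\kappa])$ for some $\kappa>0$. Pick a bounded real analytic vector field $v\colon\R^d\to\R^d$ extending analytically into $S_{2\tR}^d$ (for instance a Gaussian-truncated pseudo-gradient of $\omega_{\xi_0}$) with $v\cdot\nabla_k\omega_{\xi_0}\geq e_0>0$ on this shell, and set
\begin{equation*}
A=\tfrac{1}{2}\bigl((-\ci\nabla_k)\cdot v(k)+v(k)\cdot(-\ci\nabla_k)\bigr).
\end{equation*}
The group $U(t)=e^{\ci tA}$ acts as a weighted pullback by the real flow of $v$ and preserves $D(H(\xi))$. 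Its complex-time extension and the bounds~\eqref{eq_GenCommEst} come from the strip analyticity of $\omega_j$ (Condition~\ref{cond_ex}.\ref{item-cond_ex-1}) and the exponential decay of $V$ (Condition~\ref{cond_ex}.\ref{cond_ex_V}), which makes $\wcV$ analytic in a strip; these give Condition~\ref{cond_g}. The Mourre estimate follows from $\ci\ad_A(H(\xi_0))=v\cdot\nabla_k\omega_{\xi_0}+\ci[T_V,A]$ with $\ci[T_V,A]$ compact, and the conjugation $\conj f(k)=\overline{f(k)}$ satisfies $\conj A=-A\conj$ automatically while commuting with $H(\xi)$ provided $V$ is even. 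Joint analyticity of $\omega_\xi$ in $(\xi,k)$ gives Condition~\ref{cond_analytic}.\ref{cond_g_fam}. Theorem~\ref{thm_semianalytic} then furnishes a neighborhood $W_{(\lambda_0,\xi_0)}$ with $\Sigma_\mathrm{pp}\cap W_{(\lambda_0,\xi_0)}\in\mathcal O(W_{(\lambda_0,\xi_0)})$; a locally finite cover of $(\R\times\R^d)\setminus\mathcal T$ yields semi-analyticity of $\Sigma_\mathrm{pp}\setminus\mathcal T$ there. Item~\ref{item-thm-ex-4} is a fiberwise consequence of the same construction combined with Theorem~\ref{thm_ess_spec}: $H_{\theta_0}(\xi)$ has no essential spectrum in a neighborhood of $\lambda_0$, so $\lambda_0$ is isolated in $\sigma_\mathrm{pp}(H(\xi))\setminus\mathcal T(\xi)$ by Theorem~\ref{thm_HuSi_appl}.

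The main obstacle is the detailed verification of Condition~\ref{cond_g} for the constructed $A$: the admissible strip width is governed jointly by $\tR$ and the decay rate $a$, so the complexified flow of $v$ must be coordinated to stay inside both domains of analyticity simultaneously, and uniform $H$-bounds on $\ad_A^n(T_V)$ must be extracted from the analyticity of $\wcV$. A secondary difficulty is constructing $\conj$ without symmetry hypotheses on $V$; if neither evenness of $V$ nor an identical-particle symmetry $\omega_1=\omega_2$ is available, one argues case-by-case or reworks Proposition~\ref{prp_windowNEW} and Theorem~\ref{thm_HuSi_appl} without appealing to Condition~\ref{cond_Mourre_estimateNEW}.\ref{cond_g_conj}.
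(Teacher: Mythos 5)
Your overall strategy matches the paper's: items~\ref{item-thm-ex-1} and~\ref{item-thm-ex-2} via real analytic geometry and properness of the relevant projections, and items~\ref{item-semian-ex} and~\ref{item-thm-ex-4} by verifying Condition~\ref{cond_analytic} with a momentum-space conjugate operator $A_{\xi_0}$ of vector-field type and invoking Theorem~\ref{thm_semianalytic}. (The paper's $v_{\xi}(k)=\mathrm{e}^{-k^2-\xi^2}\nabla_k\omega_\xi(k)$ is precisely the Gaussian-weighted exact gradient you describe, and the verification occupies Lemmas~\ref{lem_flow_strip}--\ref{lem_int_strip}, Proposition~\ref{prp_cond21_ex}, Proposition~\ref{prop-ME} and Proposition~\ref{prop-ex-is-ok}.)

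There is, however, a genuine gap in your proof of~\ref{item-semian-ex}. By Definition~\ref{def-semi-sub}.\ref{item-def-semi}, semi-analyticity of $\Sigma_\mathrm{pp}\setminus\mathcal T$ in $(\R\times\R^d)\setminus\mathcal T$ requires a suitable neighborhood $W$ with $(\Sigma_\mathrm{pp}\setminus\mathcal T)\cap W\in\mathcal O(W)$ around \emph{every} point $(\lambda,\xi)$ of the ambient open set $(\R\times\R^d)\setminus\mathcal T$, not only around points lying on $\Sigma_\mathrm{pp}$. Your ``locally finite cover'' is built only from neighborhoods of points in $\Sigma_\mathrm{pp}\setminus\mathcal T$ and does not cover the rest of the ambient set. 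The missing case $(\lambda,\xi)\notin\Sigma_\mathrm{pp}$ is not vacuous: one must produce a neighborhood $W$ with $\Sigma_\mathrm{pp}\cap W=\emptyset$. The paper does this by using Proposition~\ref{prop-ME} in its $K=0$ form to get a strict Mourre estimate at $(\lambda,\xi)$, then a continuity argument in $(\lambda',\xi')$ (the map $\xi'\mapsto[H(\xi'),A_{\xi'}]\in\mathcal B(L^2)$ is norm-continuous), and finally the Virial Theorem to rule out eigenvalues in $W$. This step is also what establishes that $\Sigma_\mathrm{pp}\setminus\mathcal T$ is relatively closed; your proposal does not address closedness at all.

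Two further remarks. For~\ref{item-thm-ex-2} the paper applies Theorem~\ref{thm-strat} to the proper map $\pi_\xi\colon M_\xi\to\R$ on the graph of $\omega_\xi$ and observes that any stratum of $\R$ hit by a critical point is a zero-stratum; your variant of stratifying the critical set $C_\xi$ directly and using constancy of $\omega_\xi$ along strata is a correct alternative. Your caution that the conjugation $\conj f=\bar f$ commutes with $T_V$ only when $\widecheck V$ is even is well placed; the paper asserts without comment that standard complex conjugation works, which in fact presupposes evenness of $V$.
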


\begin{rem} \begin{enumerate}
\item It is the claim \ref{item-semian-ex}, which is of interest here. The other properties are more or less immediate.
See the proof, which is located at the end of this section. 
\item It remains an open question how bands of non-threshold eigenvalues, as functions of total momentum, may approach the threshold set. For example, under what conditions is $\Sigma_\mathrm{pp}$ a semi- (or sub-) analytic subset of $\R\times \R^d$? Example~\ref{ex-Simon} provides an example where such a result fails to hold true.
\item In \cite{HeSk}, Herbst and Skibsted studies exponential decay of eigenfunctions pertaining to non-threshold eigenvalues of one-body operators of the form $\omega(p) + V$, assuming $\omega$ is a polynomial. Their results apply to $H(\xi)$.  
\end{enumerate}
\end{rem}

We define a self-adjoint operator for every total momentum $\xi\in\R^n$ by
\begin{equation}
A_{\xi}= \frac{\ci}{2}\bigl( v_{\xi}\cdot \nabla_k + \nabla_k\cdot v_{\xi}\bigr),
\label{eq_A}
\end{equation}
where the vector field $v_{\xi}$ is given by
\begin{equation}
v_\xi(k) =  \mathrm{e}^{-k^2-\xi^2}(\nabla_k\omega_\xi)(k).
\label{eq_vectorfield}
\end{equation}
Dilation in momentum space -- of the type considered here -- was previously employed 
for Schr\"odinger operators $-\Delta+V$ in \cite{Nak2}.

In the following, we make frequent use of the estimate
\begin{equation}\label{Peetre}
\forall p\in\R, k,k'\in\R^d:\qquad \langle k+k' \rangle^{p} \leq 2^{|p|}\langle k \rangle^{|p|} \langle k' \rangle^{p},
\end{equation}
referred to as Peetre's inequality in \cite[Lemma~1.18]{X}.

\begin{rem}\label{rem-3.2}
\leavevmode
\begin{enumerate}
\item\label{item-rem3.2.1} By Condition \ref{cond_ex}.\ref{cond_ex_omega_est} $\tC^{-1}\langle k\rangle^{s_j} - \tC\leq |\omega_j(k)| \leq \tC \langle k\rangle^{s_j}$ and thus $D(M_{\omega_j})=D(M_{\langle \cdot \rangle^{s_j}})$. Consequently, cf. \eqref{Peetre}, $D(M_{\omega_\xi})=D(M_{\langle \cdot \rangle^{s_2}})=:\dom$, since $s_2\geq s_1$. Thus all operators $H(\xi)$ have the common domain $\dom$.
\item The two estimates in Condition \ref{cond_ex}.\ref{cond_ex_omega_est} are satisfied by the functions $f(k)=(k_1^2+\cdots + k_d^2)^q$ for $q\in\N$ and $g(k) = (1 + k_1^2+\cdots + k_d^2)^s$ for $s>0$. As for the choice of $\tR$, for the function $f$ any $\tR>0$ will do, whereas for $g$ one must choose $\tR< d^{-1/2}/2$. 
%\item Condition \ref{cond_ex}.\ref{cond_ex_v_omega} implies that $\omega_\xi$ as well as every component of %$\nabla_k\omega_\xi$ extend to analytic functions in the strip $S_R$. The same is true for the vector field $v_\xi$.
\item Condition \ref{cond_ex}.\ref{cond_ex_omega_est} and \eqref{Peetre} %and Condition \ref{cond_ex}.\ref{cond_ex_bnd_on_v}
 imply the existence of $C_\omega,C'_\omega>0$ such that
\begin{equation}\label{eq_bnd_on_v}
\forall \xi,k\in S^d_\tR:\qquad |v_{\xi}(k)| \leq C_\omega < \infty.
\end{equation}
and  
\begin{equation}\label{eq_bnd_on_Dv}
\forall \xi,k\in S^d_\tR:\qquad \|D v_\xi(k)\| \leq C'_\omega   <\infty.
\end{equation}
\item  Condition \ref{cond_ex}.\ref{cond_ex_V} on the potential ensures that $\widehat{V}$ extends analytically to 
the strip $S^d_{a}$. Fix an $a'\in (0,a)$. The assumed decay and smoothness  permits to argue -- using 
integration by parts -- that for some $C_V>0$, we have
\begin{equation}\label{DecayOfV}
\forall k\in S_{a'}^d:\qquad \bigl|\widehat{V}(k)\bigr| \leq C_V \bigl(1+|k|^{d'}\bigr)^{-1}.
\end{equation}
 The choice of $d'$ is made to ensure 
that this estimate implies $\widehat{V}\in L^1(\R^d)$. 
\end{enumerate}
\end{rem}

The next well-known lemma expresses the action of the unitary group generated by $A_\xi$ in  terms of solutions of the ODE generated by the vector field $v_\xi$. See e.g. the PhD thesis of one of the authors \cite[Chap.2, Prop.~2.3]{Ra}.
Since $v_\xi\colon S_\tR^d \to \C^n$, we are led to study the parameter dependent autonomous initial value problem
 \begin{equation}
\frac{\mathrm{d}y}{\mathrm{d}t}  =v_\xi(y),\quad  y(0)=k.
\label{eq_flow_ODE}
\end{equation}
The ODE is defined for $y\in S^d_\tR$. That is, solutions are understood to take values in $S^d_\tR$.

\begin{lem} For $k,\xi\in\R^d$, the initial value problem \eqref{eq_flow_ODE}
admits  a (unique) solution $t\mapsto \gamma^t_\xi(k)$ defined for all time $t\in\R$.
Abbreviating
\begin{equation}\label{eq_flow_J}
J^t_\xi(k)=e^{\int\limits_0^t\nabla\cdot v_\xi(\gamma^s_\xi(k))\mathrm{d}s},
\end{equation}
we have for $f\in L^2(\R^d)$ the formula
\begin{equation}
\left(\mathrm{e}^{\ci t A_\xi}f\right)(k) = \sqrt{J^t_\xi(k)}f(\gamma^t_\xi(k)).
\label{eq_action_dilation}
\end{equation}
\end{lem}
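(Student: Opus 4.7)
The plan is to define $V(t)f(k):=\sqrt{J^t_\xi(k)}\,f(\gamma^t_\xi(k))$ as a candidate family, verify that $\{V(t)\}_{t\in\R}$ is a strongly continuous one-parameter unitary group on $L^2(\R^d)$, compute its infinitesimal generator on a core, and identify it with $\ci A_\xi$; Stone's theorem will then deliver \eqref{eq_action_dilation}.

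First I would establish global existence of the flow together with the Liouville formula for its Jacobian. By \eqref{eq_bnd_on_v} and \eqref{eq_bnd_on_Dv}, the vector field $v_\xi$ is bounded and globally Lipschitz on $\R^d$, so Picard--Lindel\"of produces a unique local solution of \eqref{eq_flow_ODE}, and the a priori estimate $|\gamma^t_\xi(k)-k|\leq C_\omega|t|$ rules out finite-time blow-up. Smooth dependence on initial data yields $\gamma^t_\xi\in C^\infty(\R^d,\R^d)$, with flow identity $\gamma^{t+s}_\xi=\gamma^t_\xi\circ\gamma^s_\xi$. The variational equation $\dot Z=(Dv_\xi)(\gamma^t_\xi(k))Z$, $Z(0)=I$, for $Z(t):=D_k\gamma^t_\xi(k)$, combined with Liouville's formula, yields $\det Z(t)=J^t_\xi(k)>0$ as in \eqref{eq_flow_J}; in particular each $\gamma^t_\xi$ is an orientation-preserving diffeomorphism of $\R^d$.

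Next I would verify the unitary group structure. The substitution $k\mapsto\gamma^t_\xi(k)$ with Jacobian $J^t_\xi(k)$ gives $\|V(t)f\|_{L^2}^2=\|f\|_{L^2}^2$, and since $V(-t)$ is a two-sided inverse, $V(t)$ is unitary. The chain rule applied to the flow identity yields $J^{t+s}_\xi(k)=J^t_\xi(\gamma^s_\xi(k))\,J^s_\xi(k)$, from which $V(t+s)=V(t)V(s)$ follows. Strong continuity holds for $f\in\comp{d}$ by dominated convergence (using joint continuity of $\gamma^t_\xi(k)$ in $(t,k)$ and local boundedness of $J^t_\xi$) and extends to $L^2(\R^d)$ by density and the bound $\|V(t)\|=1$.

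Finally, for $f\in\comp{d}$, differentiation at $t=0$ using $\dot\gamma^0_\xi=v_\xi$ and $\frac{\mathrm{d}}{\mathrm{d}t}\big|_{t=0}J^t_\xi(k)=(\nabla\cdot v_\xi)(k)$ gives
\[
\frac{\mathrm{d}}{\mathrm{d}t}\bigg|_{t=0}V(t)f(k)=v_\xi(k)\cdot\nabla f(k)+\tfrac{1}{2}(\nabla\cdot v_\xi)(k)\,f(k)=\tfrac{1}{2}\bigl(v_\xi\cdot\nabla+\nabla\cdot v_\xi\bigr)f(k),
\]
which by \eqref{eq_A} coincides with $\ci A_\xi f(k)$. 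Since $\gamma^{\pm t}_\xi$ are diffeomorphisms of $\R^d$, $V(t)$ maps $\comp{d}$ into itself, so each $f\in\comp{d}$ is a $C^\infty$-vector for the Stone generator of $V(\cdot)$; this makes $\comp{d}$ a core and identifies the generator with the self-adjoint closure of $A_\xi|_{\comp{d}}$, whence Stone's theorem yields \eqref{eq_action_dilation}. The principal technical point is the essential self-adjointness of $A_\xi$ on $\comp{d}$, which the $V(t)$-invariance just observed resolves; alternatively one could invoke Nelson's commutator theorem with comparison operator $\langle k\rangle^2+1$, exploiting the bounds on $v_\xi$ and $Dv_\xi$ from Remark~\ref{rem-3.2}.
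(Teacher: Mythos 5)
The paper does not prove this lemma itself; it refers to \cite[Chap.~2, Prop.~2.3]{Ra}, so there is no in-paper argument to compare against. Your proof follows the standard and essentially unavoidable route: verify that the candidate formula defines a strongly continuous one-parameter unitary group, compute the generator on a dense invariant subspace, and invoke Stone. All intermediate steps are sound. Picard--Lindel\"of combined with the a~priori bound $|\gamma^t_\xi(k)-k|\leq C_\omega|t|$ gives global well-posedness; Liouville's formula identifies the Jacobian with $J^t_\xi$; the change of variables gives isometry; the cocycle identity $J^{t+s}_\xi(k)=J^t_\xi(\gamma^s_\xi(k))J^s_\xi(k)$ gives the group law; and the invariance of $\comp{d}$ under $V(\cdot)$, together with the standard criterion that a dense group-invariant subspace contained in the domain of the generator is a core, completes the identification. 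The Nelson commutator theorem alternative is a clean fallback.

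One step is wrong as written. Having computed $\frac{\mathrm d}{\mathrm dt}\big|_{t=0}V(t)f=\tfrac12\bigl(v_\xi\cdot\nabla+\nabla\cdot v_\xi\bigr)f$, you assert this ``by \eqref{eq_A} coincides with $\ci A_\xi f$.'' Taking \eqref{eq_A} at face value, $A_\xi=\tfrac{\ci}{2}(v_\xi\cdot\nabla_k+\nabla_k\cdot v_\xi)$, one gets $\ci A_\xi f=-\tfrac12(v_\xi\cdot\nabla+\nabla\cdot v_\xi)f$, the opposite sign; your computation would then yield $V(t)=\mathrm e^{-\ci t A_\xi}$, contradicting \eqref{eq_action_dilation}. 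The resolution is almost certainly a sign misprint in \eqref{eq_A}: the sign consistent with the dilation group written out in the introduction, with \eqref{eq_action_dilation}, and with the identity $\ci[A_\xi,M_{\omega_\xi}]=+M_{g_\xi}$ underlying \eqref{eq_comm2} and the Mourre estimate in Proposition~\ref{prop-ME}, is $A_\xi=-\tfrac{\ci}{2}(v_\xi\cdot\nabla_k+\nabla_k\cdot v_\xi)$, i.e.\ $A_\xi=\tfrac12(v_\xi\cdot p+p\cdot v_\xi)$ with $p=-\ci\nabla_k$. You should flag the discrepancy rather than assert the coincidence; with the corrected sign your argument closes.
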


One may -- by direct computation -- verify the useful relation
\begin{equation}
\forall t\in\R,\forall k,\xi\in\R^d:\qquad J^{-t}_\xi(\gamma^{t}_\xi(k))=\frac{1}{J^{t}_\xi(k)}.
\label{eq_J_inv}
\end{equation}
Furthermore, \eqref{eq_bnd_on_v} and the Fundamental Theorem of Calculus ensure that
\begin{equation}\label{FinPropSpeed}
\forall t\in\R,\forall k,\xi\in\R^d:\qquad |\gamma^t_\xi(k)-k| \leq C_\omega |t|.
\end{equation}

\begin{lem}\label{DilatedMultOps} Let $f\in L^\infty_\mathrm{loc}(\R^d)$ and $t\in\R$. Then $f^t:= f\circ\gamma^{t}_\xi\in L^\infty_\mathrm{loc}(\R^d)$,
\[
\mathrm{e}^{\ci t A_\xi} D(M_f) = D(M_{f^t}) \qquad \textup{and}\qquad\mathrm{e}^{\ci t A_\xi} M_f = M_{f^t}\mathrm{e}^{\ci t A_\xi}. 
\]
\end{lem}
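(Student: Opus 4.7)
The plan is to use the explicit formula for $e^{\ci t A_\xi}$ from the preceding lemma together with the flow identity \eqref{eq_J_inv}, and then turn the resulting pointwise identity into the desired operator identity and domain equality.

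First, I would establish that $f^t\in L^\infty_\mathrm{loc}(\R^d)$. Since $v_\xi$ is smooth on $\R^d$ and globally bounded together with its derivative (by \eqref{eq_bnd_on_v} and \eqref{eq_bnd_on_Dv} applied on the real slice), standard ODE theory gives that $\gamma^t_\xi\colon\R^d\to\R^d$ is a diffeomorphism. The inequality \eqref{FinPropSpeed}, $|\gamma^t_\xi(k)-k|\leq C_\omega|t|$, shows moreover that $\gamma^t_\xi$ maps every bounded set into a bounded set. Hence if $K\subset\R^d$ is compact, $\gamma^t_\xi(K)$ is contained in a compact set on which $f$ is essentially bounded, so $f^t = f\circ \gamma^t_\xi$ is essentially bounded on $K$.

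Next, I would verify the intertwining relation by direct computation on a convenient dense set. For $g$ such that the formulae make sense (e.g.\ $g\in C_c(\R^d)$, or merely applying pointwise), \eqref{eq_action_dilation} gives
\[
\bigl(e^{\ci t A_\xi} M_f e^{-\ci t A_\xi} g\bigr)(k) = \sqrt{J^t_\xi(k)}\,f(\gamma^t_\xi(k))\sqrt{J^{-t}_\xi(\gamma^t_\xi(k))}\,g\bigl(\gamma^{-t}_\xi(\gamma^t_\xi(k))\bigr).
\]
Using the flow property $\gamma^{-t}_\xi\circ\gamma^t_\xi=\mathrm{id}$ and the Jacobian identity \eqref{eq_J_inv}, the square-root factor collapses to $1$, yielding $f^t(k)g(k) = (M_{f^t}g)(k)$. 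Hence $e^{\ci t A_\xi} M_f \subseteq M_{f^t} e^{\ci t A_\xi}$ as operators.

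Finally, I would read off the domain equality. If $g\in D(M_f)$, then $M_f g\in L^2$, so $e^{\ci t A_\xi}M_f g\in L^2$ by unitarity, and the pointwise identity shows $f^t\cdot(e^{\ci t A_\xi}g) = e^{\ci t A_\xi}(fg)\in L^2$, whence $e^{\ci t A_\xi}g\in D(M_{f^t})$ and the two operator applications agree. This gives $e^{\ci t A_\xi}D(M_f)\subseteq D(M_{f^t})$ and $e^{\ci t A_\xi}M_f\subseteq M_{f^t}e^{\ci t A_\xi}$. Applying the same argument to $-t$ with $f$ replaced by $f^t$, using $(f^t)^{-t}=f$, yields the reverse inclusion $e^{-\ci t A_\xi}D(M_{f^t})\subseteq D(M_f)$, which gives the equality of domains and upgrades the inclusion of operators to an equality.

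The only delicate point is a careful book-keeping of the pointwise formula on a set where everything is defined (which is no real obstacle, since $M_f$ is a multiplication operator and $e^{\ci t A_\xi}$ is a unitary weighted composition); no analytic difficulty remains once the diffeomorphism property of $\gamma^t_\xi$ and the Jacobian identity \eqref{eq_J_inv} are in hand.
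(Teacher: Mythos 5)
Your proof is correct and relies on the same basic ingredients as the paper's: the explicit action formula \eqref{eq_action_dilation}, the flow property $\gamma^{-t}_\xi\circ\gamma^{t}_\xi = \mathrm{id}$, the Jacobian identity \eqref{eq_J_inv}, and the finite-propagation bound \eqref{FinPropSpeed} for the $L^\infty_\mathrm{loc}$ claim. The one genuinely different step is how you pass from a pointwise identity to the operator statement: the paper first observes $e^{\ci t A_\xi} C_0^\infty(\R^d)\subset C_0^\infty(\R^d)$, verifies the intertwining on $C_0^\infty$, and then invokes that $C_0^\infty$ is a core for $M_f$ together with closedness of $M_{f^t}$; you instead argue directly on all of $D(M_f)$, noting that for any $g\in L^2$ the formula $(e^{\ci t A_\xi}(fg))(k)=f^t(k)(e^{\ci t A_\xi}g)(k)$ holds a.e., so $g\in D(M_f)$ forces $e^{\ci t A_\xi}g\in D(M_{f^t})$ with the right image, and the reverse inclusion follows by replacing $(t,f)$ with $(-t,f^t)$. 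Your route is slightly more elementary in that it sidesteps the core/closedness machinery, at the cost of having to be explicit about the a.e.\ pointwise identity on all of $L^2$ rather than just on smooth compactly supported functions; both are perfectly rigorous. One small remark: the diffeomorphism property of $\gamma^t_\xi$ you mention at the start is not actually needed for local boundedness of $f^t$ --- the estimate \eqref{FinPropSpeed}, which shows $\gamma^t_\xi$ maps bounded sets into bounded sets, already suffices, and this is all the paper uses.
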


\begin{proof} That $f\circ \gamma^{t}_\xi$ is locally essentially bounded follows from \eqref{FinPropSpeed}.
We may also observe that $\mathrm{e}^{\ci t A_\xi} C_0^\infty(\R^d)\subset C_0^\infty(\R^d)$, cf. \eqref{eq_action_dilation} and \eqref{FinPropSpeed}.
Finally, for $\psi\in C_0^\infty(\R^d)$ we have
$M_{f^t} \mathrm{e}^{\ci t A_\xi} \psi = \mathrm{e}^{\ci t A_\xi} M_f \psi$.
The lemma now follows, since $M_{f^t}$ is closed and $C_0^\infty(\R^d)$ is a core for $M_f$. 
\end{proof}

In the following Lemma, we need the strip width $\tR$ from 
Condition~\ref{cond_ex}.\ref{cond_ex_v_omega} and the bound $C_\omega$ on $v_\xi$ from \eqref{eq_bnd_on_v}.

\begin{lem}\label{lem_flow_strip}
Assume Conditions \ref{cond_ex}.\ref{cond_ex_v_omega} and \ref{cond_ex}.\ref{cond_ex_bnd_on_v}. For $k\in\R^d$ and $\xi\in S_{\tilde{R}}^d$,  the solution of (\ref{eq_flow_ODE}), $t\mapsto\gamma^t_\xi(k)$, admits an analytic continuation into the strip $S_r$, where $r = \tR/(C_\omega+1)$.
Moreover, for $0< r' \leq r$, we have $\{\gamma^z_\xi(k) \,|\, z\in S_{r'}\} \subset S^d_{r' C_\omega}\subset S_\tR^d$.
\end{lem}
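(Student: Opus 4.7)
The plan is a complex Picard iteration along the straight-line contour. Set $y_0(z) := k$ and
\[
y_{n+1}(z) := k + \int_0^z v_\xi(y_n(w))\,\mathrm dw, \qquad z \in S_r,
\]
integrating along the segment $[0,z]\subset\C$. I would show by induction that each $y_n$ is holomorphic on $S_r$ and takes values in $S^d_{|\Im z|\,C_\omega}\subset S^d_{rC_\omega}\subset S^d_{\tR}$, so that \eqref{eq_bnd_on_v} yields $|v_\xi(y_n(w))|\leq C_\omega$ along the contour. The strict inclusion $S^d_{rC_\omega}\subset S^d_{\tR}$ is precisely what $rC_\omega = \tR C_\omega/(C_\omega+1) < \tR$ is designed to guarantee; this is the role of the ``$+1$'' in the denominator of $r$.

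The crux of the induction is the componentwise bound $|\Im(y_{n+1}(z))_j|\leq C_\omega|\Im z|$. The naive estimate $|y_{n+1}(z) - k|\leq C_\omega|z|$ obtained directly from the straight-line integral is useless when $|\Re z|$ is large, so I would instead deform the contour -- legitimate since $w\mapsto v_\xi(y_n(w))$ is holomorphic on $S_r$ by the inductive hypothesis -- to the L-shaped path going first along $\R$ from $0$ to $\Re z$ and then vertically up to $z$. A parallel induction (using that $v_\xi|_{\R^d}\subset\R^d$ for real $\xi$) shows $y_n(t)\in\R^d$ for $t\in\R$, so the horizontal leg contributes only to the real part, while the vertical leg has length $|\Im z| < r$ and its contribution to the imaginary part is bounded componentwise by $C_\omega|\Im z|$ via \eqref{eq_bnd_on_v}. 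The case $\xi\in S^d_{\tR}\setminus\R^d$ is then handled by extending analytically in $\xi$, using the joint holomorphy of $v_\xi(y)$ on $S^d_{2\tR}\times S^d_{2\tR}$ and the already-established bounds for real $\xi$.

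The remainder is routine: the Lipschitz bound $\|Dv_\xi\|\leq C'_\omega$ from \eqref{eq_bnd_on_Dv} yields the standard Picard estimate $|y_{n+1}(z) - y_n(z)|\leq C_\omega|z|(C'_\omega|z|)^n/n!$, so $(y_n)$ converges uniformly on compact subsets of $S_r$ to a holomorphic limit $y$. Passing to the limit in the integral equation shows $y$ solves the complex ODE, and uniqueness of the real ODE identifies $y(t)=\gamma^t_\xi(k)$ for real $t$, so $y$ is the sought analytic extension. The moreover claim is the inductive imaginary-part bound carried to the limit. The principal obstacle is the contour-deformation step: without it the induction cannot be closed, since $|\Re z|$ is unbounded on $S_r$ and the iterates would in principle exit $S^d_{\tR}$, where we lose the uniform control on $v_\xi$.
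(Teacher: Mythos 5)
Your argument is correct and rests on the same geometric observation that drives the paper's proof: to keep the orbit inside $S^d_\tR$ for large $|\Re z|$ one must arrange that the imaginary part grows like $C_\omega|\Im z|$ rather than $C_\omega|z|$, and this is achieved by first evolving along $\R$ (harmless to $\Im$) and only then moving perpendicularly. The implementations differ. The paper first invokes Cauchy--Kowalevskaya to obtain a \emph{local} analytic extension near the real axis, then for each $x\in\R$ re-solves the ODE $\dot y_x(t)=\ci\, v_\xi(y_x(t))$ in the vertical variable $t$ with initial condition $y_x(0)=\gamma^x_\xi(k)$ --- this \emph{is} the horizontal-then-vertical route, cast as a reparametrized ODE --- and finally glues the resulting patches with unique continuation. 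You instead run one global complex Picard iteration on all of $S_r$, pushing the straight-line integral to the L-shaped contour at each step; this produces the analyticity and the imaginary-part bound simultaneously and avoids the gluing argument, at the cost of carrying the contour deformation through the induction. The two routes are equivalent in substance, and your identification of the contour deformation as the crux is exactly right. Two minor remarks: (i) your reliance on $v_\xi|_{\R^d}\subset\R^d$ restricts the inductive bound to real $\xi$; the paper has the same implicit restriction, since its identity $|\Im y_x(t)|=|\Im(y_x(t)-y_x(0))|$ also presupposes $\gamma^x_\xi(k)\in\R^d$, so your appeal to analytic continuation in $\xi$ is on the same footing as the paper's. (ii) The joint holomorphy of $(\xi,k)\mapsto v_\xi(k)$ holds on $S^d_\tR\times S^d_\tR$ (so that $\xi-k\in S^d_{2\tR}$), not on $S^d_{2\tR}\times S^d_{2\tR}$; this does not affect the argument.
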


\begin{proof}[\textsc{Proof.}] By definition, $t\to \gamma^t_\xi(k)$ solves the ODE \eqref{eq_flow_ODE}.

Since $v_\xi$ is analytic in $S^d_\tR \subset S^d_{2\tR}$ as a function of $k$, the function $t\to \gamma^t_\xi(k)$  admits -- by Cauchy-Kowaleskaya -- an analytic continuation into some region $G = G(k;\xi)\subset\C$ containing the real axis. Hence, for each $x\in\R$, we may choose $r(x,k;\xi)>0$, such that $B_{r(x,k;\xi)}^\C(x)\subset G(k;\xi)$. 
By possible decreasing $r(x,k,\xi)$, we may assume that $\gamma^z_\xi(k)\in S^d_{\tR}$ for $|z-x|< r(x,k;\xi)$. Therefore, $\gamma_\xi^z(k)-\xi \in S_{2\tR}^d$ and by Condition \ref{cond_ex}.\ref{cond_ex_v_omega} and the definition of $v_\xi$ we may form $v_\xi(\gamma_\xi^z(k))$.

For each $x\in\R$, (and  $k\in\R^d$, $\xi\in S_\tR^d$), the function $y_x(t) = \gamma^{x+\ci t}_\xi(k)$ solves the ODE
\[
\frac{\mathrm{d} y_x }{\mathrm{d}t}(t) = \ci v_\xi(y_x(t))
\]
with the initial condition $y_x(0) = \gamma^x_\xi(k)$.
The solution is a priori defined for $|t|< r(x,k;\xi)$.

The estimate (for $t>0$)
\begin{equation}\label{EstOnImPart}
|\im(y_x(t))| = |\Im(y_x(t)-y_x(0))|\leq \int_0^t \bigl|v_\xi(y_x(s))\bigr|\,\mathrm{d} s \leq C_\omega t,
\end{equation}
ensures that the solution may be extended beyond $|t| = r(x,k;\xi)$ at least until $|t|= r= \tR/(C_\omega+1)$. (A similar estimate holds for $t<0$). This defines an extension of $\gamma^z_\xi(k)$ from $z\in G(k,\xi)$ to $z\in S_{r}$ for all $k\in \R^d $, $\xi\in S_\tR^d$. 
Note that $\gamma^z_\xi(k)\in S^d_{C_\omega r}\subset S^d_\tR$ for all $z\in S_r$. It remains to argue that the extension is analytic.

By Cauchy-Kowaleskaya, for each $x\in\R$, $k\in\R^d$ and $\xi\in S_\tR^d$, the solution $t\mapsto y_x(t)$ extends to an analytic function in a complex neighborhood $O(x,k;\xi)$ of $[-r,r]$. Let $\delta = \delta(x,k;\xi) \in (0,r(x,k;\xi))$ be such that
$[-r,r]\times \ci[-\delta,\delta]\subset O(x,k;\xi)$. 

By Unique Continuation, the extension $y_x(z)$ equals $\gamma^{x+\ci z}_\xi(k)$ for $z\in ([-r,r]\times \ci[-\delta,\delta]) \cap  B^\C_{r(x,k;\xi)}(0)$.
This implies that $t\to y_x( t+\ci x')$, for $|x'|<\delta(x,k;\xi)$,  solves the same initial value problem as $y_{x'}(t)$ and hence they must coincide. This proves that $z\mapsto \gamma^z_\xi(k)$ is analytic in $(x-\delta(x,k;\xi),x+\delta(x;k;\xi))\times \ci(-r,r)$. Since $x\in \R$ was arbitrary, we conclude that  $z\mapsto \gamma^z_\xi(k)$ is analytic in $S_r$. 

The last claim about the range of $z\mapsto \gamma^z_\xi(k)$ for $|\im z|\leq r'\leq r$, follows from \eqref{EstOnImPart}.
\end{proof}

\begin{rem}
%\leavevmode
\begin{enumerate}
\item
Let $z\in S_r$ and write $u = z/|z|$. 
The above lemma allows us to estimate
\begin{equation}\label{eq_gamma_bnd}
\bigl|\gamma^{z}_\xi(k)-k\bigr| = \Bigl|\int_0^{|z|} \frac{\mathrm{d}}{\mathrm{d} r} \bigl(\gamma^{r u}_\xi(k)-k \bigr)\, d r\Bigr|
= \Bigl|\int_0^{|z|} u v_\xi\bigl(\gamma^{r u}_\xi(k)\bigr)\, d r\Bigr|
\leq C_\omega |z|.
\end{equation}
\item
Let $k,k',\xi\in\R^d$. Abbreviate $\beta^z_\xi(k,k') = \gamma^z_\xi(k')-\gamma^z_\xi(k)$ 
and $u(z) = |\beta^z_\xi(k,k')|^2$ for $z\in S_r$. Then
\[
\frac{\mathrm{d} u (z)}{\mathrm{d} z} = 2 \re\Bigl\{ \overline{\beta^z_\xi(k,k')}\cdot \frac{\mathrm{d}\beta^z_\xi(k,k')}{\mathrm{d} z}\Bigr\} =   2 \re \Bigl\{\overline{\beta^z_\xi(k,k')}\cdot \bigl(v_\xi(\gamma^z_\xi(k'))-v_\xi(\gamma^z_\xi(k))\bigr)\Bigr\}.
\]
Estimating $|v_\xi(\gamma^z_\xi(k'))-v_\xi(\gamma^z_\xi(k))|\leq C'_\omega|\beta^z_\xi(k,k')|$, using \eqref{eq_bnd_on_Dv}, we arrive at the differential inequalities
$- 2 C'_\omega u\leq \dot{u} \leq  2 C'_\omega u$. Therefore, we may conclude the estimate
\begin{equation}\label{LowerBoundOnBeta}
\forall z\in S_r:\qquad |\beta^z_\xi(k,k')|\geq |k-k'| \mathrm{e}^{-C'_\omega|z|}.
\end{equation}
\end{enumerate}
\end{rem}

The previous two lemmata allow us to explicitly compute how conjugation by the unitary group generated by $A$ effects the fiber Hamiltonians and argue that the so obtained expressions admit analytic continuations.

\begin{lem}\label{lem_int_strip}
  Assume Condition~\ref{cond_ex}. Then the map
  \begin{equation*}
   \R\ni  t\mapsto \mathrm{e}^{\ci t A_\xi}T_V \mathrm{e}^{-\ci t A_\xi}:=T_V^t
  \end{equation*}
  extends to an analytic $B(\mathcal{H})$-valued function defined in $S_{R}$, where
  \begin{equation}\label{rprime}
  R = \min\Bigl\{r,\frac{a'}{C_\omega+1},\frac{\pi}{d C'_\omega+1} \Bigr\}.
  \end{equation} 
  We furthermore have the estimate $\|T_V^z\|\leq C_V C_{d} \mathrm{e}^{(d+d')C'_\omega|z|}$ for $z\in S_{R}$,
  where  $C_d = \int_{\R^d} (1+|k|^{d'})^{-1}\, dk$.
\end{lem}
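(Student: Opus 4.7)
The plan is to realize $T_V^t$ as an integral operator for real $t$, extend its kernel analytically to $z \in S_R$, and then apply Schur's test to obtain both the norm bound and the $B(\mathcal H)$-valued analyticity.

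First, for real $t$, I would combine the action formula \eqref{eq_action_dilation} with the convolution representation of $T_V$, and perform the change of variables $\tilde k = \gamma^t_\xi(k')$ (Jacobian $J^t_\xi(k')$ by \eqref{eq_flow_J}, and $\sqrt{J^{-t}_\xi(\tilde k)} = 1/\sqrt{J^t_\xi(k')}$ by \eqref{eq_J_inv}) to derive
\[
(T_V^t f)(k) = \int_{\R^d} K_t(k,k')\, f(k')\,\mathrm{d} k', \qquad K_t(k,k') := \sqrt{J^t_\xi(k)}\,\widecheck V\bigl(\gamma^t_\xi(k) - \gamma^t_\xi(k')\bigr)\,\sqrt{J^t_\xi(k')}.
\]

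Second, I would extend each factor of $K_t$ analytically in $z$. Lemma~\ref{lem_flow_strip} gives analyticity of $z \mapsto \gamma^z_\xi(k)$ on $S_r$ with $\gamma^z_\xi(k) \in S^d_{C_\omega |\im z|}$, so the constraint $R \leq a'/(C_\omega + 1)$ places the argument $\gamma^z_\xi(k) - \gamma^z_\xi(k')$ of $\widecheck V$ inside the analyticity strip $S^d_{a'}$ (Remark~\ref{rem-3.2}(4)). The exponent $\Phi^z_\xi(k) := \int_0^z (\nabla \cdot v_\xi)(\gamma^s_\xi(k))\, \mathrm{d} s$ is analytic and path-independent on the simply connected strip $S_r$ and, using $|\nabla \cdot v_\xi| \leq d C'_\omega$ from \eqref{eq_bnd_on_Dv}, satisfies $|\Phi^z_\xi(k)| \leq d C'_\omega |z|$; hence $J^z_\xi(k) = e^{\Phi^z_\xi(k)}$ and the principal branch $\sqrt{J^z_\xi(k)} = e^{\Phi^z_\xi(k)/2}$ are analytic and nonvanishing, with $|\sqrt{J^z_\xi(k)}| \leq e^{d C'_\omega |z|/2}$ (the constraint $R \leq \pi/(d C'_\omega + 1)$ keeps the exponent safely within the principal branch). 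Thus $K_z(k,k')$ is pointwise analytic in $z \in S_R$.

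To finish, I would apply Schur's test. Combining the polynomial decay \eqref{DecayOfV} with the Gronwall lower bound \eqref{LowerBoundOnBeta} and the elementary inequality $(1 + A e^{-B})^{-1} \leq e^B (1 + A)^{-1}$ (for $B \geq 0$), I would estimate
\[
|\widecheck V(\gamma^z_\xi(k) - \gamma^z_\xi(k'))| \leq C_V\, e^{d' C'_\omega |z|}\, (1 + |k-k'|^{d'})^{-1},
\]
yielding $\sup_k \int |K_z(k,k')|\, \mathrm{d} k' \leq C_V C_d\, e^{(d+d') C'_\omega |z|}$ and symmetrically for the other Schur integral. Schur's test then gives the claimed operator norm bound; meanwhile, pointwise analyticity of $K_z$ and a locally uniform Schur majorant let one apply Morera and Fubini to the matrix element $z \mapsto \langle f, T_V^z g \rangle$, establishing weak analyticity, which together with the uniform $B(\mathcal H)$-bound upgrades to norm analyticity on $S_R$. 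The main technical hurdle is calibrating the three constraints on $R$ simultaneously, so that the flow stays in the analyticity strip of $\widehat V$, the principal square root of $J^z_\xi$ is well-defined, and the Gronwall lower bound remains strong enough for the Schur integral to converge.
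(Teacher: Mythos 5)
Your proposal is correct and matches the paper's proof essentially step for step: same change of variables to obtain the kernel $\sqrt{J^z_\xi(k)}\,\widehat V(\gamma^z_\xi(k')-\gamma^z_\xi(k))\,\sqrt{J^z_\xi(k')}$, same use of Lemma~\ref{lem_flow_strip} and the bound on $\arg J^z_\xi$ to control the analytic continuation and the principal square root, same use of \eqref{DecayOfV} and \eqref{LowerBoundOnBeta} to majorize the kernel by $C_V e^{(d+d')C'_\omega|z|}(1+|k-k'|^{d'})^{-1}$, and Morera to upgrade to $B(\mathcal H)$-analyticity. The only difference is that you invoke Schur's test where the paper applies (sharp) Young's inequality after writing out the $L^2$ norm explicitly; for a kernel dominated by an $L^1$ convolution majorant these are the same computation and yield the identical constant $C_V C_d\,\mathrm e^{(d+d')C'_\omega|z|}$.
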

%\begin{rem}
%  We are in fact able to prove the above under the weaker assumption
%  $\norm{\partial^\gamma \hat V}_1\le c^{\abs{\gamma}}\gamma!$ for
%  some $c>0$ and all multiindices $\gamma$. The proof of this,
%  however, is much longer and more involved than the short one given
%  below.
%\end{rem}
\begin{proof}[\textsc{Proof}]
We begin by computing for $t\in \R$
\begin{align}\label{DilatedTV}
\nonumber \left(\mathrm{e}^{\ci A_\xi t} T_V \mathrm{e}^{-\ci A_\xi t}g\right)(k) & =  \sqrt{J^t_\xi(k)} \int_{\R^d}\widehat{V}\bigl(k'-\gamma^{t}_\xi(k)\bigr)\sqrt{J^{-t}_\xi(k')}g(\gamma^{-t}_\xi(k'))\,\mathrm{d}k'\\
\nonumber & =  \sqrt{J^t_\xi(k)} \int_{\R^d}\widehat{V}\bigl(\gamma^t_\xi(k'')-\gamma^{t}_\xi(k)\bigr)
\sqrt{J^{-t}_\xi(\gamma^t_\xi(k''))} J^t_\xi(k'')g(k'')\,\mathrm{d}k''\\
 &= \sqrt{J^t_\xi(k)} \int_{\R^d}\widehat{V}\bigl(\gamma_{\xi}^t(k'')-\gamma_{\xi}^t(k)\bigr)\sqrt{J^t_\xi(k'')}g(k'')\,\mathrm{d}k''.
\end{align}
Here we used substitution $k' = \gamma^t_\xi(k'')$ in the second equality and the identity \eqref{eq_J_inv} in the final equality.

By Condition~\ref{cond_ex}.\ref{cond_ex_V} and Lemma~\ref{lem_flow_strip}, the map $t\mapsto\widehat{V}(\gamma_\xi^{t}(k')-\gamma_\xi^{t}(k))$ extends to an analytic function on the strip $S_{R}$.
%As in the abstract case, we start with an extension to $|z|<r$ and then further continue to the whole strip $S_r$ by %conjugation with the\fxfatal{not what we do}
% unitary group.
 
  It follows from Condition~\ref{cond_ex}.\ref{item-cond_ex-1}, \eqref{eq_flow_J} and Lemma~\ref{lem_flow_strip} that $t\mapsto J^t_\xi(k)$ extends analytically to $z\in S_r$. The estimate \eqref{eq_bnd_on_Dv} implies:
  \begin{equation}\label{BoundOnJ}
  \forall z\in S_r,k,\xi\in \R^d: \quad |J^z_\xi(k)| \leq \mathrm{e}^{dC'_\omega|z|} \quad \textup{and} \quad |\arg(J^z_\xi(k))| \leq  dC'_\omega|\im z|,
  \end{equation}  
  where the second estimate -- on $\im(\int_0^z \nabla\cdot v_\xi(\gamma^s_\xi(k))\, \mathrm{d} s)$ -- is most easily derived by choosing a piecewise linear integration contour from $0$ to $z$, which runs along the real axis from $0$ to $\re(z)$ (contributing nothing to the argument), and then from $\re(z)$ to $z$.
Here $\arg(\zeta)$ denotes the principal argument of the complex number $\zeta$. The estimate on the argument shows that
$\sqrt{J^t_\xi(k)}$ extends analytically to $S_{R}$ (reading the square root as the principal square root).

It remains to show that, if these analytic extensions are substituted into the right-hand side of \eqref{DilatedTV}, it still defines a bounded operator on $\mathrm{L}^2(\R^d)$.
% Let $n\in\N$ denote the smallest integer satisfying $2n>d$. Then $j_n(k):=(1+|k|^{2n})\inv \in \mathrm{L}^1(\R^d)$.
%Note that due to
%\begin{equation*}
%\Bigl|\int_0^z\nabla_k\cdot v_\xi(\gamma_\xi^{z'}(k))\,\mathrm{d}z'\Bigr| \leq |z| \int_0^1\bigl|\nabla_k\cdot %v_\xi(\gamma^{tz}_\xi(k))\bigr|\, \mathrm{d}t
%\end{equation*}
%there exists a constant $C_r>0$ independent of $k$ and $|z|<r$ such that $|J^{z}_\xi(k)|\leq C_r$. 
We recall the notation  $\beta_\xi^z(k,k'):= \gamma_\xi^{z}(k')-\gamma_\xi^{z}(k)$ and estimate using \eqref{DecayOfV}, \eqref{LowerBoundOnBeta} and \eqref{BoundOnJ}
\begin{align}
&
\int_{\R^d}\bigl| J^{z}_\xi(k)\bigr| \Bigl| \int_{\R^d} \widehat{V}(\beta_{z}(k,k'))\sqrt{J^{z}_\xi(k')}g(k')\, \mathrm{d}k'\Bigr|^2 \mathrm{d}k\nonumber\\
\nonumber & \quad \leq
C_V^2 \mathrm{e}^{2 d C'_\omega|z|} \int_{\R^d} \Bigl( \int_{\R^d} \bigl(1+ |\beta_\xi^{z}(k,k')|^{d'}\bigr)^{-1}\bigl|g(k')\bigr| \,\mathrm{d}k'\Bigr)^2 \mathrm{d}k\\
\nonumber &\quad \leq C_V^2 \mathrm{e}^{2(d+d') C'_\omega|z|} \int_{\R^d} \Bigl( \int_{\R^d} \bigl(1+|k-k'|^{d'}\bigr)^{-1} \bigl|g(k')\bigr|\,\mathrm{d}k'\Bigr)^2 \mathrm{d}k\\
&\quad \leq C_V^2 C_{d}^2 \mathrm{e}^{2(d+d')C'_\omega|z|} \,\|g\|^2_{\mathrm{L}^2(\R^d)}.
\label{eq_tv1}
\end{align}
In the last step, we used (the sharp) Young's inequality \cite[Thm. 4.2]{LL}. Here $C_d = \int_{\R^d} (1+|k|^{d'})^{-1}\, dk$.
The above estimate shows that the right-hand side of \eqref{DilatedTV} defines a bounded operator
for $z\in S_{R}$.

 That the extension to complex $z\in S_{R}$ is analytic now follows from Morera's Theorem.
\end{proof}

%\begin{lem}
%Assume Condition~\ref{cond_ex}. For all $t\in\mdsr$ we have that 
%\begin{equation*}
%\mathrm{e}^{\ci t A_\xi}M_{\omega_\xi}\mathrm{e}^{-\ci t A_\xi} = %M_{\omega_\xi \circ \gamma_{t}}
%\end{equation*}
%on $D(M_{\omega_\xi})$, where $\gamma_t$ is the solution of %(\ref{eq_flow_ODE}).
%\end{lem}

%\begin{proof}[\textsc{Proof.}] Let $\psi\in D(M_{\omega_\xi})$. %Since the action of $\mathrm{e}^{\ci t A_\xi}$ is given by %(\ref{eq_action_dilation}), we use (\ref{eq_J_inv}) and the group %property of the flow $\gamma_t$ to compute 
%\begin{align*}
%\bigl[\mathrm{e}^{\ci t A_\xi}M_{\omega_\xi}\mathrm{e}^{-\ci t %A_\xi}f\bigr](k) 
%&= 
%\omega_\xi\bigl(\gamma_{t}(k)\bigr)\sqrt{J(t,k)}\sqrt{J(-t,\gamma_ %{t}(k))}  f(k)\\
%& =
%\omega_\xi\bigl(\gamma_{t}(k)\bigr)f(k).
%\end{align*}
%This proves the statement. \end{proof}

\begin{prp}\label{prp_cond21_ex}
Assume Condition \ref{cond_ex} and let $\xi,\xi_0\in \R^d$. Then the pair of operators $H(\xi)$, $A_{\xi_0}$ satisfies Condition~\ref{cond_g}
with $R$ given by \eqref{rprime} and $M=\widetilde{M}\langle\xi\rangle^{s_1}$ for some $\widetilde{M}$, which does not depend on $\xi$.
\end{prp}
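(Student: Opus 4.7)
The plan is to verify the four items of Condition~\ref{cond_g} for the pair $(H(\xi), A_{\xi_0})$ using the natural decomposition $H(\xi) = M_{\omega_\xi} + T_V$. By Lemma~\ref{DilatedMultOps}, conjugation gives $\mathrm{e}^{\ci t A_{\xi_0}} M_{\omega_\xi} \mathrm{e}^{-\ci t A_{\xi_0}} = M_{\omega_\xi^t}$ with $\omega_\xi^t := \omega_\xi\circ\gamma^t_{\xi_0}$, while Lemma~\ref{lem_int_strip} extends $T_V^z$ analytically to $z\in S_R$ with the uniform norm bound $\|T_V^z\|\leq C_V C_d\mathrm{e}^{(d+d')C'_\omega R}$. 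Hence $H_z(\xi) = M_{\omega_\xi^z} + T_V^z$ on $D(H(\xi))$ for $z\in S_R$, and we address the four items piece by piece.

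Items~\ref{cond_g_mourre}--\ref{cond_g_ext} are routine. For Item~\ref{cond_g_mourre}, the identification $D(M_{\omega_\xi^t}) = D(M_{\langle\cdot\rangle^{s_2}}) = D(H(\xi))$ (Remark~\ref{rem-3.2}.\ref{item-rem3.2.1}) follows from $|\gamma^t_{\xi_0}(k)-k|\leq C_\omega|t|$ (eq.~\eqref{FinPropSpeed}), Condition~\ref{cond_ex}.\ref{cond_ex_omega_est}, and Peetre's inequality; the required $H$-bound for $|t|\leq 1$ comes from the same estimates. For Item~\ref{cond_g_C1A}, a direct computation yields the bounded multiplication operator $\ci\ad_{A_{\xi_0}}(M_{\omega_\xi}) = v_{\xi_0}\cdot\nabla\omega_\xi$ (bounded thanks to the Gaussian factor $\mathrm{e}^{-k^2-\xi_0^2}$ in \eqref{eq_vectorfield}), while $\ci\ad_{A_{\xi_0}}(T_V) = \frac{\mathrm{d}}{\mathrm{d}t}T_V^t|_{t=0}$ is bounded by Lemma~\ref{lem_int_strip}. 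For Item~\ref{cond_g_ext}, $z\mapsto T_V^z\psi$ is analytic by Lemma~\ref{lem_int_strip}, and $z\mapsto M_{\omega_\xi^z}\psi$ is analytic because Lemma~\ref{lem_flow_strip} places $\gamma^z_{\xi_0}(k)\in S^d_{C_\omega R}\subset S^d_{2\tR}$ for $z\in S_R$ and $\omega_\xi$ is analytic on $S^d_{2\tR}$; strong analyticity in $L^2$ then follows from Morera's theorem applied to matrix elements, with dominating function $C(\xi)\langle k\rangle^{s_2}|\psi(k)|$.

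The main obstacle is Item~\ref{cond_g_uniform_bd}: the uniform resolvent bound $\sup_{z\in B_R^\C(0)}\|H_z(\xi)(H(\xi)+\ci)^{-1}\|\leq\tM\langle\xi\rangle^{s_1}$ with $\tM$ independent of $\xi$. I would split
\[
H_z(\xi)(H(\xi)+\ci)^{-1} = M_{\omega_\xi}(H(\xi)+\ci)^{-1} + M_{\omega_\xi^z-\omega_\xi}(H(\xi)+\ci)^{-1} + T_V^z(H(\xi)+\ci)^{-1}.
\]
The first summand equals $\one-\ci(H(\xi)+\ci)^{-1}-T_V(H(\xi)+\ci)^{-1}$ with norm at most $2+\|T_V\|$, and the third is bounded uniformly in $\xi$ (and $z$) by Lemma~\ref{lem_int_strip}. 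The task therefore reduces to establishing $\|M_{\omega_\xi^z-\omega_\xi}\|_\infty\leq C\langle\xi\rangle^{s_1}$ for $z\in B_R^\C(0)$.

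This is where the Gaussian structure of $v_{\xi_0}$ does the real work. Writing
\[
\omega_\xi^z(k)-\omega_\xi(k) = \int_0^1\nabla\omega_\xi\bigl(k+s(\gamma^z_{\xi_0}(k)-k)\bigr)\cdot(\gamma^z_{\xi_0}(k)-k)\,\mathrm{d}s,
\]
Condition~\ref{cond_ex}.\ref{cond_ex_omega_est} together with $|k'-k|\leq C_\omega R$ from \eqref{eq_gamma_bnd} and Peetre bounds the first factor by $C\langle\xi\rangle^{s_1}\langle k\rangle^{s_2}$, which alone would blow up in $k$. For the second factor, Lemma~\ref{lem_flow_strip} guarantees $|\re\gamma^s_{\xi_0}(k)|\geq|k|-C_\omega R$ and $|\im\gamma^s_{\xi_0}(k)|\leq C_\omega R$, so $|\mathrm{e}^{-(\gamma^s_{\xi_0}(k))^2}|\leq \mathrm{e}^{(C_\omega R)^2}\mathrm{e}^{-(|k|-C_\omega R)^2}$; this yields $|v_{\xi_0}(\gamma^s_{\xi_0}(k))|\leq C\mathrm{e}^{-|k|^2/4}\langle k\rangle^{s_2}$ for $|k|\geq 2C_\omega R$ (and is trivially bounded for small $|k|$), hence $|\gamma^z_{\xi_0}(k)-k|\lesssim |z|\,\mathrm{e}^{-|k|^2/4}\langle k\rangle^{s_2}$. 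The Gaussian overwhelms all polynomial factors, giving $|\omega_\xi^z(k)-\omega_\xi(k)|\leq C|z|\langle\xi\rangle^{s_1}$ uniformly in $k\in\R^d$, as required.
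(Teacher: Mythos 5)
Your proof is correct. Items~\ref{cond_g}.\ref{cond_g_mourre}--\ref{cond_g}.\ref{cond_g_ext} are handled essentially as in the paper, by invoking Lemmata~\ref{DilatedMultOps}, \ref{lem_flow_strip} and \ref{lem_int_strip} together with the commutator computation and \cite[Prop.~II.1]{Mo}. The genuine difference is in Item~\ref{cond_g}.\ref{cond_g_uniform_bd}. The paper estimates $\|M_{\omega_\xi\circ\gamma^z_{\xi_0}}(H(\xi)+\ci)^{-1}\|$ directly via two applications of Peetre's inequality, arriving at a bound of the form $C\langle\xi\rangle^{s_1}\langle C_\omega z\rangle^{s_2}\|M_{\langle k\rangle^{s_2}}(H(\xi)+\ci)^{-1}\|$ and then closes the argument by noting $D(H(\xi))=D(M_{\langle\cdot\rangle^{s_2}})$. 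You instead split $H_z(\xi)(H(\xi)+\ci)^{-1}$ into three pieces and reduce the $\xi$-dependent part to a single $L^\infty$-bound on $\omega_\xi^z-\omega_\xi$, obtained by writing the difference as a line integral of $\nabla\omega_\xi$ against the displacement $\gamma^z_{\xi_0}(k)-k$ and observing that the Gaussian factor in $v_{\xi_0}$ makes this displacement decay super-polynomially in $|k|$, so that its product with the polynomially growing gradient bound $\lesssim\langle\xi\rangle^{s_1}\langle k\rangle^{s_2}$ stays uniformly bounded. This route is genuinely different and arguably tighter: it localizes all the $\xi$-dependence in an elementary $L^\infty$ estimate on a bounded multiplication operator, whereas the paper's closing line tacitly uses that $\|M_{\langle k\rangle^{s_2}}(H(\xi)+\ci)^{-1}\|$ is bounded uniformly in $\xi$, which requires a short additional argument (e.g.\ that $\omega_1,\omega_2\geq 0$, so that $\omega_\xi\geq\omega_2$ dominates $\langle k\rangle^{s_2}$ for large $k$). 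Your splitting sidesteps that point. A tiny cosmetic slip: the bound on the argument of the exponential should carry a factor $d$ (since $|\im\gamma^s_{\xi_0}(k)|^2\leq d(C_\omega R)^2$ when summing the $d$ coordinates), but this is immaterial since the constant is absorbed anyway.
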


\begin{rem}
  Below, the proposition above is proven by checking each of the four
  assumptions in Condition~\ref{cond_g} directly. In view of
  Proposition~\ref{prp_equiv}, it can also be proven by proving
  Condition~\ref{cond_g}.\ref{cond_g_mourre} and the
  commutator bound in
  Proposition~\ref{prp_equiv}.\ref{item:equiv-cond-2}. In fact, two of
  the authors have persued this idea, see \cite{ER}. However, their
  proof is much longer than the one presented here.
\end{rem}
\begin{proof}[\textsc{Proof of Proposition~\ref{prp_cond21_ex}.}] We begin by establishing Condition \ref{cond_g}.\ref{cond_g_mourre}. 
Since $D(H(\xi))=  \dom = D(M_{\langle k\rangle^{s_2}})$, it suffices to show that
$\mathrm{e}^{\ci t A_{\xi_0}}\colon  \dom \to  \dom$ for $t\in [0,1]$ and that
\begin{equation}
\forall\psi\in \dom\colon\qquad \sup_{t\in [0,1]}\|M_{\langle k \rangle^{s_2}} \mathrm{e}^{\ci t A_{\xi_0}}\psi\| <\infty.
\label{eq_cond1}
\end{equation}
But this follows immediately from \eqref{Peetre}, \eqref{FinPropSpeed} and Lemma~\ref {DilatedMultOps}. 
 
As for Condition \ref{cond_g}.\ref{cond_g_C1A}, we abbreviate 
 $w_0:=\frac{1}{2} \nabla\cdot v_{\xi_0}$, note that 
 \[
 A_{\xi_0} = \ci \nabla_k \cdot v_{\xi_0} - \ci w_0 = \ci v_{\xi_0}\cdot \nabla_k + \ci w_0  
 \]
  and
compute as a commutator form on $C_0^\infty(\R^d)$:
\begin{equation}\label{eq_comm}
\begin{aligned}
H(\xi)A_{\xi_0} -  A_{\xi_0}H(\xi)  
& = -\ci  v_{\xi_0}\cdot\nabla_k\omega_\xi -\ci M_{w_0}T_{V} -\ci  T_{V}M_{w_0}\\
&\quad - \ci 
\sum_{\sigma=1}^d\bigl(T_{\ci x_\sigma V}M_{(v_{\xi_0})_\sigma}- M_{(v_{\xi_0})_\sigma}T_{\ci x_\sigma V}\bigr).
\end{aligned}
\end{equation}
The right-hand side defines a bounded operator.  Condition \ref{cond_g}.\ref{cond_g_C1A} now follows from
\cite[Prop.~II.1]{Mo}, since  $\mathrm{e}^{\ci t A_{\xi_0}} C_0^\infty(\R^d)\subset C_0^\infty(\R^d)$.

By Lemma~\ref{DilatedMultOps}, Lemma \ref{lem_flow_strip} and Lemma \ref{lem_int_strip}, we may conclude that the map $t\mapsto H_t(\xi)\psi$, for $\psi\in D(H(\xi))$, admits an analytic continuation into the strip $S_{R}$. Here $R$ is defined in \eqref{rprime}. The extension is given by
$H_z(\xi) = M_{\omega_\xi \circ \gamma_{\xi_0}^z} + T_V^z$. This establishes Condition \ref{cond_g}.\ref{cond_g_ext}.

It thus remains to examine whether or not Condition \ref{cond_g}.\ref{cond_g_uniform_bd} holds. Using  Lemma~\ref{lem_int_strip}, we may estimate for $\psi\in \cH$ and $z\in S_{R}$ : 
%(\ref{eq_cond3}) shows that the operator norm of the analytic continuation of $\mathrm{e}^{\ci t %A_\xi}T_V\mathrm{e}^{-\ci tA_\xi}$ is uniformly bounded on $B_r^{\mdsc}(0)$. Denote this upper bound by $C_V>0$. %Substituting $C_V$ for $\|T_V\|$ in (\ref{eq_cond1}) and then use (\ref{eq_cond2}) we obtain for $z\in S_{r'}$ the %estimate
\begin{align}
&
\|H_z(\xi)(H(\xi)+\ci)\inv\psi\| \nonumber\\
& \quad \leq 
C_V C_d\mathrm{e}^{(d+d')C'_\omega|z|}\|(H(\xi)+\ci)\inv\psi\| + \|M_{\omega_\xi\circ\gamma_{\xi_0}^z}(H(\xi)+\ci)\inv\psi\|\nonumber\\
& \quad \leq C_V C_d\mathrm{e}^{(d+d')C'_\omega|z|}\|\psi\| + C_\omega(2^{s_1}\langle\xi\rangle^{s_1}+1) 2^{s_2} \langle C_\omega z\rangle^{s_2}
\| M_{\langle k\rangle^{s_2}}(H(\xi)+\ci)^{-1}\psi\|.
\label{eq_const_d_xi}
\end{align}
Here we used that $s_2\geq s_1$, Condition~\ref{cond_ex}.\ref{item-cond_ex2} (from where the constant $C_\omega$ comes), \eqref{Peetre} and \eqref{eq_gamma_bnd} to estimate 
\begin{align*}
|\omega_\xi(\gamma_{\xi_0}^z(k))| & \leq C_\omega\langle \xi-\gamma_{\xi_0}^z(k)\rangle^{s_1} + C_\omega\langle \gamma_{\xi_0}^z(k)\rangle^{s_2}\\
&  \leq C_\omega(2^{s_1}\langle\xi\rangle^{s_1}+1)\langle \gamma_{\xi_0}^z(k)\rangle^{s_2}\\
&\leq  C_\omega(2^{s_1}\langle\xi\rangle^{s_1}+1) 2^{s_2} \langle C_\omega  z\rangle^{s_2}  \langle k\rangle^{s_2}.
\end{align*}
The proposition now follows since $D(H(\xi)) = D(M_{\langle k \rangle^{s_2}})$.
 \end{proof}

\begin{prp}\label{prop-ME} Assume  Condition \ref{cond_ex}. Let $(\lambda,\xi)\in \R\times\R^d\setminus \mathcal T$.
Then there exist $e,\kappa,C>0$ and a compact self-adjoint operator $K$, such that
\[
\ci [H(\xi),A_\xi] \geq e - CE_{H(\xi)}(\R\setminus [\lambda-\kappa,\lambda+\kappa]) - K,
\] 
in the sense of forms on $D(H(\xi))$. If $(\lambda,\xi)\not\in \Sigma_\mathrm{pp}$, then one may choose $K=0$.
\end{prp}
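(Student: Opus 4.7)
The plan is to split $H(\xi) = M_{\omega_\xi} + T_V$ and combine the pointwise non-negativity of the main part of $\ci[H(\xi), A_\xi]$ on the energy shell near $\lambda$ with compactness of the remaining piece coming from $T_V$. First I would specialize the commutator computation \eqref{eq_comm} to $\xi_0 = \xi$ and observe that, upon multiplying by $\ci$,
\[
\ci[H(\xi), A_\xi] = M_{g_\xi} + R_\xi, \qquad g_\xi(k) = \mathrm{e}^{-k^2-\xi^2}\bigl|\nabla_k \omega_\xi(k)\bigr|^2 \geq 0,
\]
since $v_\xi \cdot \nabla_k \omega_\xi = g_\xi$ by the definition of $v_\xi$, while $R_\xi := \ci[T_V, A_\xi]$ collects the remaining $T_V$-dependent summands of \eqref{eq_comm}. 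I claim $R_\xi$ is compact: each summand is a product of multiplication by a Schwartz function (built from $v_\xi$ or $w_0 := \tfrac12 \nabla\cdot v_\xi$, both Schwartz due to the Gaussian factor in $v_\xi$) and a convolution $T_U$ with $U \in \{V, \ci x_\sigma V\}$, which lies in $L^1(\R^d)$ with fast decay by Condition~\ref{cond_ex}.\ref{cond_ex_V}. Conjugating by the Fourier transform, such a product becomes an integral operator with kernel $\hat\varphi(x-y) U(y)$, manifestly Hilbert--Schmidt.

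Next I would extract positivity of $g_\xi$ near $\lambda$. By \eqref{eq_omega}, \eqref{Peetre}, and $s_2 \geq s_1$, the function $\omega_\xi$ is coercive, so $B_\kappa := \omega_\xi^{-1}([\lambda-\kappa,\lambda+\kappa])$ is compact for every $\kappa > 0$. The hypothesis $(\lambda, \xi) \notin \mathcal T$ forces $\nabla\omega_\xi$ to be non-vanishing on $B_0$; continuity and compactness then provide a small $\kappa$ with $|\nabla\omega_\xi| \geq \delta > 0$ uniformly on $B_\kappa$, while the Gaussian $\mathrm{e}^{-k^2-\xi^2}$ admits a positive lower bound on the bounded set $B_\kappa$. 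Consequently $g_\xi \geq e > 0$ on $B_\kappa$, yielding $M_{g_\xi} \geq e\, E_{M_{\omega_\xi}}([\lambda-\kappa, \lambda+\kappa])$ and, combined with compactness of $R_\xi$,
\[
\ci[H(\xi), A_\xi] \geq e - e\, E_{M_{\omega_\xi}}\bigl(\R \setminus [\lambda-\kappa, \lambda+\kappa]\bigr) + R_\xi.
\]

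To convert this into a Mourre estimate for $H(\xi)$ I would use that $T_V$ is $M_{\omega_\xi}$-compact: after Fourier transform, $T_V$ is multiplication by $V(x)$, and $V(x) (\omega_\xi(p)+\ci)^{-1}$ is compact by Rellich, since $V$ decays rapidly (Condition~\ref{cond_ex}.\ref{cond_ex_V}) and $\omega_\xi^{-1}$ vanishes at infinity by coercivity. The Helffer--Sj\"ostrand calculus then gives $f(H(\xi)) - f(M_{\omega_\xi})$ compact for every $f \in C_0^\infty(\R)$. Taking $\chi \in C_0^\infty(\R)$ with $\chi \equiv 1$ on $[\lambda-\kappa/2, \lambda+\kappa/2]$ and supported in $[\lambda-\kappa, \lambda+\kappa]$, one has $\chi^2(M_{\omega_\xi}) \leq E_{M_{\omega_\xi}}([\lambda-\kappa, \lambda+\kappa])$, $\chi^2(H(\xi)) \geq E_{H(\xi)}([\lambda-\kappa/2, \lambda+\kappa/2])$, and $\chi^2(M_{\omega_\xi}) - \chi^2(H(\xi))$ compact. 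Collecting all compact errors into a single compact $K$ and relabeling $\kappa/2 \to \kappa$ delivers the first claim with $C = e$.

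For the last assertion, if $\lambda \notin \sigma_\mathrm{pp}(H(\xi))$ then $E_{H(\xi)}([\lambda-\kappa', \lambda+\kappa']) \to E_{H(\xi)}(\{\lambda\}) = 0$ strongly as $\kappa' \to 0$, and compactness of $K$ forces $\|E_{H(\xi)}([\lambda-\kappa',\lambda+\kappa']) K E_{H(\xi)}([\lambda-\kappa',\lambda+\kappa'])\| \to 0$. Sandwiching the Mourre estimate with $E_{H(\xi)}([\lambda-\kappa', \lambda+\kappa'])$ for $\kappa' < \kappa$ renders $-K$ arbitrarily small and absorbs it at the cost of a mild reduction of $e$, yielding the estimate with $K = 0$. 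The principal technical hurdle will be establishing the compactness of $R_\xi$, which reduces to the Hilbert--Schmidt kernel structure of each commutator term after Fourier transform; the spectral-projection swap $E_{M_{\omega_\xi}} \rightsquigarrow E_{H(\xi)}$ is routine Mourre machinery.
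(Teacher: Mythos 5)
Your proof is correct and follows essentially the same route as the paper: both decompose $\ci[H(\xi),A_\xi]=M_{g_\xi}+(\text{compact})$, exploit coercivity of $\omega_\xi$ together with $\lambda\notin\mathcal T(\xi)$ to bound $g_\xi$ from below on a compact energy shell, and use the Helffer--Sj\"ostrand calculus to transfer the spectral cutoff from $M_{\omega_\xi}$ to $H(\xi)$ at the cost of an additional compact error. The $K=0$ refinement is handled in the same way in both arguments, by shrinking the window and using that compactness and $E_{H(\xi)}(\{\lambda\})=0$ force the sandwiched $K$ to vanish in norm.
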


\begin{proof} Note first that it suffices to establish the form estimate on $C_0^\infty(\R^d)$. 

  We abbreviate
  $g_{\xi}(k):=\mathrm{e}^{-k^2-\xi^2}|\nabla_k\omega_{\xi}(k)|^2$ 
  and write
 \begin{equation}\label{eq_comm2}
    \ci[A_{\xi},H({\xi})] 
    = 
    M_{g_{\xi}} + K',
  \end{equation}
   where $K'$ is $\ci$ times the sum of the terms on the right-hand side in \eqref{eq_comm} involving $V$.
   
   Using the $\lambda\not\in \mathcal T(\xi)$ and that $\mathcal T(\xi)$ is closed, there exists $\kappa>0$ such that
   $[\lambda-2\kappa,\lambda+2\kappa]\subset \R\setminus \mathcal T(\xi)$.
   Put 
   \[
   e := \inf\bigl\{g_{\xi}(k)\, \big| \, |\, k\in\R^d \textup{ with } |\omega_{\xi}(k)-\lambda|\leq 2\kappa\bigr\}>0
   \]
   and $C:= \sup \{|g_{\xi}(k)\,|\, k\in\R^d \}<\infty$.

    Choose an  $f\in C_0^\infty(\R)$ satisfying: $\mathrm{supp}(f) \subset [\lambda-2\kappa,\lambda+2\kappa]$,
     $0\leq f\leq 1$, and $f(\lambda')=1$ for
     $|\lambda'-\lambda|\leq \kappa$. 
     
     We may now estimate using the just chosen $e,C$ and $f$ :
      \begin{align}\label{EstOnMg}
      \nonumber   M_{g_{\xi}} &= M_{g_{\xi}}f(M_{\omega_{\xi}})^2 + M_{g_{\xi}}(1-f^2(M_{\omega_{\xi}}))\\
     \nonumber   &\geq e  f(M_{\omega_{\xi}})^2\\
        & \geq e - 2 e(1-f(M_{\omega_{\xi}})). 
     \end{align}

   To pass from $1-f(M_{\omega_{\xi}})$ to $1-f(H(\xi))$ we compute
    \begin{align*}
      f(H(\xi))-f(M_{\omega_{\xi}})
      &=\frac{1}{\pi}\int_\C\bar\partial_z\tilde f(z)((H(\xi)-z)\inv-(M_{\omega_{\xi}}-z)\inv)\,\mathrm{d} z\\
      &=\frac{1}{\pi}\int_\C\bar\partial_z\tilde
      f(z)(H(\xi)-z)\inv T_V (M_{\omega_{\xi}}-z)\inv\,\mathrm{d} z.
    \end{align*}
    Here $\tilde f$ is an almost analytic extension of $f$.
    The operator $T_V (M_{\omega_{\xi}}-z)\inv$ is  compact (for the same reason $K'$ was compact).
    This shows that $K''= f(H(\xi))-f(M_{\omega_{\xi}})$ is a compact operator.
    
    We may now combine \eqref{eq_comm2}, \eqref{EstOnMg} and the estimate $1-f(H(\xi))\leq E_{H(\xi)}(\R\setminus [\lambda-\kappa,\lambda+\kappa])$ to arrive at the Mourre estimate
    \[
     \ci[A_{\xi},H({\xi})] \geq e - 2 e E_{H(\xi)}(\R\setminus [\lambda-\kappa,\lambda+\kappa])
     - K, 
    \]
    where $K = - K' +2 e K''$ is compact.
    
    Finally, if $\lambda\not\in \Sigma_\mathrm{pp}(\xi)$, then $K E_{H(\xi)}([\lambda-\kappa,\lambda+\kappa])\to 0$ for $\kappa\to 0$ in operator norm. This implies the remaining claim.
\end{proof}

\begin{prp}\label{prop-ex-is-ok}
  Assume Condition \ref{cond_ex}. Suppose $(\lambda_0,\xi_0)\in \Sigma_\mathrm{pp}\setminus \mathcal T$.
 Let $U:=B_{\tR}^{\R^d}(\xi_0)$ and $U_{\C}:=B_{\tR}^{\C^d}(\xi_0)$. Then  Condition~\ref{cond_analytic} is satisfied with $A = A_{\xi_0}$ and $M$ replaced by $M\sup_{\xi\in U} \langle\xi\rangle^{s_1}$.
\end{prp}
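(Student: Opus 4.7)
The plan is to verify Condition~\ref{cond_analytic}.\ref{item-AnaCond-1}--\ref{cond_g_fam} in turn with $A:=A_{\xi_0}$, mostly by assembling earlier results. Items \ref{item-AnaCond-1} and \ref{item-AnaCond-2} are bookkeeping: by Remark~\ref{rem-3.2}.\ref{item-rem3.2.1}, $D(H(\xi))=D(M_{\langle k\rangle^{s_2}})=\dom$ is $\xi$-independent; and Proposition~\ref{prp_cond21_ex} applied to the pair $(H(\xi),A_{\xi_0})$ delivers Condition~\ref{cond_g} with the radius $R$ of \eqref{rprime} (built only from $\tR,C_\omega,C'_\omega,a'$, all $\xi$-independent) and the constant $M=\widetilde M\langle\xi\rangle^{s_1}$, so that taking the supremum over the bounded set $U$ yields the uniform bound $\widetilde M\sup_{\xi\in U}\langle\xi\rangle^{s_1}$ featured in the statement.

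For item \ref{item-AnaCond-3}, the eigenvalue requirement $\lambda_0\in\sigma_\mathrm{pp}(H(\xi_0))$ is immediate from $(\lambda_0,\xi_0)\in\Sigma_\mathrm{pp}$, and Proposition~\ref{prop-ME} at $(\lambda_0,\xi_0)\notin\mathcal T$ supplies a Mourre estimate implying the form demanded by Condition~\ref{cond_Mourre_estimateNEW}.\ref{item:CondMourre} (the $\langle H\rangle$ factor there is harmless because $\langle H(\xi_0)\rangle\geq 1$). For the conjugation required in item \ref{cond_g_conj}, the natural choice is pointwise complex conjugation $(\mathcal C f)(k):=\overline{f(k)}$ on $L^2(\R^d)$: it preserves $\dom$ and $D(A_{\xi_0})$, commutes with $M_{\omega_{\xi_0}}$ since $\omega_{\xi_0}$ is real on $\R^d$, and anticommutes with $A_{\xi_0}$ via $\mathcal C\ci=-\ci\mathcal C$, $\mathcal C\nabla_k=\nabla_k\mathcal C$ and the reality of $v_{\xi_0}$; invariance of $T_V$ under $\mathcal C$ follows from the real (and even) pair interaction by a direct kernel computation.

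For item \ref{cond_g_fam}, pick any $\theta_0\in B_R^\C(0)$ with $\Im\theta_0\neq 0$. The proof of Proposition~\ref{prp_cond21_ex} yields, for real $\xi\in U$, the decomposition $H_{\theta_0}(\xi)=M_{\omega_\xi\circ\gamma_{\xi_0}^{\theta_0}}+T_V^{\theta_0}$; Lemma~\ref{lem_int_strip} manufactures the bounded piece $T_V^{\theta_0}$ from $A_{\xi_0}$ and $V$ only, so it is $\xi$-independent. By Lemma~\ref{lem_flow_strip}, $\gamma_{\xi_0}^{\theta_0}(k)\in S_\tR^d$ for every real $k$, so for $\xi\in U_\C=B_\tR^{\C^d}(\xi_0)$ both $\gamma_{\xi_0}^{\theta_0}(k)$ and $\xi-\gamma_{\xi_0}^{\theta_0}(k)$ lie in $S_{2\tR}^d$ (the radius $\tR$ of $U_\C$ is precisely what secures $|\Im\xi|+\tR<2\tR$). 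Condition~\ref{cond_ex}.\ref{cond_ex_v_omega} then renders $\omega_\xi\circ\gamma_{\xi_0}^{\theta_0}(k)$ pointwise analytic in $\xi\in U_\C$, and Condition~\ref{cond_ex}.\ref{cond_ex_omega_est} together with Peetre's inequality \eqref{Peetre} bounds it by $C\langle k\rangle^{s_2}$ locally uniformly in $\xi$, upgrading pointwise to vector-valued Type (A) analyticity on the common domain $\dom$.

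The main technical hurdle is producing the conjugation in item \ref{item-AnaCond-3}; it rests on the real structure of both $H(\xi_0)$ and $A_{\xi_0}$ in momentum space and on the natural symmetries (reality, evenness) of the pair interaction. The remaining items are packaging of earlier propositions and the flow lemmata.
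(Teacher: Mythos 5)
Your argument is correct and follows essentially the same route as the paper's: verify each item of Condition~\ref{cond_analytic} by invoking Remark~\ref{rem-3.2}, Proposition~\ref{prp_cond21_ex}, Proposition~\ref{prop-ME}, standard complex conjugation, and the flow lemmata. One place where you do something cleaner than the paper deserves comment. In the verification of Condition~\ref{cond_analytic}.\ref{cond_g_fam}, the paper's proof works with $\gamma_\xi^{z_0}$ and $J_\xi^{z_0}$ for complex $\xi\in U_\C$ and argues that $|\arg(J_\xi^{z_0}(k))|<\pi$; since the dilation group is generated by the fixed operator $A_{\xi_0}$, these should read $\gamma_{\xi_0}^{z_0}$ and $J_{\xi_0}^{z_0}$, and your observation that $T_V^{\theta_0}$ is built from $A_{\xi_0}$ and $V$ alone, hence $\xi$-independent, correctly short-circuits that discussion: only the multiplication piece $M_{\omega_\xi\circ\gamma_{\xi_0}^{\theta_0}}$ carries $\xi$-dependence, and its analyticity follows from $\gamma_{\xi_0}^{\theta_0}(k)\in S_{RC_\omega}^d\subset S_\tR^d$ and $\xi-\gamma_{\xi_0}^{\theta_0}(k)\in S_{2\tR}^d$ for $\xi\in U_\C$, exactly as you argue. (Your parenthetical ``$|\Im\xi|+\tR<2\tR$'' should be ``$|\Im\xi|+|\Im(\gamma_{\xi_0}^{\theta_0}(k))|<\tR+\tR$''; the conclusion stands.) Finally, you are right that the kernel computation showing $\conj T_V=T_V\conj$ for $\conj g(k)=\overline{g(k)}$ needs $\widecheck V$ real, i.e.\ $V$ real \emph{and even}; the paper's Condition~\ref{cond_ex} does not state evenness, so this is an implicit hypothesis shared by both proofs, but you do well to name it explicitly.
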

%Note that the threshold set $\mathcal{T}_\xi$ is locally finite and closed, due to $|\omega_{\xi_0}(k)|\rightarrow %\infty$ as $|k|\rightarrow\infty$ 
\begin{proof}
  Let $\xi\in U$ and put $U(t):=\mathrm{exp}(\ci A_{\xi_0}t)$ for $t\in\R$. That Condition~\ref{cond_analytic}.\ref{item-AnaCond-1} holds, that is that $D(H(\xi))=:\dom$ is independent of $\xi$,  was discussed in Remark~\ref{rem-3.2}.\ref{item-rem3.2.1}.
  
 Condition~\ref{cond_analytic}.\ref{item-AnaCond-2}
  follows from Proposition~\ref{prp_cond21_ex} (with $M = \sup_{\xi\in U}\langle \xi\rangle^{s_1} \widetilde{M}$ and the $R$ from \eqref{rprime}). Hence it remains to establish Condition~\ref{cond_analytic}.\ref{item-AnaCond-3} and~\ref{cond_g_fam} (with the same $R$).

  Condition~\ref{cond_Mourre_estimateNEW}.\ref{item:Cond-ev} is satisfied by assumption, and
 Condition~\ref{cond_Mourre_estimateNEW}.\ref{item:CondMourre} follows from Proposition~\ref{prop-ME}, since 
 $(\lambda_0,\xi_0)\not\in \mathcal T$ and $\langle H(\xi_0)\rangle \geq 1$. This verifies Condition~\ref{cond_analytic}.\ref{item-AnaCond-3}.

      Put
      \[
      r_0 =  \frac{R}{1+C_\omega},
      \]
       where $R>0$ is as in \eqref{rprime} and the constants $C_\omega$ and $C_\omega'$ are from \eqref{eq_bnd_on_v}
       and \eqref{eq_bnd_on_Dv}, respectively.
     In order to verify Condition~\ref{cond_analytic}.\ref{cond_g_fam},
      choose $z_0\in B_{r_0}^{\C}(0)\subset B_R^{\C}(0)$ with $\Im(z_0)>0$.

      Due to Lemma \ref{lem_flow_strip}, $\gamma_\xi^{z_0}(k) \in S_{r_0 C_\omega}^d \subset S_R^d\subset S_{\tilde{R}}^d$ for $k\in\R^d$ and $\xi\in U_\C\subset S_{\tR}^d$. 
      We therefore have $\xi-\gamma_\xi^{z_0}(k)\in S_{2\tR}^d$
      for all $\xi\in U_\C$ and $k\in \R^d$.
      
       By Condition~\ref{cond_ex}.\ref{item-cond_ex-1}, the map
      $U\ni \xi\to  M_{\omega_\xi\circ \gamma^{z_0}_\xi} \psi$ extends to an analytic function in $U_\C$ for every $\psi\in\dom$. To see that $\xi \to T_V^{z_0}\psi$ extends from $U$ to an analytic function defined for
      $\xi\in U_\C$, we observe from \eqref{DilatedTV} (with $t$ replaced by $z_0$ on the right-hand side) that it suffices to ensure that $|\arg(J_\xi^{z_0}(k))|<\pi$ for all
      $k\in \R^d$ and $\xi\in U_\C$. But this follows from \eqref{eq_bnd_on_Dv}, \eqref{eq_flow_J} and the estimate
      $|\int_0^{z_0} \nabla\cdot v_\xi(\gamma_\xi^s(k))\, \mathrm{d} s| \leq |z_0| d C_\omega'$, valid for all
      $k\in \R^d$ and $\xi\in S_\tR^d$. Here we used that $r_0\leq R \leq \pi/(1+dC_\omega')$, cf.~\eqref{rprime}. 
      
      It follows that $U_{\C}\ni\xi\mapsto H_{z_0}(\xi)\psi$ is analytic for all $\psi\in\dom$.
      This verifies Condition~\ref{cond_analytic}.\ref{cond_g_fam}, since all operators have a common domain by Proposition \ref{prp_cond21_ex}.            
%        Let $k\in \C^d, k'\in S_R^d$ with $k+k'\in S_{2R}^d$. We use Condition \ref{cond_ex}.\ref{cond_ex_omega_est} to estimate
%        \begin{align*}
%        |\omega_1(k+k')| \leq \tilde{C} \langle k+k'\rangle^{s_1} \leq \tilde{C}2^{s_1}\langle k \rangle^{s_1} \langle k'\rangle^{s_1} \leq 2^{s_1}\langle k \rangle^{s_1}(|\omega_1(k')| + \tilde{C}^2).
%        \end{align*}
%        
%        We choose $k= \xi-\xi'$ and $k'= \xi'-\gamma_{\xi_0}^{-t_0}(k)$, where $\xi,\xi'\in B_R^{\C}(0)$, and compute
%        
%        \begin{align*}
%       \left\|M_{\omega_\xi\circ\gamma_{\xi_0}^{-t_0}}\psi \right\|^2 
%       &\leq 
%       \left[ \int_{\R^d}|\omega_2(\gamma_{\xi_0}^{-t_0}(k))|^2|\psi(k)|^2\mathrm{d}k \right]^{\frac{1}{2}}\\
%       &~~~+ 
%        2^{s_1}\langle R\rangle^{s_1}\left[\int_{\R^d} (|\omega_1(\xi'-\gamma_{\xi_0}^{-t_0}(k))|+\tilde{C}^2)^2|\psi(k)|^2\mathrm{d}k\right]^{\frac{1}{2}}.
%        \end{align*}
%        Since we can reverse roles of $\xi,\xi'$, the estimate implies $D(H_{t_0}(\xi))=D(M_{\omega_\xi\circ\gamma_{\xi_0}^{-t_0}})=D(M_{\omega_{\xi'}\circ\gamma_{\xi_0}^{-t_0}})= D(H_{t_0}(\xi'))$. Thus, the family $\{H_{t_0}(\xi)\}_{\xi\in B_R^{\C}(0)}$ is of type $A$ by definition.
\end{proof}

\begin{proof}[Proof of Theorem~\ref{thm-ex}]
Let $M = \R\times\R^d$ and $N = \R^d$, both real analytic manifolds. 
To show that $\mathcal T$ is sub-analytic, we write it as the image under a real analytic map of the semi-analytic subset 
\[
Y = \bigl\{(\lambda,\xi,k)\in M\times N\, \big| \, \omega_\xi(k) -\lambda=0\big\} \cap \Bigl(\cap_{j=1}^d\bigl\{(\lambda,\xi,k)\in M\times N \, \big|\, \partial_{k_j}w_\xi(k) = 0 \bigr\}\Bigr)
\]
 of $M\times N = \R\times\R^d\times\R^d$. Here we used that the semi-analytic subsets of $M$ form a ring.
 Let $\pi\colon M\times N\to M$ be the projection $\pi(\lambda,\xi,k) = (\lambda,\xi)$. 
Then $\mathcal T = \pi(Y)$. Since $|\omega_\xi(k)| \to\infty$ as $|k|\to\infty$, this implies that $\mathcal T$ is sub-analytic ($\pi_{|Y}$ is a proper map). Since $\mathcal T$ is a closed subset of $M$, this  finishes the proof of \ref{item-thm-ex-1}.

Fix $\xi\in \R^d$.
To see that $\mathcal T(\xi)$ is locally finite, consider the map $\pi_\xi$ from 
$M_\xi = \{(\omega_\xi(k),k)\, |\, k\in \R^d\} $ into $\R$ defined by setting $\pi_\xi(\lambda,k) = \lambda$. 
Since $M_\xi$ and $\R$ are real analytic manifolds and $\pi_\xi$ is a real analytic proper map, we get
from Theorem~\ref{thm-strat} sub-analytic stratifications $\{S_\alpha\}_{\alpha\in A}$ of $M_\xi$ and $\{T_\beta\}_{\beta\in B}$ of $\R$.

Let $\lambda\in \mathcal T(\xi)$. Then there exists $k\in \R^d$, such that $\omega_\xi(k) = \lambda$ and $\nabla_k \omega_\xi(k)=0$. Let $\alpha\in A$ be such that $(\lambda,k)\in S_\alpha$. 
Since $\mathrm{d} \pi_{\xi|S_\alpha}(\lambda,k)(\eta,u) = \eta$ and $\eta = u\cdot \nabla_k\omega_\xi(k)$ for $(\eta,u)\in T_{(\lambda,k)} S_\alpha$, we conclude that $\mathrm{rank}(\mathrm{d} \pi_{\xi|S_\alpha}(\lambda,k)) = 0$.
Let $\beta\in B$ be such that $\pi_\xi(S_\alpha) = T_\beta$. Then $T_\beta$ is a zero-stratum, hence a singleton, which must be $T_\beta = \{\lambda\}$.
 By local finiteness of the stratification $\{T_\beta\}_{\beta\in B}$,  we may therefore conclude that $\mathcal T(\xi)$ is locally finite. 
 This concludes the proof of \ref{item-thm-ex-2}.

 We now turn to \ref{item-semian-ex}. In order to verify that $\Sigma_\mathrm{pp}\setminus \mathcal T$ is a (relatively) closed and semi-analytic subset of $\R\times \R^d\setminus \mathcal T$, pick $(\lambda,\xi) \in \R\times \R^d\setminus \mathcal T$. 
 
 If $(\lambda,\xi)\not\in \Sigma_\mathrm{pp}$, then we obtain from Proposition~\ref{prop-ME}
 constants $e,\kappa,C>0$ such that
 \[
 \ci [H(\xi),A_\xi]\geq e - CE_{H(\xi)}([\lambda-\kappa,\lambda+\kappa]).
 \]
 By a continuity argument, there exists a sufficiently small open neighborhood $W\subset \R\times\R^d\setminus \mathcal T$ of 
 $(\lambda,\xi)$,  perhaps new smaller positive constants $e'<e$, $\kappa'<\kappa$ and a bigger $C'>C$, such that
 \[
 \forall (\lambda',\xi')\in W: \qquad \ci [H(\xi'),A_{\xi'}]\geq e' - C'E_{H(\xi')}([\lambda'-\kappa',\lambda'+\kappa']).
 \]
 Here we used that the map $\xi'\to [H(\xi'),A_{\xi'}]\in \mathcal B(\mathrm L^2(\R^d))$ is continuous, cf. \eqref{eq_comm}. 
 Hence, by the Virial Theorem \cite{GeoGer_virial}, $\Sigma_\mathrm{pp}\cap W  =\emptyset$. This shows that $\Sigma_\mathrm{pp}\setminus \mathcal T$ is a closed subset of $\R\times\R^d\setminus \mathcal T$, and that $W\cap (\Sigma_\mathrm{pp}\setminus \mathcal T)\in\mathcal O(W)$. Recall from Definition~\ref{def-semi-sub}.\ref{item-def-OW}, the definition of $\mathcal O(W)$.  
 
 As for the case $(\lambda,\xi)\in \Sigma_\mathrm{pp}$, we obtain from Theorem~\ref{thm_semianalytic} a neighborhood $W$ of $(\lambda,\xi)$,
 such that $\Sigma_\mathrm{pp}\cap W\in\mathcal O(W)$. Here we used Proposition~\ref{prop-ex-is-ok} to ensure the applicability of Theorem~\ref{thm_semianalytic}.
This completes the proof of \ref{item-semian-ex}.

The last property \ref{item-thm-ex-4} is a standard consequence of having a Mourre estimate satisfied for energies
$\lambda\in \R\setminus \mathcal T(\xi)$.
\end{proof}

\renewcommand{\thesection}{\Alph{section}}
\renewcommand{\theequation}{\Alph{section}.\arabic{equation}}
\setcounter{section}{0}
\setcounter{equation}{0}
\section{Approximate Eigenstates for Closed Operators}

\begin{lem}\label{lem-dich} Let $T$ be a densely defined closed operator on a Hilbert space $\cH$. Let $\mu\in \sigma(T)$.
At least one of the following two properties hold true:
\begin{enumerate}
\item\label{item-dich-1} There exists a sequence $\{\psi_n\}_{n=1}^\infty \subset D(T)$ with $\|\psi_n\| = 1$ for all $n$ and
\[
\lim_{n\to\infty} \|(T-\mu)\psi_n\| = 0,
\] 
\item\label{item-dich-2} There exists a sequence $\{\phi_n\}_{n=1}^\infty \subset D(T^*)$ with $\|\phi_n\| = 1$ for all $n$ and
\[
\lim_{n\to\infty} \|(T^*-\bar{\mu})\phi_n\| = 0.
\] 
\end{enumerate}
\end{lem}

\begin{proof} Note that $\mu\in \sigma(T)$ implies that $\bar{\mu}\in\sigma(T^*)$. Furthermore, if $T-\mu$ has a bounded left-inverse $L$, then $L^*$ is a bounded right-inverse of $T^*-\bar{\mu}$. Conversely, if $L$ is a bounded left-inverse of $T^*-\bar{\mu}$, then $L^*$ is a bounded right-inverse of $T-\mu$. In particular, it is not possible for $T-\mu$ and $T-\bar{\mu}$ to both have a bounded left-inverse. Therefore we may suppose that $T-\mu$ does not have a bounded left-inverse.%\begin{proof} Suppose that $\lambda\in\rho(T)$. Then there exists a bounded inverse $B$ such that $B(T-\lambda)=1$. Thus, 
%\begin{equation*}
%\|\psi\| = \|B(T-\lambda)\psi\| \leq \|B\| \|(T-\lambda)\psi\|
%\end{equation*} 
%and $\|(T-\lambda)\psi\|\geq c\|\psi\|$ follows from $\|B\|\neq 0$.

Suppose \ref{item-dich-1} is false. Then there exists $c>0$, such that for all $\psi\in D(T)$ we have
$\|(T-\mu)\psi\|\geq c\|\psi\|$. This coercive estimate ensures that 
$T-\mu$ is injective and that $V = \Ran(T-\mu)$ is closed. If $V=\cH$, then $T-\mu$ has a bounded left inverse by the Closed Graph Theorem, which we assumed it did not have, and hence $V\subsetneq\cH$. We may now pick $\phi\in V^\perp$ with $\|\phi\| = 1$.
Then, for $\psi\in D(T)$,  $|\langle \phi, T\psi\rangle| = |\langle \phi ,\mu\psi\rangle| \leq |\mu| \|\psi\|$. This shows that
$\phi\in D(T^*)$. For any $\psi\in D(T)$, we may now compute $\langle T^*\phi,\psi\rangle = \langle \phi,T\psi\rangle = 
\langle \phi,\mu\psi\rangle = \langle \bar{\mu}\phi,\psi\rangle$, and conclude that $T^*\phi = \bar{\mu} \phi$. This concludes the proof, since \ref{item-dich-2} will hold with the constant sequence $\phi_n = \phi$ for all $n$.%Conversely, let us assume that there exists $c>0$ such that $\|(T-\lambda)\psi\|\geq c\|\psi\|$. Note that $V:=\mathrm{Ran}(T-\lambda)$ is a closed subspace of $\mathrm{H}$. Choose $\phi \perp V$ and compute 
%\begin{equation*}
%|\langle \phi, T\psi\rangle|\leq |\langle \phi, (T-\lambda)\psi\rangle| + |\lambda| |\langle \phi, \psi\rangle | \leq |\lambda|\|\phi\|\|\psi\|
%\end{equation*}
%for every $\psi\in D(T)$. This implies that $\phi \in D(T^*)$ and we can thus calculate
%\begin{equation*}
%\langle (T^*-\overline{\lambda})\phi, \psi\rangle = \langle \phi, (T-\lambda)\psi \rangle =0
%\end{equation*}
%whenever $\psi\in D(T)$. Since $D(T)$ is dense, we have now established that $\overline{\lambda}\in \sigma_{\mathrm{pp}}(T^*)$. The intertwining relation $T\mathcal{C}=\mathcal{C}T^*$ then gives the contradiction that $\lambda\in \sigma_{\mathrm{pp}}(T)$. Therefore, we must have $V=\mathcal{H}$. The preceding argument shows that $T-\lambda$ is bijective and we can conclude that it has a left inverse $B$. Since
%\begin{equation*}
%\|B(T-\lambda)\psi\| =\|\psi\|\leq \frac{1}{c}\|(T-\lambda)\psi\|,
%\end{equation*}
%$B$ is bounded. 
%Due to $\conj^2 = 1$, the anti-linearity of $\conj$ and the intertwining property, the estimate $\|(T-\lambda)\psi\| \geq c\|\psi\|$ on $D(T)$ can be re-written as
%\fxwarning{In the notes you gave me this (18) was with $\leq$. Moreover you just stated that it was true. But doesn't this use that $T$ is normal? (which you assume in the beginning) How does this relate to the case where $T$ is just closed? To connect it with our case i added the assumption $D(T)=D(T^*)$ which holds for $H_\theta(\xi)$ 
\end{proof}

\section{Semi- and Sub-analytic geometry}\label{sec-semi-sub}

In this appendix we rather briefly summarize the notions of \emph{semi-analytic} and \emph{sub-analytic} sets.
For further background, we refer the reader to \cite{BM,Del} and references therein.

\begin{defi}\label{def-semi-sub} Let $M$ be a real analytic manifold.
\begin{enumerate}
\item\label{item-def-OW} Let $W\subset M$ be an open nonempty set. We write $\mathcal O(W)$ for the smallest ring\footnote{collection of sets stable under complement as well as under finite intersections and unions.} of subsets of $W$  
containing sets of the form $\{y\in W \, | \, f(y)>0\}$ and  $\{y\in W \, | \, f(y)=0\}$ where $f$ ranges over real analytic functions $f\colon W\to \R$.
\item\label{item-def-semi} A subset $X\subset M$ is called a semi-analytic subset of $M$
if: for any $x\in M$, there exists an open neighborhood $W\subset M$ of $x$, such that $ X\cap W\in \mathcal O(W)$.
\item\label{item-def-sub} A subset  $X\subset M$ is called a sub-analytic subset of $M$ if: for each point $x\in M$,
there exists an open neighborhood $x\in W\subset M$, a real analytic manifold $N$ and a 
semi-analytic subset $Y\subset M\times N$, such that
\begin{itemize}
\item The closure $\overline{Y}$ inside $M\times N$ is compact. 
\item $\pi(Y) = X\cap W$, where $\pi\colon M\times N\to \R^n$ is the projection onto the first coordinate. 
\end{itemize}
\end{enumerate}
\end{defi} 
The semi-analytic as well as the sub-analytic subsets of $M$ form rings of subsets of $M$.  Semi-analytic subsets are of course sub-analytic as well. The converse is in general false, but if $\dim(M)\leq 2$, the two notions are the same.

\begin{defi}
Let $M$ be a real analytic manifold and $X\subset M$ a semi-analytic (sub-analytic) subset. Let $\{S_\alpha\}_{\alpha\in A}$ be a collection of subsets of $X$. We say that $\{S_\alpha\}_{\alpha\in A}$ is a \emph{semi-analytic stratification} (\emph{sub-analytic stratification}) if
\begin{itemize}
\item Each $S_\alpha$ is a connected real analytic manifold.
\item $\cup_{\alpha\in A} S_\alpha = X$ and $S_\alpha\cap S_\beta = \emptyset$ for $\alpha\neq \beta$.
\item For any $K\subset M$ compact, the set $\{\alpha\in A\,|\, S_\alpha\cap K\neq \emptyset\}$ is finite. (\emph{Local finiteness}.)
\item If $\alpha\neq \beta$ and $\overline{S}_\alpha\cap S_\beta\neq \emptyset$, then $S_\beta\subset \partial S_\alpha$. (\emph{Frontier condition}.)
\item Each $S_\alpha$ is semi-analytic (sub-analytic) as a subset of $M$. 
\end{itemize} 
\end{defi}
 The sets $S_\alpha$ are called strata, more specifically $k$-strata if $\dim(S_\alpha) = k$.
 We note that any semi-analytic (sub-analytic) subset $X\subset M$ admits a semi-analytic (sub-analytic) stratification.

 In this paper we need the following result: 
 \begin{thm}\label{thm-strat}
 Let $M$ and $N$ be real analytic manifolds and $\pi \colon M\to N$
 a proper real analytic map.
 Then there exist sub-analytic stratifications $\{S_\alpha\}_{\alpha \in A}$ of $M$ and $\{T_\beta\}_{\beta\in B}$ of $N$, such that
 for each $\alpha\in A$, there exists $\beta\in B$ with
 \begin{itemize}
 \item $\pi(S_\alpha) = T_\beta$.
 \item $\mathrm{rank}(\mathrm{d}\pi_{|S_\alpha}(s)) = \dim(T_\beta)$ for all $s\in S_\alpha$.
 \end{itemize}
\end{thm}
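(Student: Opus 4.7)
The plan is to build the two stratifications by an iterative refinement procedure relying on three standard facts in the sub-analytic category:
(F1) any locally finite family of sub-analytic subsets of a real analytic manifold admits a common sub-analytic (in fact Whitney) stratification of the ambient manifold in which each member of the family is a union of strata;
(F2) the direct image of a sub-analytic set under a proper real analytic map is sub-analytic; and
(F3) sub-analytic subsets are stable under Boolean operations and under preimages by analytic maps.

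First I would pass to the rank decomposition of $M$. For each integer $k$ with $0\leq k\leq \min(\dim M,\dim N)$, the set
\[
M_k := \bigl\{x\in M \,\big|\, \mathrm{rank}(\mathrm{d}\pi_x)=k\bigr\}
\]
is semi-analytic, being cut out locally by the vanishing and non-vanishing of minors of the Jacobian of $\pi$ in analytic charts. Applying (F1) to $\{M_k\}_k$ yields a sub-analytic stratification $\{S^{(1)}_\alpha\}_{\alpha\in A_1}$ of $M$ on each stratum of which $\mathrm{rank}(\mathrm{d}\pi)$ is constant with value $r_\alpha$. Next, by (F2) together with the hypothesis that $\pi$ is proper, each $\pi(S^{(1)}_\alpha)$ is sub-analytic in $N$, and properness together with local finiteness of $\{S^{(1)}_\alpha\}$ implies that $\{\pi(S^{(1)}_\alpha)\}_{\alpha\in A_1}$ is locally finite in $N$. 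Invoking (F1) once more produces a sub-analytic stratification $\{T_\beta\}_{\beta\in B}$ of $N$ compatible with this family, so that each $\pi(S^{(1)}_\alpha)$ is a union of strata $T_\beta$.

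Then I would refine the stratification of $M$ so that entire strata are sent into single strata of $N$. For every pair $(\alpha,\beta)$ with $T_\beta \subset \pi(S^{(1)}_\alpha)$, the set $S^{(1)}_\alpha\cap \pi^{-1}(T_\beta)$ is sub-analytic by (F3); applying (F1) to the locally finite family of all these preimage pieces refines $\{S^{(1)}_\alpha\}$ to a sub-analytic stratification $\{S_\alpha\}_{\alpha\in A}$ of $M$ in which every stratum maps by $\pi$ into a single stratum $T_{\beta(\alpha)}$. On any $S_\alpha$, the restriction $\pi|_{S_\alpha}$ is real analytic of constant rank $r_\alpha$. By connectedness of $S_\alpha$ and the constant rank theorem, $\pi(S_\alpha)$ is locally an immersed sub-manifold of $T_{\beta(\alpha)}$ of dimension $r_\alpha$. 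If the containment $\pi(S_\alpha)\subsetneq T_{\beta(\alpha)}$ is strict, one replaces $T_{\beta(\alpha)}$ in the stratification of $N$ by $\pi(S_\alpha)$ and its sub-analytic complement inside $T_{\beta(\alpha)}$, pulls this new partition back through $\pi$ to refine $\{S_\alpha\}$, and iterates. Since $\dim T_\beta$ is bounded and each iteration strictly lowers some dimension among the finitely many strata meeting a given compact set, the process terminates locally, producing $\pi(S_\alpha)=T_{\beta(\alpha)}$ and forcing $\mathrm{rank}(\mathrm{d}\pi|_{S_\alpha})=\dim T_{\beta(\alpha)}$ by matching dimensions.

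The main obstacle is ensuring that each of the above refinement steps preserves local finiteness and the frontier condition for stratifications. These properties are precisely what is built into Whitney-regular sub-analytic stratifications, whose existence compatible with an arbitrary locally finite sub-analytic family is guaranteed by the theorems of Hironaka and Hardt. Once (F1) is used in this Whitney-stratified form, intersections and preimages produced along the way remain sub-analytic by (F3), the images stay sub-analytic by (F2), and local finiteness and the frontier condition are inherited at each step; the procedure then yields the two stratifications with the stated properties.
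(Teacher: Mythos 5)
The paper offers no proof of Theorem~\ref{thm-strat}: it is quoted as a standard result of sub-analytic geometry (essentially Hironaka's and Hardt's stratification theorems for proper real analytic maps), with \cite{BM,Del} given as background. Your proposal tries to actually prove it, and while the ingredients (F1)--(F3) and the observation that properness yields local finiteness of the image family are correct, the argument has two genuine gaps. First, you stratify $M$ by the rank of the ambient differential $\mathrm{d}\pi_x$ and then assert that $\pi|_{S_\alpha}$ has constant rank $r_\alpha$ on each stratum. This conflates $\mathrm{rank}(\mathrm{d}\pi_x)$ with $\mathrm{rank}\bigl(\mathrm{d}\pi_x|_{T_xS_\alpha}\bigr)$, which is the quantity the theorem is about; the restricted rank can be strictly smaller and non-constant on a stratum on which the ambient rank is constant, since the tangent spaces of the stratum sit in no controlled position relative to $\ker \mathrm{d}\pi_x$. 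Because the relevant rank depends on the stratification being constructed, it cannot be fixed in advance by a partition of $M$; this is exactly why the standard proofs proceed by induction, restratifying the sub-analytic locus where the restricted rank drops. Your subsequent use of the constant rank theorem, and the final ``matching dimensions'' step, both rest on this unestablished constancy.

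Second, the closing refinement loop is not shown to terminate. Replacing $T_{\beta(\alpha)}$ by $\pi(S_\alpha)$ and its sub-analytic complement need not lower any dimension (both pieces can have full dimension in $T_{\beta(\alpha)}$), and pulling the new partition back through $\pi$ splits source strata whose images are again proper subsets of the new target strata, so the claim that ``each iteration strictly lowers some dimension'' is unsubstantiated. This alternating source/target refinement is precisely the delicate point that Hardt's theorem resolves, by induction on the dimension of the target together with the fact that the locus where the restricted rank is smaller than the local dimension of the image is sub-analytic of lower dimension. As written, your argument invokes Hironaka--Hardt only for the existence of compatible Whitney stratifications (F1), but silently assumes the genuinely hard part of the map-stratification theorem. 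Either cite the theorem as the paper does, or restructure the proof as a dimension induction in which the restricted rank, not the ambient rank, drives the refinement.
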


\bibliographystyle{amsplain}
\bibliography{library}

\listoffixmes

\end{document}